\definecolor{LightGreen1}{rgb}{0.2,0.8,0.2}
\definecolor{DarkGray}{rgb}{0.4,0.4,0.4}
\definecolor{GhostWhite}{rgb}{0.97,0.97,1.}
\theoremstyle{plain}
\newtheorem{theo}           {Theorem}
\newtheorem{pro}            {Proposition}
\newtheorem{coro}           {Corollary}
\newtheorem{lemm}           {Lemma}
\newtheorem{conj}           {Conjecture}
\theoremstyle{definition}
\newtheorem{pr}              {Problem}
\newtheorem*{ack}            {Acknowledgements}
\newtheorem{nota}            {Notation}
\theoremstyle{remark}
\newtheorem{rem}             {Remark}
\theoremstyle{definition}
\newenvironment{theorem}{\begin{theo}}{\end{theo}}
\newenvironment{corollary}{\begin{coro}}{\end{coro}}
\newenvironment{lemma}{\begin{lemm}}{\end{lemm}}
\newenvironment{remark}{\begin{rem}}{\end{rem}}
\numberwithin{equation}{section}
\newtheorem{df}{\bf Definition}
\newtheorem{prop}{\bf Proposition}[section]
\newcommand \C {\mathcal C}
\newcommand \De {\Delta}
\newcommand {\bCP} {\mathbb {CP}}
\newcommand \Ga {\Gamma}
\newcommand \ga {\gamma}
\newcommand \bR {\mathbb R}
\newcommand \bC {\mathbb C}
\newcommand\al {\alpha}
\newcommand \be {\beta}
\newcommand{\Rea}{\mathbb {R}}      % Real number field
\newcommand{\Cplx}{\mathbb {C}}     % Complex  number field
\newcommand{\halmos}{\rule{5pt}{5pt}}
\newcommand \M {\mathcal M_{\le 2}^-}
\newcommand {\CC}{\mathcal C}
\newcommand{\la}{\lambda}
\begin{document}

\title[On spectral polynomials of the Heun equation. II.]
{On spectral polynomials of the Heun equation. II.}

\author[B.~Shapiro]{Boris Shapiro}
\address{Department of Mathematics, Stockholm University, SE-106 91
Stockholm,
      Sweden}
\email{shapiro@math.su.se}

\author[K.~Takemura]{Kouichi Takemura}
\address{Department of Mathematical Sciences, Yokohama City University, 22-2 Seto, Kanazawa-ku, Yokohama 236-0027, Japan.}
\email{takemura@yokohama-cu.ac.jp}

\author[M.~Tater]{Milo\v{s} Tater}
\address{Department of Theoretical Physics, Nuclear Physics Institute, 
Academy of Sciences, 250\,68 \v{R}e\v{z} near Prague, Czech
Republic}
\email{tater@ujf.cas.cz}

\date{\today}
\keywords{Heun equation,  spectral polynomials,
asymptotic root distribution}
\subjclass[2000]{34L20 (Primary); 30C15, 33E05 (Secondary)}

\begin{abstract}
The  well-known {\em Heun equation}  has the form
$$
         \left\{Q(z)\frac
         {d^2}{dz^2}+P(z)\frac{d}{dz}+V(z)\right\}S(z)=0,
   $$
where $Q(z)$ is a cubic complex polynomial,  $P(z)$ and $V(z)$ are polynomials of degree at most $2$ and $1$ respectively.  One of the classical problems about the Heun equation suggested by E.~Heine and T.~Stieltjes in the late 19-th century is  for a given positive integer $n$ to find all possible polynomials $V(z)$ such that the above equation has a polynomial solution $S(z)$ of degree $n$. Below we prove a conjecture of the second author, see \cite {ShT} claiming that the union of the roots of such $V(z)$'s for a given $n$ tends when $n\to\infty$ to a certain compact connecting the three roots of $Q(z)$ which is given by a condition that a certain natural abelian integral is real-valued, see Theorem~\ref{takemura}. 

\end{abstract}

\maketitle

\section{Introduction and Main Results}\label{s:intro}

The classical Heun equation 
\begin{equation}\label{eq:Heun}
 \left\{Q(z)\frac
         {d^2}{dz^2}+P(z)\frac{d}{dz}+V(z)\right\}S(z)=0,
\end{equation}
where $Q(z)$ is a cubic polynomial, $P(z)$ is at most quadratic, and $V(z)$ is at most linear polynomials  was and still is an object of active study, see \cite{Heun}. Throughout this paper we always assume that $Q(z)$ is monic. The special case of \eqref{eq:Heun} when $P(z)=Q'(z)/2$ is widely known as the {\em classical Lam\'e} equation. Below we study one  aspect  of the Heun equation  suggested by E.~Heine and T.~Stieltjes, see \cite{He}, \cite {St},  and \cite{WW}, ch.~23. 

\begin{pr}[Heine-Stieltjes]\label{prob:He-St} 
For a given pair of polynomials $Q(z)$ and $P(z)$ as above and a positive integer $n$ find all polynomials $V(z)$ such that \eqref{eq:Heun} has a polynomial solution $S(z)$ of degree $n$.
\end{pr}  

Polynomials $V(z)\;$ (resp. $S(z)$)  are usually referred to as {\em Van Vleck\phantom{x}}  (resp.  {\em Stietljes}, or sometimes, {\em Heine-Stieltjes}) Ê polynomials. Already Heine and Stieltjes knew that for a generic pair $(Q(z),P(z))$ and any positive integer  $n$ there exist exactly $n+1$ such distinct Van Vleck polynomials $V(z)$.  Moreover in the case of the Lam\'e equation when one additionally assumes that the polynomial $Q(z)$ has three distinct real roots $\al<\be<\ga$ resp. Stieltjes  was  able to prove that the roots of any $V(z)$ and $S(z)$ belong to the interval $(\al,\ga)$ and that for a given $n$ the $(n+1)$ existing Stieltjes polynomials are distinguished by how many of their roots lie in the interval $(\al,\be)$ (the remaining roots lie in  
the interval $(\be,\ga)$, see \cite {WW}, ch. 23, section 46.)   Some further information of asymptotic character can be found in \cite{BSh} and \cite{MFS}. 

For a general Heun equation no essential results about the location of the roots of Van Vleck and Stieltjes polynomials seems to be previously known. One of the few exceptions is a classical 
proposition of  P\'olya, \cite{Po} claiming that if the rational function $\frac{P(z)}{Q(z)}$ has all positive residues then any root of any $V(z)$ as above and of any $S(z)$ as above lies within   $\Delta_Q$ where $\Delta_Q$ is the convex hull of the set of all three roots of $Q(z)$. 

The next  statement is a specialization of the main  result of \cite{Sh} in the case of the Heun equation.

 \begin{theorem}\label{th:my} For any cubic polynomial $Q(z)$  and any polynomial $P(z)$ of degree at most $2$ one has that 
 \begin{enumerate}
 \item there exists $N$ such that for any $n\ge N$ there exist exactly $n+1$ linear polynomials $V(z)$ counted with appropriate multiplicity such that \eqref{eq:Heun} has a polynomial solution $S(z)$ of degree exactly $n$;
 \item for any $\epsilon >0$ there exists $N_\epsilon$ such that for  any $n\ge N_\epsilon$ any root of any $V(z)$ having  $S(z)$ of degree $n$ as well as any root of this $S(z)$ lie in the $\epsilon$-neighborhood of $\Delta_Q$.
 \end{enumerate}
  \end{theorem}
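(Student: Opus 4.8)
The plan is to handle the two assertions by different tools: (1) reduces to a finite-dimensional eigenvalue count, while (2) follows from the Stieltjes electrostatic model together with an extremal-point estimate. Throughout write $Q(z)=z^3+\dots$, $P(z)=p_2z^2+\dots$, $V(z)=v_1z+v_0$, and normalize $S$ monic of degree $n$.

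First I would prove (1). Comparing the coefficients of $z^{n+1}$ in \eqref{eq:Heun} forces
$$ v_1=-n(n-1)-p_2n=-n(n-1+p_2), $$
so the leading coefficient of any admissible $V$ is determined by $n$ alone. This is exactly the condition guaranteeing that the linear operator
$$ \mathcal L_n:=Q\frac{d^2}{dz^2}+P\frac{d}{dz}+v_1z $$
preserves the $(n+1)$-dimensional space $\mathrm{Pol}_{\le n}$ of polynomials of degree at most $n$ (the $z^{n+1}$-coefficient of $\mathcal L_nS$ vanishes identically). Solving the Heine--Stieltjes problem is then equivalent to finding an eigenpair $\mathcal L_nS=-v_0S$ with $\deg S=n$; the characteristic polynomial of $\mathcal L_n$ has degree $n+1$, giving $n+1$ eigenvalues $-v_0$ counted with algebraic multiplicity, and distinct eigenvalues correspond bijectively to distinct Van Vleck polynomials $V=v_1z+v_0$.

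The remaining point in (1) is to exclude eigenvectors of degree $<n$. If $\mathcal L_nS=-v_0S$ with $\deg S=m<n$, then comparing $z^{m+1}$-coefficients forces $m(m-1)+p_2m+v_1=0$, i.e. $g(m)=g(n)$ for $g(t)=t^2+(p_2-1)t$; the only solutions are $m=n$ and $m=1-p_2-n$. Choosing $N$ so that $1-p_2-n\notin\{0,1,\dots,n-1\}$ for all $n\ge N$ (automatic for non-real $p_2$, and valid for all large $n$ otherwise) removes this resonance, so every eigenvector has degree exactly $n$ and we obtain exactly $n+1$ Van Vleck polynomials counted with multiplicity, proving (1).

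For (2) I would pass to the electrostatic picture. Evaluating \eqref{eq:Heun} at a simple root $z_k$ of $S=\prod_k(z-z_k)$ gives the Stieltjes equilibrium $2\sum_{j\ne k}(z_k-z_j)^{-1}+\sum_{i=1}^3\rho_i(z_k-\theta_i)^{-1}=0$, where $\theta_i$ are the roots of $Q$ and $\rho_i=P(\theta_i)/Q'(\theta_i)$. Fix $\epsilon>0$, suppose some root lies at distance $>\epsilon$ from $\Delta_Q$, let $z_*$ be the root of maximal distance $\delta:=\mathrm{dist}(z_*,\Delta_Q)>\epsilon$, and let $u$ be the outward unit normal to $\Delta_Q$ at the nearest point. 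A short convexity computation gives $\mathrm{Re}\big(\bar u(z_*-\theta_i)\big)\ge\delta$ and $\mathrm{Re}\big(\bar u(z_*-z_j)\big)\ge0$ for all $i$ and all roots $z_j$. Using $\mathrm{Re}(u/w)=\mathrm{Re}(\bar uw)/|w|^2$ and pairing the equilibrium at $z_*$ against $u$ yields, with $D=\mathrm{diam}(\Delta_Q)$ and $\Sigma:=\sum_{j\ne *}\big(\mathrm{Re}(\bar uz_*)-\mathrm{Re}(\bar uz_j)\big)\ge0$,
$$ \frac{2\,\Sigma}{(D+2\delta)^2}\;\le\;\mathrm{Re}\Big(u\,2\!\sum_{j\ne *}\tfrac1{z_*-z_j}\Big)\;=\;-\,\mathrm{Re}\Big(u\sum_i\tfrac{\rho_i}{z_*-\theta_i}\Big)\;\le\;\frac1\delta\sum_i|\rho_i|. $$

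The crux — and the step I expect to be the main obstacle — is the matching lower bound $\Sigma\ge c\,n\,\delta$ with $c>0$ independent of $n$; combined with the display it forces $n\le\big(\sum_i|\rho_i|\big)(D+2\delta)^2/(2c\,\delta^2)$, which is bounded once $\delta\ge\epsilon$, contradicting $n\ge N_\epsilon$. Such a bound asserts that a definite proportion of the $n$ roots sits a definite distance behind the extremal supporting line, i.e. the roots cannot concentrate near a single far point. This non-concentration is precisely the content of the asymptotic analysis of \cite{Sh}: the normalized root-counting measures of the Stieltjes (and Van Vleck) polynomials form a precompact family whose limits are supported on a fixed compact subset of $\Delta_Q$, which supplies the uniform lower bound on $\Sigma$ and simultaneously upgrades the conclusion to the $\epsilon$-neighborhood statement. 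The same supporting-line estimate applied to $V=-(QS''+PS')/S$ controls its single root, and the case of multiple roots of $S$ is handled either by a discriminant/perturbation argument or by weighting the self-interaction terms by multiplicities.
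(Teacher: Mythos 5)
The first thing to note is that the paper itself does not prove this theorem: it is presented as ``a specialization of the main result of \cite{Sh}'' and imported without argument. So the real comparison is between your attempt and that external reference. Your part (1) is correct and essentially self-contained: the forced value $v_1=-n(n-1+p_2)$, the fact that $\mathcal L_n=Q\frac{d^2}{dz^2}+P\frac{d}{dz}+v_1z$ then preserves the $(n+1)$-dimensional space of polynomials of degree at most $n$, the identification of Van Vleck polynomials with eigenvalues $-v_0$ of $\mathcal L_n$ on that space, and the exclusion of eigenvectors of degree $m<n$ via $g(m)=g(n)\Rightarrow m\in\{n,\;1-p_2-n\}$ are all sound, and for fixed $p_2$ the resonant value $1-p_2-n$ can lie in $\{0,\dots,n-1\}$ for at most finitely many $n$, so $N$ exists. (Whether algebraic multiplicity of the eigenvalue agrees with the multiplicity convention of \cite{Sh} cannot be checked blind, but it is the natural reading.)

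Part (2), however, has a genuine gap, and it sits exactly where you flag it. The supporting-line computation is fine as far as it goes and yields $\Sigma\le \frac{(D+2\delta)^2}{2\delta}\sum_i|\rho_i|$; but the contradiction requires the lower bound $\Sigma\ge c\,n\,\delta$, i.e.\ that a fixed positive proportion of the roots lies a distance comparable to $\delta$ behind the supporting line. Your justification for this is precompactness of the root-counting measures with all weak limits supported in (a compact subset of) $\Delta_Q$. That is circular on two counts. First, tightness of the family of root-counting measures --- needed even to extract convergent subsequences --- already presupposes an a priori confinement of the roots to a compact set, which is part of what is being proved; without Pólya's positivity hypothesis on the residues $\rho_i$ there is no elementary such confinement. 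Second, ``all weak limits are supported in $\Delta_Q$'' is, up to a vanishing proportion of stray roots, precisely assertion (2), so you are deriving the theorem from a statement of essentially the same strength, which you in turn attribute to \cite{Sh} --- the very source the paper cites for the entire theorem. In effect your part (2) reduces to the same citation the authors make, with the supporting-line estimate as an (attractive but unproven) reformulation. Smaller unaddressed points: the partial fractions $\rho_i=P(\theta_i)/Q'(\theta_i)$ require $Q$ to have simple roots, whereas the theorem allows any cubic $Q$; the electrostatic identity requires $Q(z_k)\ne0$ at each root $z_k$ of $S$; and the cases of multiple roots of $S$ and of the location of the root of $V$ are only gestured at.
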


   Thus we can introduce the set $\mathcal V_n$ consisting of polynomials $V(z)$ giving a polynomial solution $S(z)$ of \eqref{eq:Heun} of degree $n$; each such $V(z)$ appearing in $\mathcal V_n$ the number of times equal to its multiplicity. (The exact definition of multiplicity of $V(z)$ is rather lengthy and is omitted here.   An interested reader is recommended to consult \cite {Sh} for details.) Then by the above results  the set $\mathcal V_n$ will contain exactly $n+1$ linear polynomials for all sufficiently large $n$. It will be convenient to introduce a  sequence $\{Sp_n(\lambda)\}$ of {\em spectral polynomials} where the $n$-th spectral polynomial is defined by 
$$Sp_n(\lambda)=\prod_{j=1}^{n+1}(\lambda-t_{n,j}),$$ 
$t_{n,j}$ being the unique root of the $j$-th polynomial in $\mathcal V_n$  in any fixed ordering.  Notice that $Sp_n(\lambda)$  is well-defined for all sufficiently large  $n$.

Associate to $Sp_n(\lambda)$ the finite measure
$$\mu_n=\frac{1}{n+1}\sum_{j=1}^{n+1}{\delta(z-t_{n,j})},$$
where $\delta(z-a)$ is the Dirac measure supported at $a$.
The measure $\mu_n$  obtained in this way is clearly a real probability measure which 
one usually refers to as the {\em root-counting measure} of the polynomial $Sp_n(\lambda)$. 

Our main question is as follows.

\begin{pr}\label{prob:asymp}  Does the sequence $\{\mu_n\}$ converge (in the weak sense) to some limiting measure $\mu$?  If the convergence takes place describe the limiting measure $\mu$?
\end{pr}

Below we answer both parts of this question, see Theorem~\ref{takemura}. With an essential  contribution of the second author we were able to prove the existence of $\mu$ and to find the following elegant  description of its support. 

Denote the three roots of $Q(z)$ by $a_1,a_2,a_3$. For $i\in\{1,2,3\}$  consider the curve $\gamma_i$ given as the set of all $b\in \bC$ satisfying the relation:  
\begin{equation}\label{eq:ba123Rea}
\gamma_i:\quad\int_{a_j}^{a_k}\sqrt{\frac{b-t}{(t-a_1)(t-a_2)(t-a_3)}}dt\in \bR,
\end{equation}
here $j$ and $k$ are the remaining two indices in $\{1,2,3\}$ in any order and the integration is taken over the straight interval connecting $a_j$ and $a_k$. One can see that $a_i$ belongs to $\gamma_i$ and  show that these three curves connect all $a_i$'s with the unique common point $b_0$ lying within $\Delta_Q$, see Lemma~\ref{lm:inter}, \S 4. Take the segment of $\gamma_i$ connecting $a_i$ with the common intersection point $b_0$  and denote this segment by $\Gamma_i$, see Fig.~1 and Fig.~5.  Finally,  denote the union of these three segments by $\Gamma_Q$. 

Our first  result is as follows. 

\begin{theorem} \label{takemura} 
(i) For any equation \eqref{eq:Heun} the sequence $\{\mu_n\}$ of the root-counting measures of its spectral polynomials converges  to a probability measure $\mu$ depending only on the leading coefficient  $Q(z)$; 

\noindent 
(ii) The support of the limiting root-counting measure $\mu$ coincides with $\Gamma_Q$. 
\end{theorem}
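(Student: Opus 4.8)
The plan is to treat equation~\eqref{eq:Heun} as a semiclassical problem with small parameter $1/n$ and to extract both the limiting measure and its support from the quadratic differential attached to it. First I would normalize the Van Vleck polynomial by writing $V(z)=v_1(z-b)$; comparing the coefficients of $z^{n+1}$ on both sides of \eqref{eq:Heun} forces $v_1=-n(n-1)-p_2 n$, where $p_2$ is the leading coefficient of $P(z)$, so that $v_1/n^2\to-1$ while the root $b=t_{n,j}$ stays in a fixed neighbourhood of $\Delta_Q$ by Theorem~\ref{th:my}. Introducing the normalized logarithmic derivative $W_n(z)=\frac1n\,S'(z)/S(z)$ of a Stieltjes polynomial $S(z)$ and using $S''/S=(S'/S)'+(S'/S)^2$ in \eqref{eq:Heun}, division by $n^2$ and passage to the limit yield the algebraic relation
$$Q(z)\,W(z)^2=z-b,\qquad\text{i.e.}\qquad W(z)=\sqrt{\tfrac{z-b}{Q(z)}},$$
where $W=\lim_n W_n$ is the Cauchy transform of the limiting root-counting measure of the Stieltjes zeros. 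Hence for each admissible $b$ these zeros condense on the critical trajectories of the quadratic differential $\tfrac{z-b}{Q(z)}\,(dz)^2$, whose turning points are the simple zero $z=b$ and the three simple poles $a_1,a_2,a_3$.

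The core of the argument is the quantization condition selecting the admissible $b$. Each simple Stieltjes zero contributes $2\pi i/n$ to $\oint W_n\,dz$, so integrating $W\,dz=\sqrt{\tfrac{z-b}{Q(z)}}\,dz$ around a loop encircling the arc that joins $a_j$ and $a_k$ and carries $m$ of the zeros gives a Bohr--Sommerfeld condition which, in the normalization of \eqref{eq:ba123Rea}, reads
$$\int_{a_j}^{a_k}\sqrt{\frac{b-t}{(t-a_1)(t-a_2)(t-a_3)}}\,dt=\frac{\pi m}{n}+o(1/n).$$
Its left-hand side is therefore asymptotically real, which is exactly the defining relation of the curve $\gamma_i$; conversely, letting $m$ run over $0,1,2,\dots$ produces quantized values of $b$ that approximate, to order $o(1)$, every point of the arc $\Gamma_i$. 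Together with Lemma~\ref{lm:inter}, which guarantees that the three arcs meet only at the common point $b_0$, this shows at once that $\operatorname{supp}\mu\subseteq\Gamma_Q=\Gamma_1\cup\Gamma_2\cup\Gamma_3$ and that every point of $\Gamma_Q$ is attained, giving part (ii).

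For part (i) I would differentiate the quantization condition in $b$: consecutive integers $m$ and $m+1$ correspond to $b$-values whose spacing is controlled by $\frac{\partial}{\partial b}\int_{a_j}^{a_k}\sqrt{\tfrac{b-t}{Q(t)}}\,dt=\tfrac12\int_{a_j}^{a_k}\frac{dt}{\sqrt{(b-t)\,Q(t)}}$, an elliptic period depending only on $Q(z)$. Consequently the $n+1$ roots $t_{n,j}$ equidistribute along $\Gamma_Q$ with respect to an explicit density built from this period, and since $P(z)$ has already dropped out at order $1/n$, both $\mu$ and its support depend only on $Q(z)$. Weak convergence $\mu_n\to\mu$ then follows by checking that the Cauchy transforms $\frac{1}{n+1}Sp_n'(\lambda)/Sp_n(\lambda)$ converge, off $\Gamma_Q$, to the Cauchy transform of this limit measure, the uniform boundedness of the supports from Theorem~\ref{th:my} upgrading pointwise to weak-$*$ convergence.

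The main obstacle is to make the semiclassical step rigorous and global. One must justify the error term $o(1/n)$ in the quantization condition uniformly as $b$ ranges over a full neighbourhood of $\Gamma_Q$, handle the Stokes phenomenon across the three turning-point pairs simultaneously, and verify that the correct single branch of the critical graph is selected, so that precisely the three arcs $\Gamma_i$ --- and no spurious trajectories joining other pairs of roots of $Q(z)$ --- survive in the limit.
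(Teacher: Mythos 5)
Your route is genuinely different from the paper's, and in its present form it has a gap at its central step. The paper never invokes a Bohr--Sommerfeld condition: it proves uniqueness of the subsequential limits of $\{\mu_n\}$ by showing (Theorem~\ref{ShTa}, imported from \cite{ShT} via the technique of \cite{KvA}) that every such limit is the inverse balayage of the same explicitly constructed measures $M_i$, hence all limits share one logarithmic potential near infinity; it confines their supports to $\Gamma_Q$ by the quadratic-differential analysis of \S~4 (Theorem~\ref{th:struct} combined with Theorem~\ref{th:my}(2) and Theorem~\ref{th:higRull}); since $\bC\setminus\Gamma_Q$ is connected, the limits coincide, and a compactness argument upgrades this to convergence of the whole sequence. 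Finally it gets $\operatorname{supp}\mu=\Gamma_Q$ (not merely inclusion) from the nontrivial monodromy at $a_1,a_2,a_3$ of the solution of the ODE \eqref{eq:CT} satisfied by $\C_\mu(z)$. Your semiclassical scheme is instead essentially the approach of Mart\'inez-Finkelshtein and Rakhmanov \cite{MFR1}, which the authors acknowledge as an independent proof of the same theorem.

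The gap is that the quantization condition $\int_{a_j}^{a_k}\sqrt{(b-t)/Q(t)}\,dt=\pi m/n+o(1/n)$ is asserted rather than proved, and everything else in your argument --- localization of $b$ on $\gamma_i$, surjectivity onto $\Gamma_i$, and the spacing/equidistribution claim that delivers part (i) --- rests on it. The residue count gives $\oint S'(z)/(nS(z))\,dz=2\pi i m/n$ exactly, but converting this into a statement about the abelian integral requires replacing $S'/(nS)$ by $\sqrt{(z-b)/Q(z)}$ along the contour with an error that is still $o(1/n)$ after integration; this is an exponentially refined WKB estimate far beyond the $L^1_{loc}$ convergence supplied by the Riccati limit, and its usual derivations presuppose that the zeros already condense on the critical trajectories, which is part of what is to be proved. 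One must also decide which cycle carries the zeros and which branch of the square root is selected, i.e., why the admissible $b$ land on the sub-arc $\Gamma_i\subset\Delta_Q$ of $\gamma_i$ rather than elsewhere on $\gamma_i$; the paper devotes all of \S~4 to exactly this point. You correctly flag these issues as ``the main obstacle,'' but resolving them is the substance of the proof, so as written the argument is a well-motivated heuristic rather than a proof.
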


\begin{rem} Knowing the support of $\mu$ it is also possible to define its density along the support using  the linear differential equation satisfied by its Cauchy transform, see Theorem 4 of \cite {ShT}. In case
when $Q(z)$ has all real zeros the density is explicitly  given in \cite {BSh}. 
\end{rem}

\begin{figure}[!htb]
\centerline{\hbox{\epsfysize=4cm\epsfbox{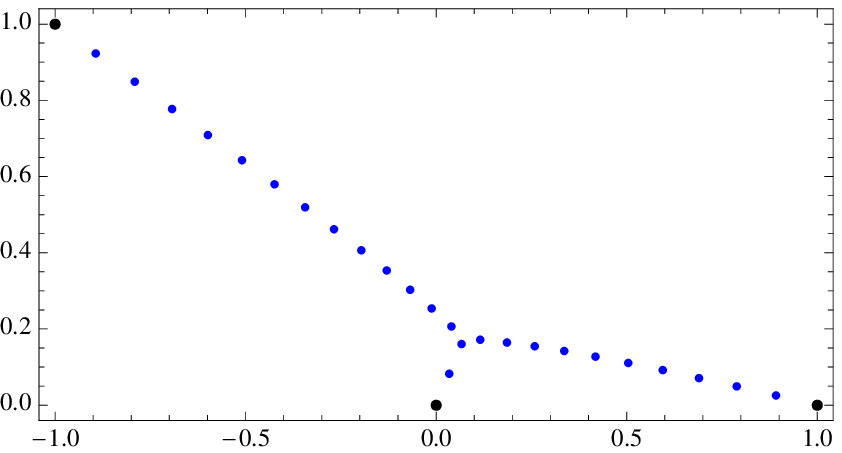}}}
{Figure 1. The roots of the spectral polynomial $Sp_{24}(\lambda)$ for the  equation 
$Q(z)S^{\prime\prime}(z)+V(z)S(z)=0$ with
$Q(z)=z(z-1)(z-1+I)$.}
\label{fig1}
\end{figure}

%$\left\{Q(z)\frac    {d^2}{dz^2}+\frac{1}{2}Q'(z)\frac{d}{dz}+V(z)\right\}S(z)=0$, with 
%$Q(z)=z(z-1)\left(z+\frac{1}{2}-i\right)$

\medskip
%After some  literature search we realized that the above description of $\gamma_Q$  implies another  statement. 

%\begin{proposition}%[Shapiro-Tater]
% \label{sh-ta2} The support of measure $\mu$ (i.e.  the above set  $\gamma_Q$) coincides also with the continuum of minimal logarithmic capacity connecting the three roots of $Q(z)$.
%\end{proposition}

%\begin{rem}
%Notice that  Goluzin's classical problem of finding  the continuum of minimal capacity connecting a given $n$-tuple of points in $\bC$ was completely solved for $n=3$ by G.~Kuzmina in \cite{Ku1}, see also \cite{Ku2}.  
%\end{rem}

An essential role in the proof of Theorem~\ref{takemura} is played by the description of the behavior of the Stokes lines of \eqref{eq:Heun}.  Important contribution also comes from a generalization of  the technique of \cite {KvA}. In particular,   in \cite{ShT} using the latter technique Êwe were able to find an additional probability measure which is easily described and from which the measure  $\mu$   is obtained by the inverse  balayage, i.e. the support of $\mu$ will be contained in the support of the measure which we construct and they have the same logarithmic potential outside the support of the latter  one. This measure is  uniquely determined by  the choice of a root of $Q(z)$ and thus we have in fact constructed three different measures having the same measure $\mu$ as their inverse balayage.    

Our second result describes the asymptotic behavior of Stieltjes polynomials of increasing degrees  when the 	sequence of  their (normalized) Van Vleck polynomials has a limit.  This result is a special case of a more general statement of \cite{HSh} but we explain below in much more details the interaction  of the limiting  measure  with the appropriate rational Strebel differential. 

Namely, for a  Heun equation (\ref{eq:Heun}) take  any sequence $\{S_{n,i_{n}}(z)\},\;\deg  S_{n,i_{n}}(z)=n$ of its  Stieltjes  polynomials  such that the sequence of normalized Van Vleck polynomials $\{\widetilde  V_{n,i_{n}}(z)\}$ converges   to some  monic linear polynomial $\widetilde V(z)$.  Here by normalization we mean the division by the leading coefficient, i.e each $\widetilde  V_{n,i_{n}}(z)$ is the monic polynomial proportional to $V_{n,i_{n}}(z)$. Notice that since each $\widetilde  V_{n,i_{n}}(z)$ is linear for all sufficiently large $n$ then  the existence of the limiting polynomial  $\widetilde V(z)$ is the same as the existence of the limit of the sequence of (unique) roots $\{b_{n,i_n}\}$ of $\{V_{n,i_{n}}(z)\}$. Part 2 of Theorem~\ref{th:my} guarantees the existence of plenty of such converging sequences and Theorem~\ref{takemura} Êclaims that the limit $\tilde b$ of these roots must necessarily belong to $\Ga_Q$. 

Finally, denote by  $\nu_{n,i_{n}}$ of the root-counting measure of the corresponding Stieltjes polynomial $S_{n,j_{n}}(z).$

 \medskip
     \begin{theorem}\label{th:higRull}  In the above notation the sequence $\{\nu_{n,i_{n}}\}$ of the root-counting measures of the corresponding Stieltjes polynomials $\{S_{n,j_{n}}(z)\}$ weakly converges to the unique probability measure $\nu_{\widetilde V}$ whose Cauchy  transform $\CC_{\widetilde V}(z)$ satisfies almost everywhere  in $\bC$ the equation  
$$\CC_{\widetilde V}^2(z)=\frac {\widetilde V(z)}{Q(z)}.$$
\end{theorem}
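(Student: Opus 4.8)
The plan is to translate the Heun equation into an asymptotic relation for the Cauchy transforms of the root-counting measures. Write $\CC_{n}(z):=\int_{\bC}\frac{d\nu_{n,i_{n}}(\zeta)}{z-\zeta}=\frac{1}{n}\frac{S_{n,i_{n}}'(z)}{S_{n,i_{n}}(z)}$ for the Cauchy transform of $\nu_{n,i_{n}}$. By part (2) of Theorem~\ref{th:my} the zeros of every $S_{n,i_{n}}$ lie in one fixed bounded neighbourhood of $\Delta_{Q}$, so the measures $\nu_{n,i_{n}}$ are all supported in a common compact set and the family is weak-$*$ precompact; it therefore suffices to prove that every weakly convergent subsequence has the same limit. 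Fix such a subsequence, let $\nu$ be its weak limit and $\CC(z)=\int d\nu(\zeta)/(z-\zeta)$; a standard potential-theoretic fact then gives $\CC_{n}\to\CC$ in $L^{1}_{\mathrm{loc}}(\bC)$.

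Next I would derive an exact Riccati-type identity. Dividing \eqref{eq:Heun} by $n^{2}S_{n,i_{n}}$ and using $S''/S=(S'/S)'+(S'/S)^{2}$ yields
$$Q\Big(\CC_{n}^{2}+\tfrac{1}{n}\CC_{n}'\Big)+\tfrac{1}{n}P\,\CC_{n}+\tfrac{1}{n^{2}}V_{n,i_{n}}=0.$$
Comparing the coefficients of $z^{n+1}$ in \eqref{eq:Heun} (recall $Q$ is monic cubic and $\deg S_{n,i_{n}}=n$) shows that the leading coefficient $v_{n}$ of $V_{n,i_{n}}$ equals $-n(n-1)-pn$, where $p$ is the leading coefficient of $P$; hence $v_{n}/n^{2}\to-1$. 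Combined with the hypothesis $\widetilde V_{n,i_{n}}\to\widetilde V$ this gives $n^{-2}V_{n,i_{n}}=(v_{n}/n^{2})\,\widetilde V_{n,i_{n}}\to-\widetilde V$ locally uniformly. The terms carrying a factor $1/n$ are uniformly $L^{1}_{\mathrm{loc}}$-bounded, so they vanish in the distributional limit, and rearranging the identity as $Q\CC_{n}^{2}=-\tfrac{1}{n}(Q\CC_{n}'+P\CC_{n})-n^{-2}V_{n,i_{n}}$ shows that $Q\CC_{n}^{2}\to\widetilde V$ in $\mathcal{D}'(\bC)$.

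The one genuinely delicate point --- and the main obstacle --- is that $L^{1}_{\mathrm{loc}}$-convergence $\CC_{n}\to\CC$ does not by itself imply $\CC_{n}^{2}\to\CC^{2}$, so the robust distributional limit above must still be identified with $\CC^{2}$. I would resolve this locally: on any open set free of accumulating zeros of the $S_{n,i_{n}}$ one has locally uniform convergence $\CC_{n}\to\CC$ (for $z$ there, $\CC_{n}(z)^{2}=\iint(z-s)^{-1}(z-t)^{-1}\,d\nu_{n,i_{n}}(s)\,d\nu_{n,i_{n}}(t)\to\CC(z)^{2}$ by weak convergence of $\nu_{n,i_{n}}\times\nu_{n,i_{n}}$), whence $Q\CC^{2}=\widetilde V$, i.e.\ $\CC^{2}=\widetilde V/Q$, holds there. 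The substantive task is thus to show that the zeros of the $S_{n,i_{n}}$ accumulate only on a set of zero planar measure, in fact on a finite union of analytic arcs --- the critical (\emph{short}) trajectories of the quadratic differential $-\widetilde V(z)Q(z)^{-1}\,dz^{2}$ joining its zero $\tilde b$ to the roots $a_{1},a_{2},a_{3}$ of $Q$. This confinement is exactly the output of the Stokes-line analysis of \eqref{eq:Heun} combined with the general machinery of \cite{HSh}; once it is in hand the equation $\CC^{2}=\widetilde V/Q$ holds off a null set, hence almost everywhere.

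It remains to see that this equation pins down $\nu$ uniquely, since $\CC^{2}=\widetilde V/Q$ determines $\CC$ only up to the branch of $\sqrt{\widetilde V/Q}$. Uniqueness comes from the side conditions forced on any subsequential limit: $\CC$ is analytic off the above null set, it is the Cauchy transform of a \emph{probability} measure so that $\CC(z)=z^{-1}+O(z^{-2})$ at infinity, and $\partial_{\bar z}\CC=\pi\nu\ge 0$. These requirements select a single admissible way of gluing the two branches of $\sqrt{\widetilde V/Q}$ across the arcs, and this gluing is consistent precisely because $\tilde b\in\Ga_{Q}$ makes the relevant periods of $\sqrt{(\tilde b-t)\,/\,Q(t)}\,dt$ real --- this is the content of \eqref{eq:ba123Rea} and Lemma~\ref{lm:inter}. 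Since every subsequential limit satisfies these constraints and they determine it uniquely, the whole sequence $\{\nu_{n,i_{n}}\}$ converges to $\nu_{\widetilde V}$, which completes the proof.
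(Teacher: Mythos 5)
You correctly set up the Riccati identity, the normalization of the leading coefficient of $V_{n,i_n}$, and the reduction to identifying subsequential limits, and you even isolate the genuinely delicate point: $L^1_{loc}$-convergence of $\CC_n$ does not give convergence of $\CC_n^2$. But your resolution of that point is where the proof has a real gap. You propose to show that the zeros of the $S_{n,i_n}$ accumulate only on a planar null set, namely on the short trajectories of $-\widetilde V(z)Q(z)^{-1}dz^2$ joining $\tilde b$ to $a_1,a_2,a_3$, and you attribute this to ``the Stokes-line analysis combined with the machinery of \cite{HSh}'' without supplying an argument. That confinement statement is essentially Proposition~\ref{pr:struc} together with Corollary~\ref{cor:main} (that the differential is Strebel), both of which are \emph{consequences} of Theorem~\ref{th:higRull} in the paper's logical order; assuming it here is close to circular, and nothing in your write-up rules out a priori that the zeros spread over a set of positive area, in which case your local identification $Q\CC^2=\widetilde V$ off the accumulation set would say nothing almost everywhere.

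The paper closes exactly this gap by a different device, borrowed from Bergkvist--Rullg\aa rd: it considers the root-counting measures $\bar\nu_n^{(j)}$ of $\bar S_n^{(j)}$ for $j=0,1,2$, extracts a subsequence along which all three converge, and shows via the potential inequality $u^{(0)}\ge u^{(1)}\ge u^{(2)}$ (Lemma~\ref{lm:logpot}) combined with $u^{(2)}-u^{(0)}=\lim \frac1n\log\bigl\vert \bar S_n''/(n(n-1)\bar S_n)\bigr\vert=0$ that the three limit measures coincide. Then, instead of squaring $\bar S_n'/(n\bar S_n)$, it writes
$$\CC^2(z)=\lim_n \frac{\bar S_n'(z)}{n\bar S_n(z)}\cdot\frac{\bar S_n''(z)}{(n-1)\bar S_n'(z)}=\lim_n\frac{\bar S_n''(z)}{n(n-1)\bar S_n(z)}=\frac{\widetilde V(z)}{Q(z)}$$
pointwise a.e.\ along a further subsequence: each factor converges a.e.\ to the Cauchy transform of the \emph{same} measure, so the product converges to $\CC^2$ with no need to control where the zeros accumulate. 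Your uniqueness discussion is also somewhat impressionistic (the reality of the periods for $\tilde b\in\Ga_Q$ is about existence of the Strebel structure, not about selecting the branch), but it is covered by Theorem~\ref{pr:positive}: a positive measure satisfying \eqref{eq:imp} a.e.\ is unique, which is all that is needed to conclude that all subsequential limits agree. If you replace your Stokes-line appeal by the derivative-measure argument (or supply an actual proof of the zero-confinement claim independent of the theorem), the rest of your outline goes through.
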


Typical behavior of the roots of $\{S_{n,j_{n}}(z)\}$ is illustrated on Fig.~3 below. 
Recall that   the Cauchy transform $\C_\nu(z)$ and the logarithmic potential $u_\nu(z)$ of a (complex-valued) measure $\nu$ supported in $\bC$ are by definition given by: 
$$\C_\nu(z)=\int_{\bC}\frac{d\nu(\xi)}{z-\xi}\quad\text{    and    }\quad u_\nu(z)=\int_{\bC}\log|z-\xi|{d\nu(\xi)}.$$
Obviously, $\C_\mu(z)$ is analytic outside the support of $\mu$ and
has a number of important properties, e.g. that
$\mu=\frac{1}{\pi}\frac{\C_\mu(z)}{\partial \bar z}$ where the derivative is understood in
the distributional sense.  Detailed information about Cauchy transforms
can be found in e.g. \cite{Ga}.

%\begin{df}
%Given a connected compact set $K\subset \bC$ and a measure $\mu$
%supported by $K$ we say that $\mu$ is an \emph{equilibrium measure
%for K} if the functional
%$I(\nu)=-\int_{K}\int_{K}\log|z-\xi|d\nu(z)d\nu(\xi)$ is maximized by
%$\mu$, see e.g. \cite {SaTo}. If $K$ has several connected
%components then ....????
%\end{df}Ê

To formulate our further results we need to recall some information about quadratic differentials.

\begin{df}
A (meromorphic) quadratic differential $\Psi$ on a compact  orientable Riemann surface $Y$ is
a (meromorphic) section of the tensor square
$(T^*_{\mathbb{C}}Y)^{\otimes2}$ of the holomorphic cotangent
bundle $(T^*_{\mathbb{C}}Y)$. The zeros and the poles of $\Psi$
constitute the set of \emph{singular points} of $\Psi$ denoted by
$Sing_{\Psi}$. (Nonsingular points of $\Psi$ are usually called {\em
regular}.)
\end{df}

Obviously, if $\Psi$ is locally represented in two intersecting charts by $h(z)dz^2$ and by
$\tilde{h}(\tilde{z})d\tilde{z}^2$  resp. with a
transition function $\tilde{z}(z)$, then 
$h(z)=\tilde{h}(\tilde{z})\left(\frac{d\tilde{z}}{dz}\right)^2.$
Any quadratic differential induces a canonical metric on its Riemann
surface, whose length element  in local coordinates is given by
$$|dw|=|h(z)|^{\frac{1}{2}}|dz|.$$
%The associated area element on $Y$ equals $ |h(z)|dxdy.$ One can
%define a norm of a quadratic differential as the total area
%$$|\Psi|=\iint_Y |h(z)|dxdy.$$

The above canonical  metric $|dw|=|h(z)|^{\frac{1}{2}}|dz|$ on $Y$ is closely related to  two distinguished line fields  given by the condition that
$h(z)dz^2$ is either positive or negative.  The first field is given by 
$h(z)dz^2>0$ and its integral curves are called \emph{horizontal
trajectories} of $\Psi$, while the second field is given by $h(z)dz^2<0$ and
its integral curves  are called \emph{vertical trajectories} of $\Psi$. In what follows we
will mostly use horizontal trajectories of rational
quadratic differentials  and reserve the term 
\emph{trajectories} for the horizontal ones. In case we need vertical trajectories as in the \S~4 we will mention this explicitly. 

Since we only consider rational quadratic differentials  here then any such  quadratic differential $\Psi$ will be
given  in $\bC$ by $R(z)dz^2$, where $R(z)$ is a complex-valued rational function. (To study the 
behavior of $\Psi$ at infinity one makes the variable  change 
$\tilde{z}=\frac{1}{z}$.)\\

Trajectories of $\Psi$ can be naturally  parameterized by their 
arclength. In fact, in a neighborhood of a regular point $z_0$ on
$\bC$
 one can introduce a local coordinate called {\em canonical}Ê or  {\em distinguished parameter}
and  given by
\[w(z)%=\Phi_{z_0}(z)
:=\int_{z_0}^{z}\sqrt{R(\xi)}d\xi.\]
One can easily check that 
$dw^2={R(z)}dz^2$ 
implying that  horizontal trajectories in the $z$-plane correspond to 
horizontal straight lines in the $w$-plane, i.e they are defined by the condition $\Im\, w=const$.
%The trajectory passing through
%the point $p\in{}R$ with local coordinate $z_0\leftrightarrow{}p$ is
%therefore parameterized in the $w$-plane by a real parameter $t$
%with $t=0$ corresponding to the origin. This induces the
%parametrization $\Phi^{-1}(t)$ in the $z$-plane with $t=0$
%corresponding to $z_0$. Henceforth we think of any trajectory
%$\gamma$ as parameterized by $t\in\mathbb{R}$ as mentioned above and
%we denote by $\gamma_{z_0}(t)$ the trajectory which in local
%coordinates passes through the point $z_0$ at $t=0$.

In what follows quadratic differentials which we encounter will be mostly Strebel, i.e. with almost all closed trajectories,  see the exact definition below. They are known to have the property that their poles are at most quadratic and, additionally,   the coefficient at the leading term of any such quadratic pole is negative. Introduce the class $\M$ of meromorphic  quadratic   on a Riemann surface $Y$ satisfying  the above restrictions, i.e. their poles are at most of order $2$ and at each such pole the leading coefficient is negative. 

\begin{df}
A trajectory of $\Psi\in \M$  is called  \emph{singular} if there exists a singular point of $\Psi$ belonging to its closure. 
\end{df}

\begin{df}
A non-singular trajectory $\gamma_{z_0}(t)$ of $\Psi\in \M$  is called \emph{closed} if $\exists \ T>0$ such that
$\gamma_{z_0}(t+T)=\gamma_{z_0}(t)$ for all $t\in\mathbb{R}$. The
least such $T$ is called the \emph{period} of $\gamma_{z_0}$. 
\end{df}

\begin{df}
A quadratic differential $\Psi$ on a Riemann surface $Y$ is called
\emph{Strebel} (or  {\em Jenkins-Strebel}) if the set of its 
closed trajectories covers the surface up to a set of Lebesgue measure 
 zero.
\end{df}

The following statement claiming that each Strebel differential of a compact Riemann surface automatically belongs to $\M$ can be derived from results of Ch. 3, \cite{Str}.  

\begin{lemma}
\label{lemma1} If a quadratic   differential $\Psi$
 is Strebel, then it has no poles of order
greater than 2. If it has a pole of order 2, then the coefficient at
leading term  of $\Psi$ at this pole is negative.
\end{lemma}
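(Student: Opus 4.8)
The plan is to reduce the global Strebel hypothesis to a purely local analysis near each pole, and to read off the admissible local trajectory structure from the distinguished parameter. Fix a pole $p$ of $\Psi = R(z)\,dz^2$ and choose a local coordinate centered at $p$, so that near the origin
$$R(z) = c\,z^{-m}\bigl(1 + O(z)\bigr),$$
where $m\geq 1$ is the order of the pole and $c\neq 0$. In a punctured neighborhood of $p$ I would introduce the distinguished parameter $w(z)=\int\sqrt{R(\xi)}\,d\xi$, which, by the computation following its definition in the excerpt, straightens the horizontal trajectories into the horizontal lines $\Im\,w=\mathrm{const}$. Integrating the leading term produces three qualitatively distinct regimes:
$$w(z)\sim 2\sqrt{c}\,\sqrt{z}\ (m=1),\qquad w(z)\sim \sqrt{c}\,\log z\ (m=2),\qquad w(z)\sim \frac{2\sqrt{c}}{\,2-m\,}\,z^{(2-m)/2}\ (m\geq 3).$$
The whole argument is then the remark that only for certain orders and certain values of $c$ can the trajectories close up.

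First I would rule out $m\geq 3$. Here $(2-m)/2\leq -1/2<0$, so $w\to\infty$ as $z\to p$; inverting the leading relation one finds $z\to p$ along the horizontal ray $\Im\,w=\mathrm{const}$, $\Re\,w\to\pm\infty$ in a suitable half-line. Thus a whole open family of horizontal trajectories is attracted to $p$ and none of them closes up. Such a family has positive Lebesgue measure, contradicting the defining property of a Strebel differential that its closed trajectories cover $Y$ up to a set of measure zero. Hence $m\leq 2$, which is the first assertion of Lemma~\ref{lemma1}.

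Next I would treat the double pole $m=2$, where $z\sim\exp(w/\sqrt{c})$. Writing $\sqrt{c}=\rho\,e^{\iu\theta}$ and restricting to a horizontal line $w=u+\iu v_0$ with $u\in\bR$ gives $\log|z|=\rho^{-1}(u\cos\theta+v_0\sin\theta)$. If $\sqrt{c}$ is purely imaginary, equivalently $c$ is negative real, then $\cos\theta=0$, so $\log|z|$ is independent of $u$: the trajectory stays on a circle $|z|=\mathrm{const}$ and closes, which is the Strebel-compatible picture. If $\cos\theta\neq 0$, i.e. $c$ is not a negative real number, then $\log|z|\to\pm\infty$ as $u\to\pm\infty$, so the trajectory is either a radial ray ($c>0$) or a logarithmic spiral ($c\notin\bR$) tending to $p$; again an open family of such trajectories fails to close, contradicting the Strebel property. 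Therefore the leading coefficient at a double pole must be negative real, which is the second assertion.

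The one genuinely delicate point — and the step I expect to be the main obstacle — is to guarantee that the higher-order correction $O(z)$ in the expansion of $R$ does not alter the qualitative trajectory structure dictated by the leading term: that the attraction to $p$ persists for $m\geq 3$ and for the inadmissible values of $c$ when $m=2$, and that in the admissible double-pole case the trajectories genuinely close rather than merely spiralling slowly. Rigorously this perturbative stability of the local picture is precisely the content of the local normal-form analysis near singular points of quadratic differentials in Chapter~3 of \cite{Str}, to which the statement is attributed; the asymptotics above serve to identify which normal form occurs, and the measure-theoretic contradiction then closes the argument.
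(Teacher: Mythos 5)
Your argument is correct and is exactly the standard derivation that the paper itself does not spell out: the authors state Lemma~\ref{lemma1} with only the remark that it ``can be derived from results of Ch.~3, \cite{Str}'', so your local computation of the distinguished parameter near a pole of order $m$ (the three regimes $m=1$, $m=2$, $m\geq 3$, the classification of double-pole trajectories into circles, rays, and logarithmic spirals according to the sign and argument of $c$, and the positive-measure contradiction with the Strebel property) is precisely the content of that citation. You also correctly isolate the one nontrivial point --- stability of the local trajectory picture under the $O(z)$ correction --- and defer it to the same local normal-form results in \cite{Str} that the paper invokes, so nothing essential is missing relative to the paper's treatment.
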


\begin{df}ÊFor a given quadratic differential $\Psi\in \M$ on  a compact surface  $Y$  denote by 
$K_\Psi\subset Y$ the union of all its singular trajectories and singular points. 
\end{df}

In many situations  a quadratic differential $\Psi$ on  a compact surface  $Y$ is Strebel if and only if the set  $K_\Psi$ is compact, see e.g. Theorem 20.1 of  \cite{Str}. Unfortunately we were unable to find an appropriate result for rational quadratic differentials in the literature (although it is apparently known) and include a sketch of its proof  as Lemma~\ref{lemma2} below.  Our next result relates Strebel differentials to real-valued measures in the considered  situation. 

\medskip
\begin{theorem}\label{th:charac} Let $U_1(z)$ and $U_2(z)$ be arbitrary  monic complex polynomials
with $\deg U_2 -\deg U_1=2$. Then 
\begin{enumerate}
\item
the  rational quadratic differential
$\Psi=-\frac{U_1(z)}{U_2(z)}dz^2$ on $\bCP^1$ is Strebel if and only if   
there exists a real and compactly supported in $\mathbb{C}$ measure
$\mu$ of total mass $1$ (i.e. $\int_\mathbb{C}d\mu=1$)
 whose Cauchy transform $C_{\mu}$ satisfies a.e. in $\bC$ the equation: 
\begin{equation}\label{eq:imp}
C_{\mu}^2(z)=\frac{U_1(z)}{U_2(z)}.
\end{equation}

\item  for any $\Psi$ as in (1) there exists exactly $2^{d-1}$  real measures whose Cauchy transforms  satisfy \eqref{eq:imp} a.e. and whose support is contained in $K_\psi$.    Here $d$ is the total number of connected components in $\bCP^1\setminus K_\Psi$ (including the infinite component, i.e. the one containing $\infty$). 

\end{enumerate} 
\end{theorem}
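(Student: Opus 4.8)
The plan is to establish a dictionary between the analytic object (a real measure $\mu$ with $\C_\mu^2 = U_1/U_2$) and the geometric object (the Strebel differential $\Psi = -\frac{U_1}{U_2}dz^2$), and to exploit the fact that $\C_\mu = \sqrt{U_1/U_2}$ is precisely the distinguished-parameter $1$-form $dw = \sqrt{R(z)}\,dz$ of $\Psi$. The key observation is that if $\mu$ is a real, compactly supported probability measure, then its Cauchy transform $\C_\mu$ is holomorphic off $\mathrm{supp}\,\mu$, tends to $0$ at $\infty$, and satisfies $\mu = \frac{1}{\pi}\partial_{\bar z}\C_\mu$ distributionally; conversely the branch of $\sqrt{U_1/U_2}$ that decays like $1/z$ at $\infty$ (which exists because $\deg U_2 - \deg U_1 = 2$) is an admissible candidate for a Cauchy transform, and the obstruction to it \emph{being} one is exactly whether its $\bar\partial$-derivative is a positive-mass real measure supported on a compact curve. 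That compact curve will be $K_\Psi$.

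\textbf{Part (1).} For the forward direction, assume $\Psi$ is Strebel. By Lemma~\ref{lemma2} (the rational analogue asserted in the excerpt) this is equivalent to $K_\Psi$ being compact. On the complement $\bCP^1\setminus K_\Psi$ the horizontal trajectories are closed, so the distinguished parameter $w(z)=\int^z\sqrt{R(\xi)}\,d\xi$ with $R=-U_1/U_2$ has well-defined imaginary part and the $1$-form $\sqrt{U_1/U_2}\,dz$ is single-valued there. First I would fix the branch of $g(z):=\sqrt{U_1(z)/U_2(z)}$ that behaves as $\frac{1}{z}+O(z^{-2})$ near $\infty$; this normalization uses $\deg U_2-\deg U_1=2$ and forces the total mass to be $1$. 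Then I would \emph{define} $\mu:=\frac{1}{\pi}\partial_{\bar z} g$ in the distributional sense and check three things: (a) $\mu$ is supported on $K_\Psi$, since $g$ is holomorphic off $K_\Psi$; (b) $\mu$ is real, which follows because along each arc of $K_\Psi$ the two boundary values of $g$ are complex conjugates up to sign — this is where the Strebel condition that $\Im\,w$ is constant along trajectories, equivalently that $K_\Psi$ consists of vertical trajectories of $\Psi$ on which $\sqrt{R}\,dz$ is purely imaginary, gives reality of the jump; (c) $\int d\mu = 1$ from the residue at $\infty$. Thus $\C_\mu=g$ and $\C_\mu^2=U_1/U_2$ a.e. For the converse, given such a $\mu$, its Cauchy transform $\C_\mu$ is holomorphic off the compact set $\mathrm{supp}\,\mu$ and satisfies $\C_\mu^2=U_1/U_2$; hence $\C_\mu$ is a single-valued branch of $\sqrt{R}$ off a compact set, which forces the singular trajectory set $K_\Psi$ to be compact (the periods controlling non-compactness vanish), and by Lemma~\ref{lemma2} again $\Psi$ is Strebel.

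\textbf{Part (2).} The counting statement is a monodromy/period computation. The multivaluedness of $g=\sqrt{U_1/U_2}$ comes from sign choices of the square root; on $\bCP^1\setminus K_\Psi$, which has $d$ connected components, one may independently multiply $g$ by $\pm1$ on each bounded component, giving $2^{d}$ sign assignments, but the branch in the unbounded component containing $\infty$ is pinned by the normalization $g\sim 1/z$, removing one degree of freedom and leaving $2^{d-1}$. For each admissible global sign choice, $\frac{1}{\pi}\partial_{\bar z}g$ is a real measure supported in $K_\Psi$ satisfying \eqref{eq:imp}; conversely every real measure supported in $K_\Psi$ with $\C_\mu^2=U_1/U_2$ arises this way, since its Cauchy transform must coincide with one of these branches. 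I would verify that distinct sign patterns give distinct measures by comparing the jump of $g$ across a separating arc.

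\textbf{Main obstacle.} The hardest step is the reality and support analysis in (1)(b): proving that the distributional $\bar\partial$-derivative of the chosen branch of $\sqrt{U_1/U_2}$ is an honest \emph{real} measure (no singular higher-order distributional parts, no complex phase) supported exactly on $K_\Psi$. This requires a careful local analysis at the zeros of $U_1$ (simple turning points, where three trajectories meet) and at the poles coming from $U_2$, together with the global statement that the jump of $g$ across each singular horizontal trajectory is purely imaginary — equivalently that $K_\Psi$ is a union of vertical trajectories along which $\sqrt{R}\,dz\in i\bR$. Tying the Strebel/trajectory geometry to the sign of the jump, and confirming the resulting measure is positive of the correct total mass rather than merely real, is the technical heart of the argument.
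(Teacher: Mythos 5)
Your forward implication in (1) and your counting argument in (2) follow essentially the paper's route: define $\mu=\frac{1}{\pi}\partial_{\bar z}g$ for the branch of $g=\sqrt{U_1/U_2}$ normalized by $g\sim 1/z$ at infinity, verify reality by the jump formula along singular trajectories (where $g\,dz$ is purely imaginary because these are \emph{horizontal} trajectories of $-g^2dz^2$; your assertion that $K_\Psi$ consists of vertical trajectories is a slip, though the computation you intend is the right one), and obtain the $2^{d-1}$ measures from independent sign choices in the $d-1$ bounded components of $\bC\setminus K_\Psi$. One small omission: when $\bC\setminus K_\Psi$ is disconnected, the normalization at infinity pins the branch only in the unbounded component; the paper handles this by first pruning $K_\Psi$ to a subset $\frak U$ with connected complement (Lemma~\ref{lemma3}).

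The genuine gap is in the converse of (1). You argue that since $\C_\mu$ is a single-valued branch of $\sqrt{U_1/U_2}$ off a compact set, ``the periods controlling non-compactness vanish'' and hence $K_\Psi$ is compact. But a single-valued branch of $\sqrt{U_1/U_2}$, holomorphic off a compact set and asymptotic to $1/z$, exists for \emph{every} pair $(U_1,U_2)$ with $\deg U_2-\deg U_1=2$, whether or not $\Psi$ is Strebel; so as written the argument proves nothing, and it never uses the hypothesis that $\mu$ is \emph{real}, which is exactly the hypothesis that does the work. Reality makes the logarithmic potential $u_\mu(z)=\int\log|z-\xi|\,d\mu(\xi)$ a globally defined real function with $\partial_z u_\mu=\tfrac12\C_\mu$, whose level curves off $\operatorname{supp}\mu$ are precisely the horizontal trajectories of $-\C_\mu^2\,dz^2=-\frac{U_1(z)}{U_2(z)}dz^2$ (the gradient of $u_\mu$ is $\bar\C_\mu$, and $-\C_\mu^2\,(i\bar\C_\mu)^2=|\C_\mu|^4>0$ along its level sets). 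Since $u_\mu\sim\log|z|$ at infinity and has only logarithmic singularities with negative coefficient at double poles, almost all of its level curves are closed, which is what yields the Strebel property. Equivalently, the vanishing of the periods you allude to is the single-valuedness of $\Re\int\C_\mu\,dz=2u_\mu$, a consequence of $\mu$ being real; you must make this link explicit for the converse to hold.
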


\begin{rem}
The above theorem is illustrated on Fig.~2. Notice that we are {\bf not}  assuming here that $\mu$ as above is a positive measure {\bf but} only real. (Such measures are sometimes called {\it signed}.) 
\end{rem}

\begin{rem} Notice that if we do not require  the support of a real measure whose Cauchy transform satisfy \eqref{eq:imp} a.e.  to be contained in $K_\psi$ then there exists plenty of such  measures. In particular, their support can contain an arbitrary finite number of distinct closed trajectories of the quadratic differential $\Psi=-\frac{U_1(z)}{U_2(z)}dz^2$ near infinity. However, as we will see late at most one such measure can be positive. 
\end{rem}

\begin{center}
\begin{picture}(440,270)(0,0)
\put(170,240){\line(1,0){60}}
\put(170,240){\circle*{3}}
\put(230,240){\circle*{3}}

\qbezier(200,275)(250,270)(230,240)
\qbezier(200,275)(150,270)(170,240)
\qbezier(200,205)(250,210)(230,240)
\qbezier(200,205)(150,210)(170,240)

\put(180,255){\circle*{3}}
\put(220,255){\circle*{3}}
\qbezier(180,255)(200,265)(220,255)

\put(180,225){\circle*{3}}
\put(220,225){\circle*{3}}
\qbezier(180,225)(200,215)(220,225)

\put(50,160){Figure 2. The union of singular trajectories $K_\Psi$ and 4 real measures}

\put(20,90){\line(1,0){60}}
\put(20,90){\circle*{3}}
\put(80,90){\circle*{3}}
\put(55,90){$\ominus$}
%\qbezier(130,125)(180,120)(160,90)
%\qbezier(130,125)(80,120)(100,90)
%\qbezier(130,55)(180,60)(160,90)
%\qbezier(130,55)(80,60)(100,90)

\put (50,130) {\vector(0,1){6}}
\put (50,115) {\vector(0,1){6}}
\put (50,105) {\vector(0,-1){6}}

\put (50,75) {\vector(0,1){6}}
\put (50,65) {\vector(0,-1){6}}
\put (50,50) {\vector(0,-1){6}}

\put(30,105){\circle*{3}}
\put(70,105){\circle*{3}}
\qbezier(30,105)(50,115)(70,105)
\put(55,108){$\oplus$}

\put(30,75){\circle*{3}}
\put(70,75){\circle*{3}}
\qbezier(30,75)(50,65)(70,75)
\put(55,70){$\oplus$}

\put(15,10){1st measure}

%\put(180,90){\line(1,0){60}}
\put(120,90){\circle*{3}}
\put(180,90){\circle*{3}}

\qbezier(150,125)(200,120)(180,90)
\put(155,108){$\ominus$}
\qbezier(150,125)(100,120)(120,90)
\put(155,70){$\oplus$}
%\qbezier(210,55)(260,60)(240,90)
%\qbezier(210,55)(160,60)(180,90)

\put (150,130) {\vector(0,1){6}}
\put (150,120) {\vector(0,-1){6}}
\put (150,100) {\vector(0,1){6}}

\put (150,65) {\vector(0,-1){6}}
\put (150,50) {\vector(0,-1){6}}
\put (150,75) {\vector(0,1){6}}

\put(130,105){\circle*{3}}
\put(170,105){\circle*{3}}
\qbezier(130,105)(150,115)(170,105)

\put(130,75){\circle*{3}}
\put(170,75){\circle*{3}}
\qbezier(130,75)(150,65)(170,75)
\put(155,120){$\oplus$}

\put(115,10){2nd measure}

%\put(260,90){\line(1,0){60}}
\put(220,90){\circle*{3}}
\put(280,90){\circle*{3}}

%\qbezier(290,125)(340,120)(320,90)
%\qbezier(290,125)(240,120)(260,90)
\qbezier(250,55)(300,60)(280,90)
\put(255,108){$\oplus$}
\qbezier(250,55)(200,60)(220,90)
\put(255,70){$\ominus$}

\put (250,130) {\vector(0,1){6}}
\put (250,115) {\vector(0,1){6}}
\put (250,105) {\vector(0,-1){6}}

\put (250,78) {\vector(0,-1){6}}
\put (250,62) {\vector(0,1){6}}
\put (250,50) {\vector(0,-1){6}}

\put(230,105){\circle*{3}}
\put(270,105){\circle*{3}}
\qbezier(230,105)(250,115)(270,105)

\put(230,75){\circle*{3}}
\put(270,75){\circle*{3}}
\qbezier(230,75)(250,65)(270,75)
\put(255,55){$\oplus$}

\put(215,10){3rd measure}

\put(320,90){\line(1,0){60}}
\put(320,90){\circle*{3}}
\put(380,90){\circle*{3}}

\qbezier(350,125)(400,120)(380,90)
\put(355,108){$\ominus$}
\qbezier(350,125)(300,120)(320,90)
\put(355,70){$\ominus$}
\qbezier(350,55)(400,60)(380,90)
\put(355,123){$\oplus$}
\qbezier(350,55)(300,60)(320,90)

\put (350,130) {\vector(0,1){6}}
\put (350,118) {\vector(0,-1){6}}
\put (350,102) {\vector(0,1){6}}

\put (350,78) {\vector(0,-1){6}}
\put (350,62) {\vector(0,1){6}}
\put (350,50) {\vector(0,-1){6}}

\put(330,105){\circle*{3}}
\put(370,105){\circle*{3}}
\qbezier(330,105)(350,115)(370,105)
\put(355,55){$\oplus$}

\put(330,75){\circle*{3}}
\put(370,75){\circle*{3}}
\qbezier(330,75)(350,65)(370,75)

\put(315,10){4th measure}

\label{figNew}
\end{picture}
\end{center}

\noindent
{\em Explanation to Fig.~2}. The picture on top shows the  union $K_\Psi$ of singular trajectories and singular points of an appropriate  Strebel rational quadratic differential  $\Psi=-\frac{U_1(z)}{U_2(z)}dz^2$ with  $2$ simple zeros and $4$ simple poles. (It is well-known that any such picture is realizable by a suitable  Strebel differential.) The set  $\bC\setminus K_\Psi$ has two  bounded connected component and one unbounded. Choosing a branch of square root $\sqrt {R(z)}$ in each of the two connected components we define and unique real measure supported on (a part of) $K_\Psi$. Four pictures in the second row show the actual support of these four measures where $\oplus$ and $\ominus$ indicate the sign of the measure on the corresponding part of its support. Finally, arrows show the direction of the gradient of the logarithmic potential  of the corresponding measures in  respective components of $\bC\setminus K_\Psi$. They determine which singular trajectories are present in the support and which are not. (See details in \S~2.) 

 Concerning  positive measures we claim the following. 

\begin{theorem} \label{pr:positive} In the notation of Theorem~\ref{th:charac} for any Strebel diiferential $\Psi=-\frac{U_1(z)}{U_2(z)}dz^2$ on $\bCP^1$  there exists at most one positive measure 
satisfying \eqref{eq:imp} a.e. in $\bC$. Its support necessarily belongs to $K_\Psi$, and, therefore,  among  $2^{d-1}$ real measures  described in Theorem~\ref{th:charac} at most one is positive. 
\end{theorem}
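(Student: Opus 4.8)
The plan is to reduce the whole statement to a single observation about the $\mu$-mass enclosed by the closed trajectories of $\Psi$. Write $R(z)=U_1(z)/U_2(z)$, so that $\Psi=-R(z)\,dz^2$ is, by hypothesis, Strebel. Let $\mu\geq0$ be a compactly supported measure of total mass $1$ with $C_\mu^2=R$ a.e.; since $\deg U_2-\deg U_1=2$ and both are monic, $R(z)\sim z^{-2}$ and $C_\mu(z)\sim 1/z$ at infinity, so $\mu$ is genuinely a probability measure. Off $\operatorname{supp}\mu$ the Cauchy transform is analytic and equals one of the two branches $\pm\sqrt R$. Using the Plemelj jump formula for $C_\mu$ one checks, exactly as in the discussion accompanying Fig.~2, that the density of any \emph{real} measure satisfying \eqref{eq:imp} is carried by horizontal trajectories of $\Psi$ (there $\sqrt R\,dz$ is purely imaginary, which is precisely what renders the jump $\tfrac{i}{2\pi}(C_--C_+)\,ds$ real), and that its support meets each ring or circle domain of the Strebel structure in a union of closed trajectories.

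Next I would compute the enclosed mass. If $\gamma$ is a closed horizontal trajectory avoiding $\operatorname{supp}\mu$, oriented counterclockwise, then by Fubini and the residue theorem
\[
\mu(\operatorname{int}\gamma)=\frac{1}{2\pi i}\oint_\gamma C_\mu(z)\,dz=\frac{\epsilon}{2\pi i}\oint_\gamma\sqrt R\,dz=\epsilon J,
\]
where $\epsilon\in\{+1,-1\}$ records the branch $C_\mu=\epsilon\sqrt R$ holding on $\gamma$ and $J:=\tfrac{1}{2\pi i}\oint_\gamma\sqrt R\,dz$. Here $J$ is the same for all trajectories homotopic to $\gamma$ inside a given ring domain, because $\sqrt R$ is single-valued and holomorphic there. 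Crucially, $J$ is \emph{real and nonzero}: along a horizontal trajectory $\sqrt{-R}\,dz$ is real of constant sign, so $\oint_\gamma\sqrt{-R}\,dz$ equals $\pm$ the positive $\Psi$-length of $\gamma$, whence $\oint_\gamma\sqrt R\,dz\in i\bR\setminus\{0\}$ and $J\in\bR\setminus\{0\}$.

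Both assertions then follow quickly. Since $\mu\geq0$, for every support-avoiding closed trajectory $\gamma$ the enclosed mass $\mu(\operatorname{int}\gamma)=\epsilon J=\pm|J|$ is nonnegative, hence equal to $+|J|$. But across any support trajectory the branch $\epsilon$ flips, which would make the enclosed mass equal to $-|J|<0$ on one of the two adjacent sides — impossible. Consequently no ring or circle domain can contain a support trajectory, that is, $\operatorname{supp}\mu\subseteq K_\Psi$, the second claim. The same computation moreover pins the branch on \emph{every} component $\Omega_j$ of $\bCP^1\setminus K_\Psi$: each $\Omega_j$ is a ring or circle domain and contains a closed trajectory $\gamma_j$, and nonnegativity of $\mu(\operatorname{int}\gamma_j)=\epsilon_j J_j$ forces $\epsilon_j=\operatorname{sign}(J_j)$. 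Thus $C_\mu$, and hence $\mu=\tfrac1\pi\partial_{\bar z}C_\mu$, is uniquely determined, so at most one positive measure can satisfy \eqref{eq:imp}. Together with Theorem~\ref{th:charac}(2), which enumerates the $2^{d-1}$ real measures supported in $K_\Psi$, this yields the final statement that at most one of them is positive.

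The main obstacle, and the step I would treat most carefully, is the first one: justifying rigorously that the support lies on horizontal trajectories with real density (the Plemelj analysis and the criterion for the jump to be real), and invoking the global Strebel structure — namely that $\Psi$ Strebel forces every regular trajectory to be closed and $\bCP^1\setminus K_\Psi$ to decompose into ring and circle domains, each carrying a well-defined nonzero period $J$. This structural input is exactly what Lemma~\ref{lemma2} is meant to supply; once it is available, the enclosed-mass identity and the sign argument are short and elementary. A secondary point requiring care is the behavior at infinity: one must confirm that the double pole of $\Psi$ there produces a circle domain whose trajectories enclose the entire compact support, so that the normalization $\int_\bC d\mu=1$ is consistent with $\epsilon_\infty J_\infty=1$.
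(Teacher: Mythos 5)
Your argument is correct (at the same level of rigor as the paper, which likewise takes from \cite{BR} the structural fact that the support of any real measure satisfying \eqref{eq:imp} is a union of whole horizontal trajectories), but it reaches the conclusion by a different mechanism than the paper does. The paper's proof works qualitatively with the \emph{direction} of the gradient of the logarithmic potential $u_\mu$: it shows in Proposition~\ref{pr:crit} that positivity forces the gradient to point outward on the outer boundary of every domain, propagates this constraint through the domains ordered by depth, and handles the containment $\operatorname{supp}\mu\subset K_\Psi$ by taking an innermost closed support trajectory bounding $A_\infty$ and descending until a negative density appears. Your proof replaces this combinatorial descent by the single quantitative identity $\mu(\operatorname{int}\gamma)=\frac{1}{2\pi i}\oint_\gamma C_\mu\,dz=\epsilon J$ with $J=\pm L_\Psi(\gamma)/2\pi$ real and nonzero, so that positivity of the enclosed mass pins the branch $\epsilon$ in each ring or circle domain at once and immediately excludes support trajectories interior to such domains. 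The two arguments are of course cousins --- your period $J$ is essentially the flux of $\nabla u_\mu$ through $\gamma$, so the sign of $J$ encodes the paper's inward/outward dichotomy --- but your version buys a cleaner uniqueness statement (the branch is determined component by component with no induction on depth) and makes explicit \emph{why} the relevant quantity cannot vanish, namely because it is a $\Psi$-length. The one point where you should mirror the paper's care is the preliminary reduction: both proofs need that $\operatorname{supp}\mu$ is a locally finite union of closed trajectories inside each ring domain (so that trajectories $\gamma_1,\gamma_2$ avoiding the support exist on both sides of a putative support trajectory); you correctly flag this as the delicate step, and it is exactly the input the paper imports from \cite{BR}.
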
Ê

In \S~\ref{sec:2} we provide an exact criterion of the existence of a positive measure in terms of rather simple topological properties of $K_\Psi$.  In particular, in the case shown on Fig.~2  no such positive measure exists as one can see from the signs on different parts of the support.  It seems very likely that results similar to Theorem~\ref{th:charac} and \ref{pr:positive} can be proved for Riemann surfaces of positive genera as well. 

The latter theorem together with Theorem~\ref{th:higRull} imply the following.  

\begin{corollary}Ê\label{cor:main} Under the above assumptions  one has that 
     the quadratic differential $\Phi=-\frac {\widetilde V(z)}{Q(z)}dz^2$ is Strebel where ${\widetilde V(z)}$ is the limit of some sequence of normalized Van Vleck polynomials for \eqref{eq:Heun}.
     \end{corollary}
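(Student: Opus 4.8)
The plan is to chain together the two results the corollary explicitly cites, namely Theorem~\ref{th:higRull} and Theorem~\ref{pr:positive}, so the whole argument is a matter of matching hypotheses and identifying the relevant polynomials. First I would set $U_1(z)=\widetilde V(z)$ and $U_2(z)=Q(z)$, where $\widetilde V(z)$ is the limit of the given sequence of normalized Van Vleck polynomials. Both are monic by construction ($\widetilde V$ because it is a \emph{normalized} limit, $Q$ by the standing assumption that it is monic), and the degree condition $\deg U_2-\deg U_1=2$ holds since $\deg Q=3$ and $\deg\widetilde V=1$. Thus the pair $(\widetilde V,Q)$ is exactly of the form to which Theorem~\ref{th:charac} and Theorem~\ref{pr:positive} apply, and the relevant quadratic differential is precisely $\Phi=-\frac{\widetilde V(z)}{Q(z)}dz^2$.

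The key step is to produce a measure satisfying the defining equation \eqref{eq:imp} for this pair. Theorem~\ref{th:higRull} supplies the root-counting measure $\nu_{\widetilde V}$ as the weak limit of the Stieltjes root-counting measures $\{\nu_{n,i_n}\}$, and its Cauchy transform $\CC_{\widetilde V}(z)$ satisfies $\CC_{\widetilde V}^2(z)=\frac{\widetilde V(z)}{Q(z)}$ almost everywhere in $\bC$. This is exactly equation \eqref{eq:imp} with $U_1=\widetilde V$, $U_2=Q$, and with $C_\mu$ playing the role of $\CC_{\widetilde V}$. Moreover $\nu_{\widetilde V}$ is a probability measure, hence in particular a \emph{positive} measure of total mass $1$ which is compactly supported (its support lies in $\Delta_Q$, or more precisely in $\Ga_Q$, by Part~2 of Theorem~\ref{th:my} together with Theorem~\ref{takemura}). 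So the hypotheses of the ``only if'' direction of Theorem~\ref{th:charac}(1) are met.

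Finally I would invoke Theorem~\ref{th:charac}(1) in the direction that reads: if there exists a real, compactly supported probability measure whose Cauchy transform satisfies \eqref{eq:imp}, then the differential $\Phi=-\frac{U_1}{U_2}dz^2$ is Strebel. Applying this to $\nu_{\widetilde V}$ yields immediately that $\Phi=-\frac{\widetilde V(z)}{Q(z)}dz^2$ is Strebel, which is the assertion of the corollary. The reference to Theorem~\ref{pr:positive} enters only to reinforce that the positive measure we have found is \emph{the} distinguished one (it must be supported on $K_\Phi$), but strictly speaking the Strebel conclusion already follows from the characterization in Theorem~\ref{th:charac}(1).

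The only point requiring genuine care, rather than bookkeeping, is verifying that $\nu_{\widetilde V}$ really is compactly supported and genuinely a probability (hence positive) measure, so that it qualifies as an admissible measure in the statement of Theorem~\ref{th:charac}; this is where one must cite the confinement of the Stieltjes roots to a neighborhood of $\Delta_Q$ from Theorem~\ref{th:my}(2) and pass to the weak limit. I do not expect this to be an obstacle so much as the one place where an honest proof must say a sentence rather than merely quote a theorem. Everything else is a direct substitution of the pair $(\widetilde V,Q)$ into the already-established equivalence.
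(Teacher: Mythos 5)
Your proof matches the paper's own argument: the paper derives Corollary~\ref{cor:main} exactly by combining Theorem~\ref{th:higRull} (which produces the positive, compactly supported probability measure $\nu_{\widetilde V}$ with $\CC_{\widetilde V}^2(z)=\widetilde V(z)/Q(z)$ a.e.) with the characterization of Strebel differentials in part (1) of Theorem~\ref{th:charac}. The only inessential slip is your parenthetical appeal to Theorem~\ref{takemura} for locating the support in $\Ga_Q$ --- that theorem is proved \emph{after} the corollary and would be circular here --- but compactness of the support already follows from Theorem~\ref{th:my}(2) alone, as you yourself note.
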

     
     Finally, detailed study of quadratic differentials in \S 4 results in the following statement. 

\begin{pro}\label{pr:struc}
  For $\tilde b\neq b_0$ the support $\frak {S}_{\widetilde V}$ of  $\nu_{\widetilde V}$ consists of two singular trajectories of $\Phi$ - one connecting two roots among $a_1,a_2,a_3$ and one connecting the remaining root with $\tilde b$. For $\tilde b=b_0$ this support consists of three singular trajectories connecting $a_1,a_2,a_3$ to $b_0$, see Fig.~5 and 6.  
  \end{pro}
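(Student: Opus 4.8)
The plan is to read off $\mathfrak S_{\widetilde V}$ from the critical graph of $\Phi=-\frac{\widetilde V(z)}{Q(z)}\,dz^2$, which is Strebel by Corollary~\ref{cor:main}. First I would record the local structure of $\Phi$. Writing $\widetilde V(z)=z-\tilde b$ and assuming the generic situation $\tilde b\notin\{a_1,a_2,a_3\}$ (the coincidences being treated separately), the differential $\Phi$ has one simple zero at $\tilde b$, from which three horizontal trajectories emanate at equal angles $\tfrac{2\pi}{3}$, and three simple poles $a_1,a_2,a_3$, each being the endpoint of exactly one horizontal trajectory. The substitution $\tilde z=1/z$ shows $\Phi$ has a double pole at $\infty$ with leading coefficient $-1<0$, so a punctured neighbourhood of $\infty$ is foliated by closed trajectories; the orders $1-3-2=-4$ are consistent with $\bCP^1$.

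Next I would argue that $\mathfrak S_{\widetilde V}$ is a union of \emph{horizontal} singular trajectories of $\Phi$ joining the branch points $\tilde b,a_1,a_2,a_3$ of $\sqrt{\widetilde V/Q}$. Since $\nu_{\widetilde V}$ is a positive probability measure whose Cauchy transform solves \eqref{eq:imp} with $U_1=\widetilde V$, $U_2=Q$ (Theorem~\ref{th:higRull}), Theorem~\ref{pr:positive} places its support inside $K_\Phi$; and as the two boundary values of $\CC_{\widetilde V}$ on the support are $\pm\sqrt{\widetilde V/Q}$, the resulting jump is a positive measure only where $\tfrac{\widetilde V}{Q}\,dz^2\le 0$, i.e. along horizontal trajectories of $\Phi$. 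Being Strebel, $\Phi$ has no recurrent trajectories, so each prong issuing from $\tilde b$ ends at a finite singular point; since $\tilde b$ is the only zero, a prong not ending at a pole returns to $\tilde b$, and such loops use two prongs each. Thus, with $\ell$ the number of such loops, $p+2\ell=3$ where $p$ is the number of prongs ending at poles, forcing $p\in\{1,3\}$: either all three prongs reach the three poles (a tripod, $\ell=0$), or exactly one does and the other two form a loop ($\ell=1$), in which case the two remaining poles must be joined to one another by a further singular trajectory.

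To decide which case occurs I would identify \eqref{eq:ba123Rea} with pole-pair connectivity. Because $\sqrt{\Phi}=\sqrt{\tfrac{\tilde b-t}{(t-a_1)(t-a_2)(t-a_3)}}\,dt$, the defining relation of $\gamma_i$ is exactly the reality of $\int_{a_j}^{a_k}\sqrt{\Phi}$, i.e. the condition that $a_j$ and $a_k$ be joined by a horizontal trajectory of finite $\Phi$-length. By Theorem~\ref{takemura} the limit $\tilde b$ lies on $\Gamma_Q$, hence on a single leg $\Gamma_i\subset\gamma_i$; so for $\tilde b\neq b_0$ exactly one pair $\{a_j,a_k\}$ is connected, we are in the case $p=1$, and $\mathfrak S_{\widetilde V}$ is the union of the trajectory joining $a_j$ to $a_k$ and the trajectory joining the remaining root $a_i$ to $\tilde b$; the loop at $\tilde b$ lies in $K_\Phi$ but carries no mass, since it is not a branch cut of $\CC_{\widetilde V}$. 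When $\tilde b=b_0$, Lemma~\ref{lm:inter} gives $b_0\in\gamma_1\cap\gamma_2\cap\gamma_3$, so all three pairs satisfy the connectivity condition; a direct triangle on $a_1,a_2,a_3$ is impossible because each simple pole emits a single prong, so the connections must route through the common zero $b_0$, yielding the tripod $\Gamma_Q$.

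The hard part is the global rigidity behind this dichotomy: upgrading the local prong count to a genuine classification of $K_\Phi$, excluding configurations in which a $\tilde b$-prong meanders before closing up, and especially controlling the bifurcation of the critical graph as $\tilde b\to b_0$, where the connected pair $\{a_j,a_k\}$ must pinch and reattach to $b_0$. I expect this to need the quantitative trajectory analysis of \S~4 — integrating the distinguished parameter $\int\sqrt{\Phi}$ along trajectories and invoking the reality of the periods recorded by the $\gamma_i$ — rather than the local and combinatorial input above, combined with the positivity selection supplied by Theorems~\ref{th:charac} and \ref{pr:positive}.
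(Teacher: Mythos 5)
Your outline matches the paper's strategy: Proposition~\ref{pr:struc} is derived there from the classification of the critical graph of $\Phi$ carried out in \S~4 (parts (ii)--(iii) of the Proposition of that section and its Corollary), combined with the positivity selection of Theorems~\ref{th:charac} and \ref{pr:positive} to decide which singular trajectories actually carry mass --- in particular the loop at $\tilde b$ carries none, exactly as you say. Your local analysis at the zero, the poles and at $\infty$, the prong count $p+2\ell=3$, and the exclusion of a direct triangle on three simple poles are all correct, and in fact give a cleaner combinatorial frame than the paper makes explicit.

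The genuine gap is the clause where you write that the defining relation of $\gamma_i$ ``is exactly \dots\ the condition that $a_j$ and $a_k$ be joined by a horizontal trajectory.'' The integral in \eqref{eq:ba123Rea} is taken over the \emph{straight segment} $\overline{a_ja_k}$, not over a trajectory, so reality of that period is not by definition equivalent to connectivity; proving the implication (reality of the segment integral $\Rightarrow$ the trajectory emanating from $a_k$ actually terminates at $a_j$) is the entire analytic substance of the statement. The paper establishes it by a deformation argument: for $b=a_i$ the connecting trajectory is the segment $\overline{a_ja_k}$ itself; moving $b$ along $\Gamma_i$, if the trajectory from $a_k$ ever misses $a_j$, the local parabola-like structure of trajectories near the simple pole forces it to wind around $a_j$ and cross the vertical trajectory issuing from $a_j$ at some point $d$, whereupon $\bigl(\int_{a_j}^{d}+\int_{d}^{a_k}\bigr)\sqrt{\Phi}$ has nonvanishing imaginary part by monotonicity of $\Im\int\sqrt{\Phi}$ along vertical arcs --- contradicting $b\in\Gamma_i$. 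The bifurcation at $b_0$ is then controlled by noting that the connecting trajectory can only break by running into the zero $\tilde b$, which forces two (hence, by \eqref{eq:threeint}, all three) reality conditions at once, i.e.\ $\tilde b=b_0$. You correctly flag all of this as ``the hard part'' and name the right tools, but since it is precisely the content being claimed, the proposal as written does not yet constitute a proof.
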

     
%BUT ACCORDING TO ANDREI THE LATTER CASE IS NOT $\Ga_{Q_2}$!

\begin{figure}[!htb]

\centerline{\hbox{\epsfysize=1.6cm\epsfbox{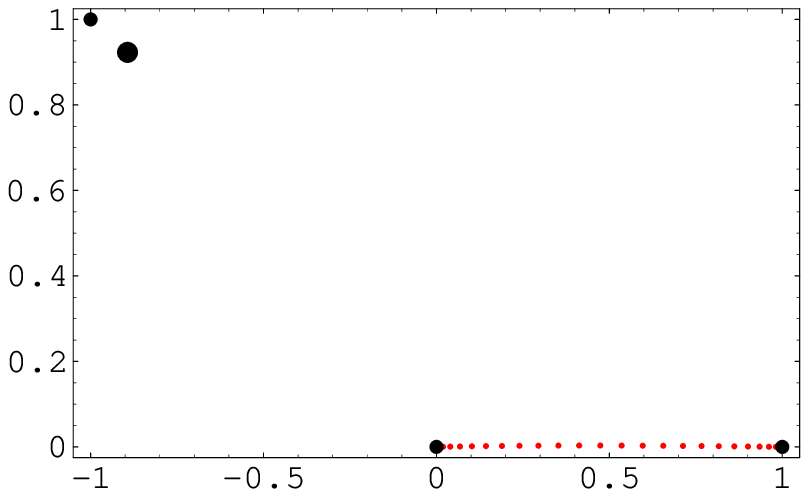}}
\hskip0.5cm\hbox{\epsfysize=1.6cm\epsfbox{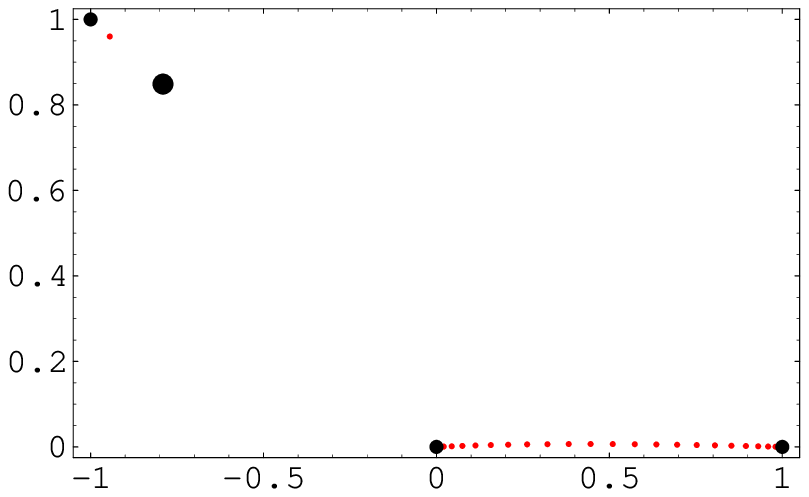}}
\hskip0.5cm\hbox{\epsfysize=1.6cm\epsfbox{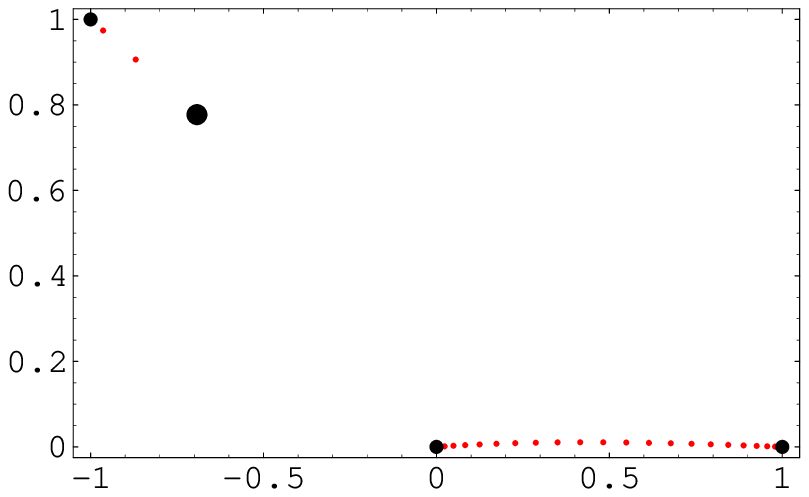}}
\hskip0.5cm\hbox{\epsfysize=1.6cm\epsfbox{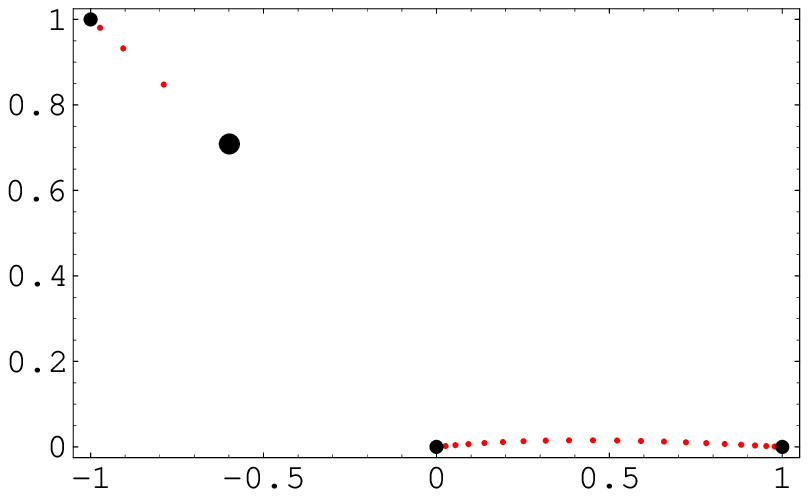}}
\hskip0.5cm\hbox{\epsfysize=1.6cm\epsfbox{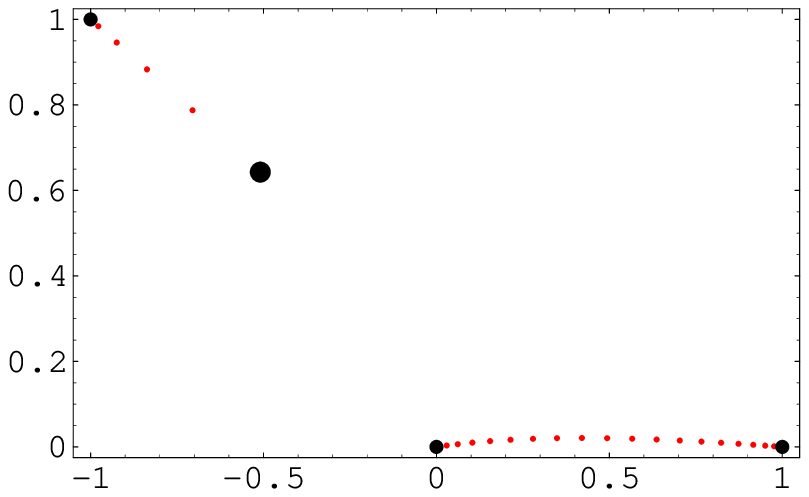}}
}

\centerline{\hbox{\epsfysize=1.6cm\epsfbox{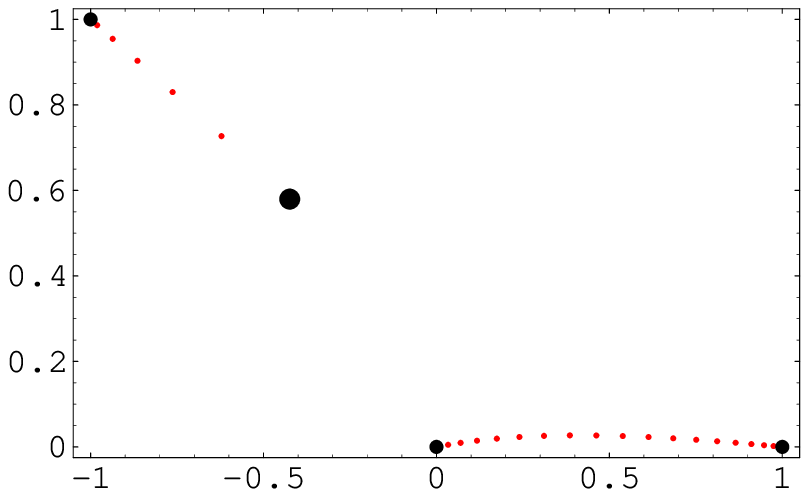}}
\hskip0.5cm\hbox{\epsfysize=1.6cm\epsfbox{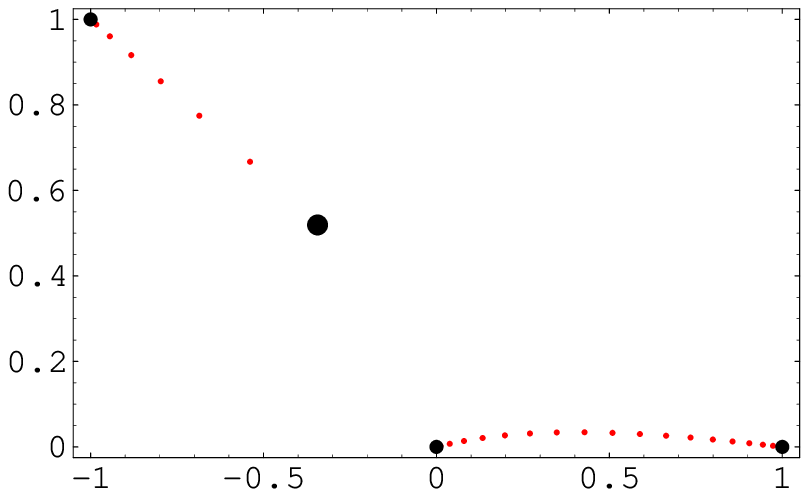}}
\hskip0.5cm\hbox{\epsfysize=1.6cm\epsfbox{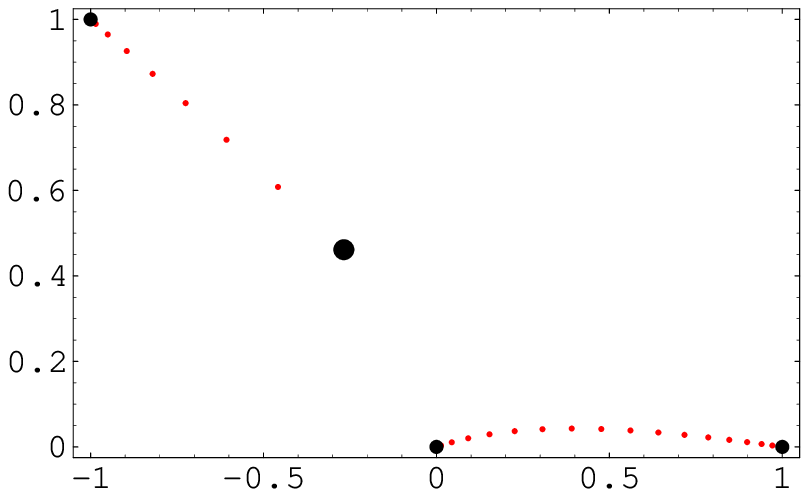}}
\hskip0.5cm\hbox{\epsfysize=1.6cm\epsfbox{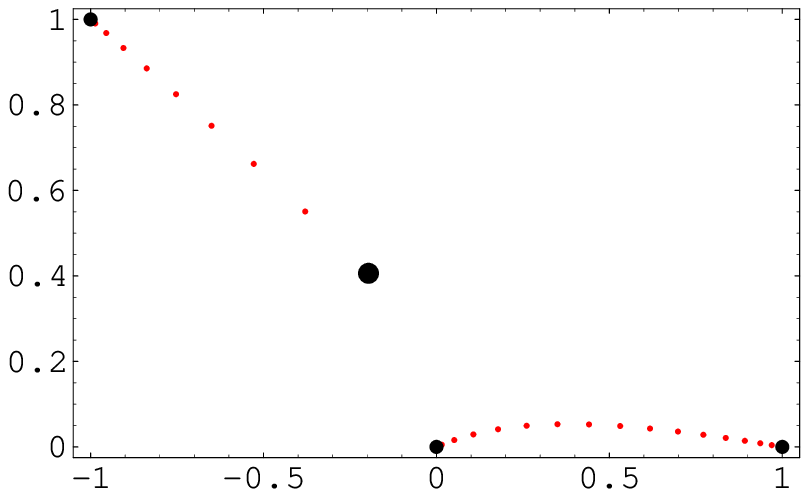}}
\hskip0.5cm\hbox{\epsfysize=1.6cm\epsfbox{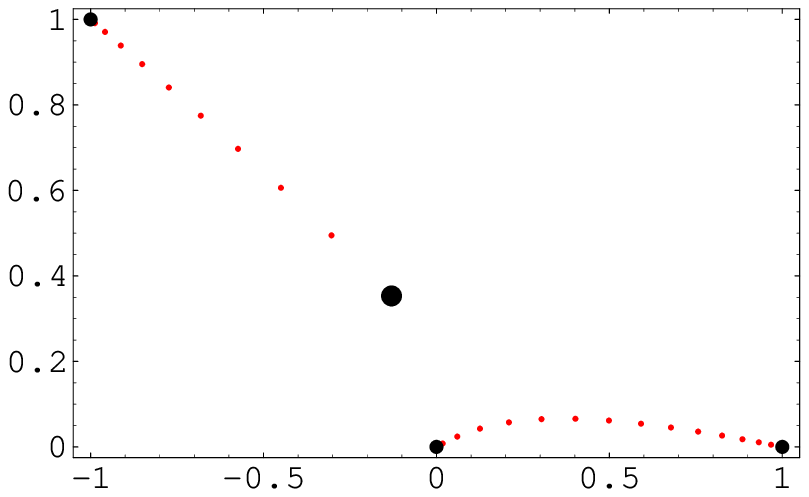}}
}

\centerline{\hbox{\epsfysize=1.6cm\epsfbox{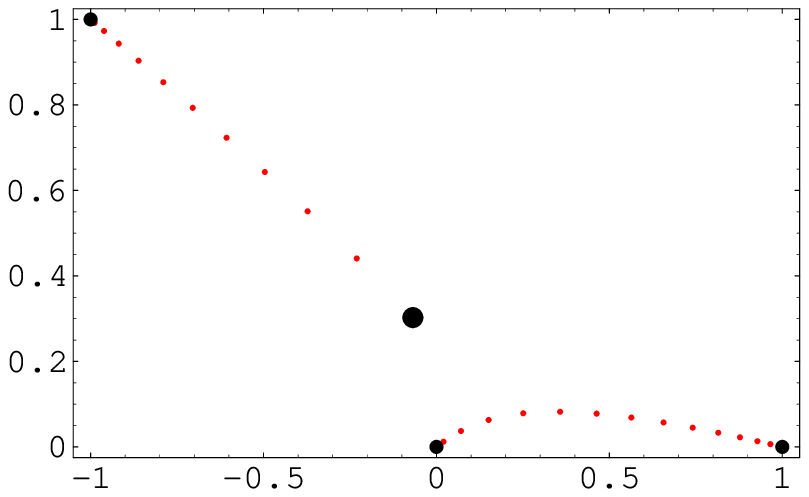}}
\hskip0.5cm\hbox{\epsfysize=1.6cm\epsfbox{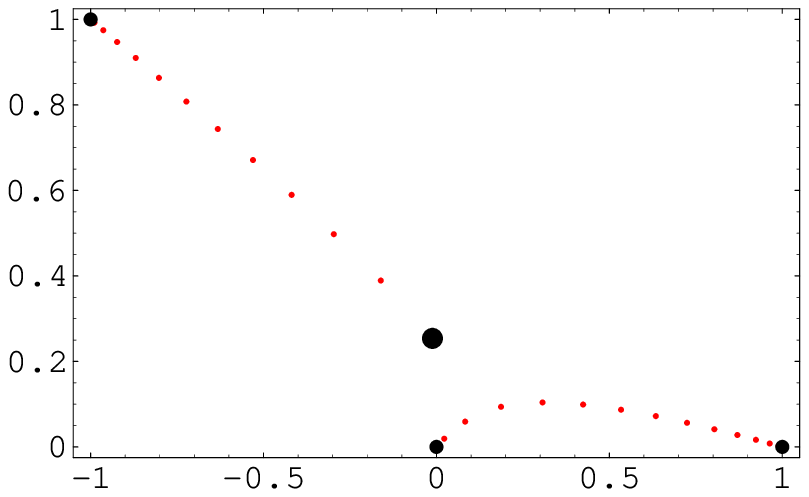}}
\hskip0.5cm\hbox{\epsfysize=1.6cm\epsfbox{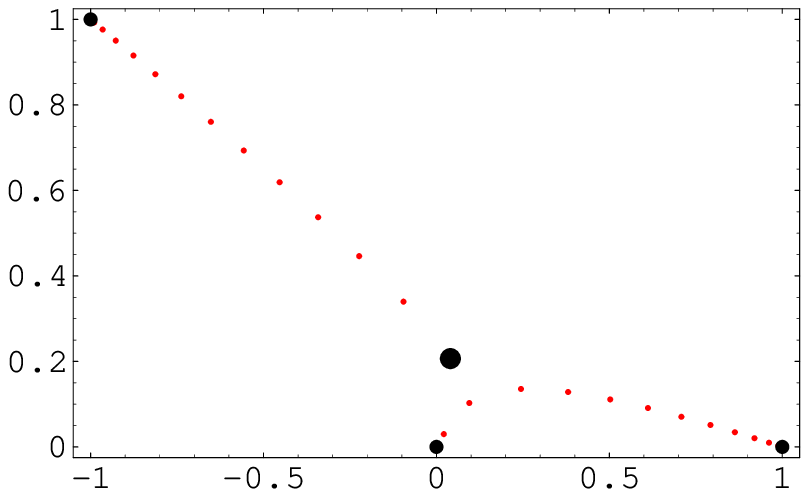}}
\hskip0.5cm\hbox{\epsfysize=1.6cm\epsfbox{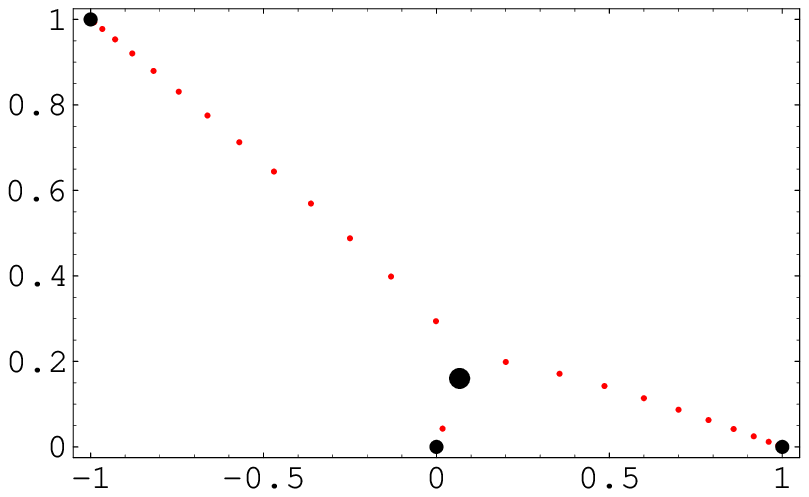}}
\hskip0.5cm\hbox{\epsfysize=1.6cm\epsfbox{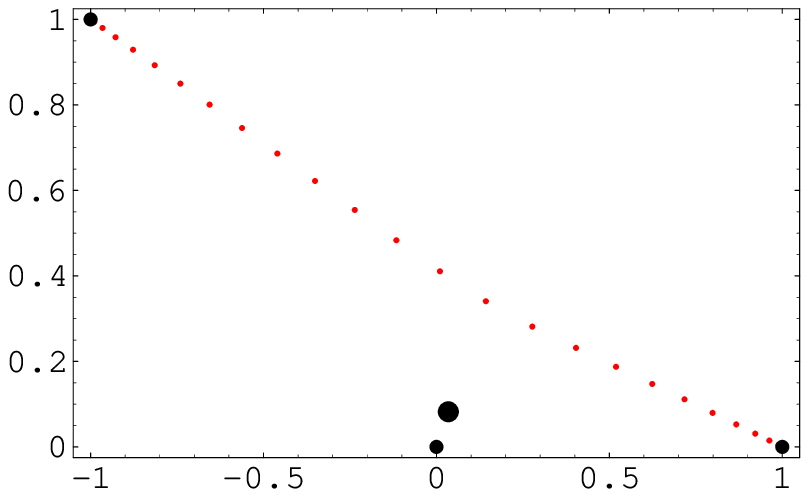}}
}

\centerline{\hbox{\epsfysize=1.6cm\epsfbox{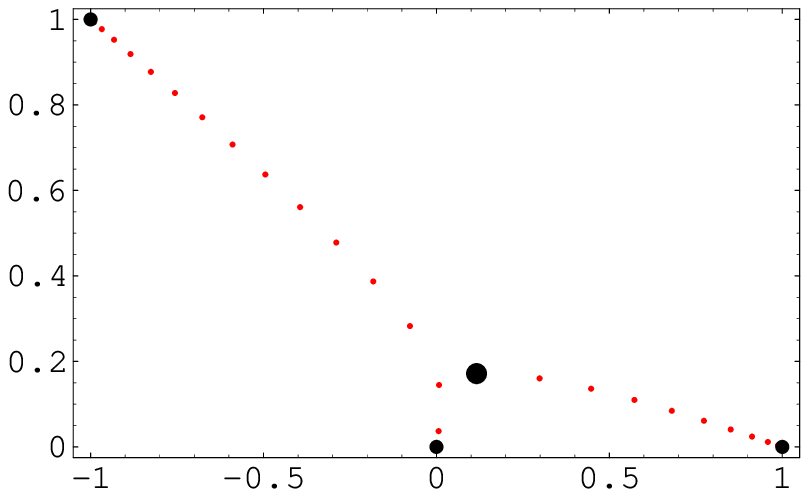}}
\hskip0.5cm\hbox{\epsfysize=1.6cm\epsfbox{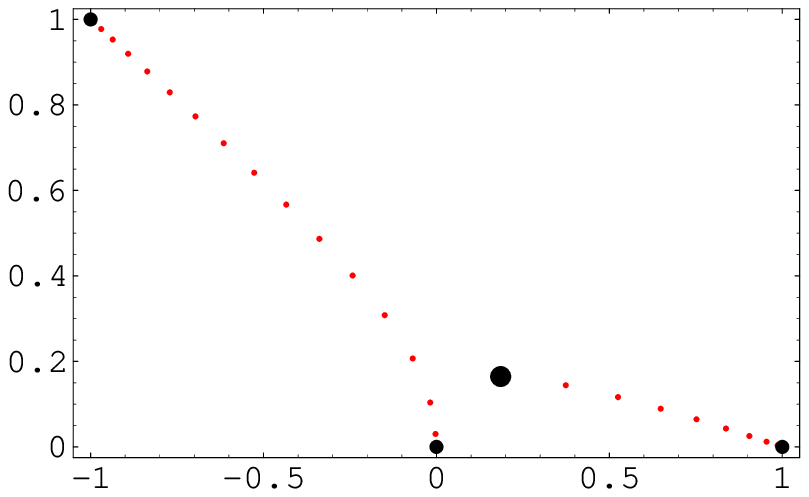}}
\hskip0.5cm\hbox{\epsfysize=1.6cm\epsfbox{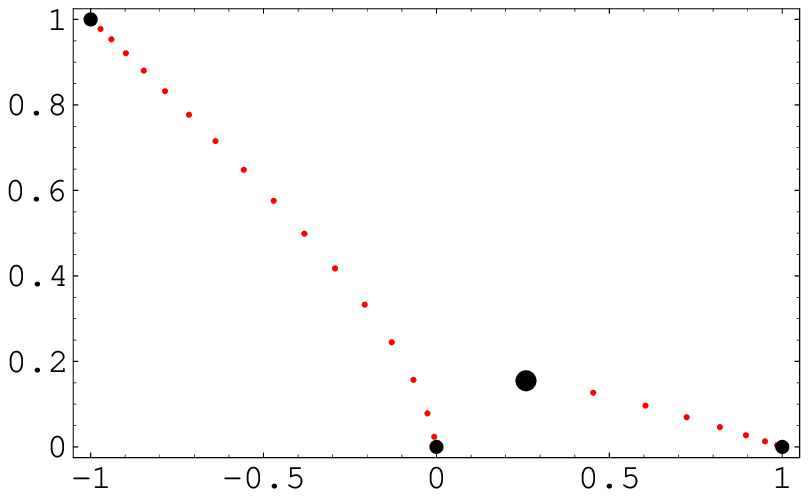}}
\hskip0.5cm\hbox{\epsfysize=1.6cm\epsfbox{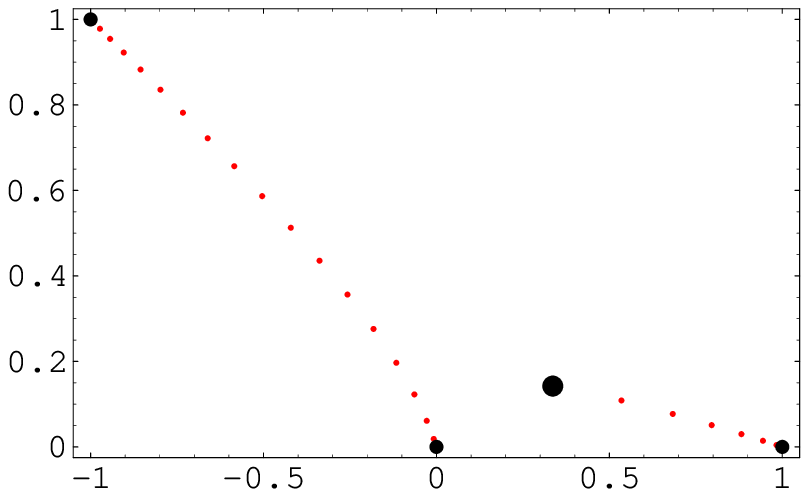}}
\hskip0.5cm\hbox{\epsfysize=1.6cm\epsfbox{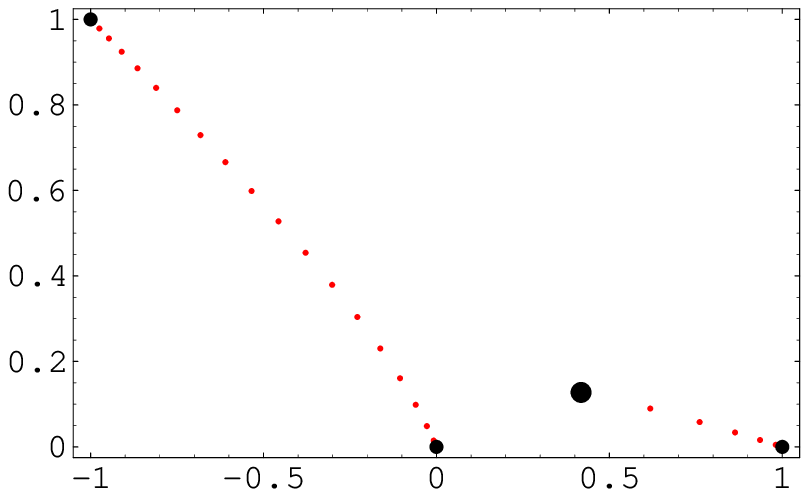}}
}

\centerline{\hbox{\epsfysize=1.6cm\epsfbox{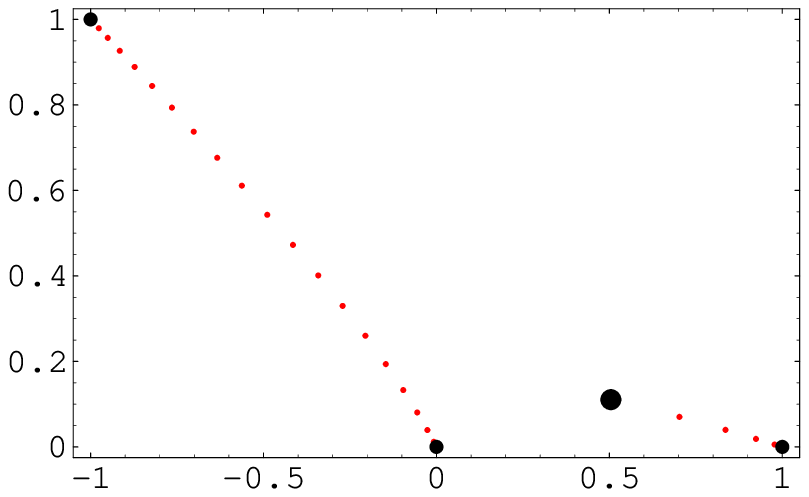}}
\hskip0.5cm\hbox{\epsfysize=1.6cm\epsfbox{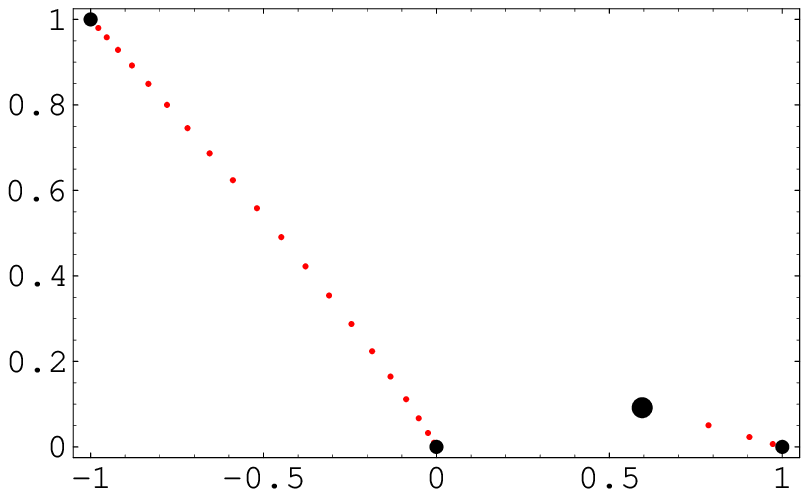}}
\hskip0.5cm\hbox{\epsfysize=1.6cm\epsfbox{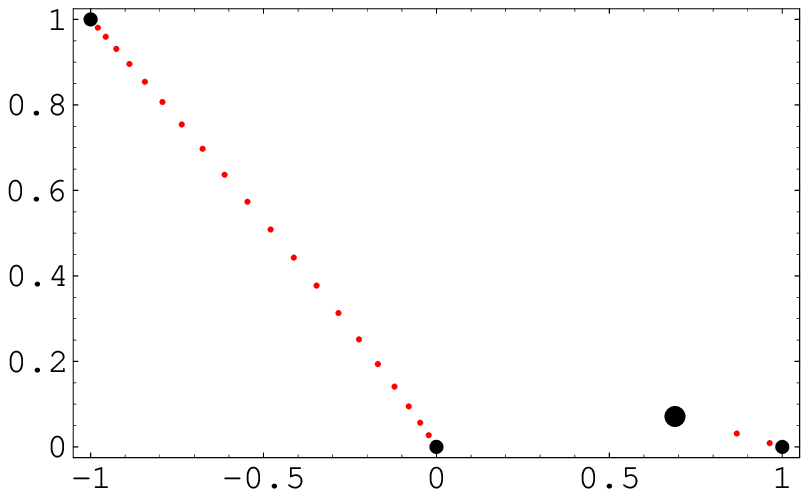}}
\hskip0.5cm\hbox{\epsfysize=1.6cm\epsfbox{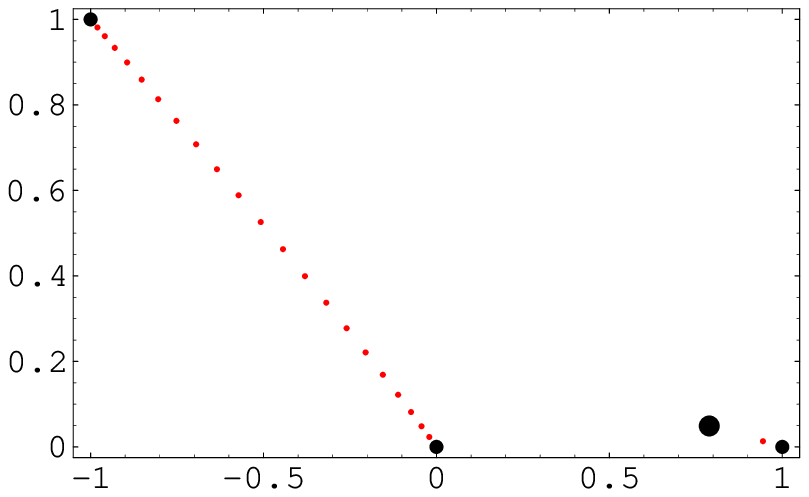}}
\hskip0.5cm\hbox{\epsfysize=1.6cm\epsfbox{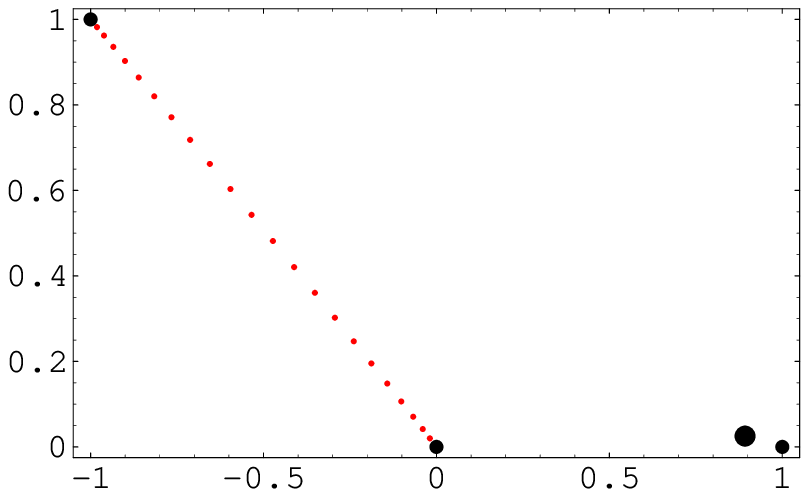}}
}

\vskip 1cm
{Figure 3. Zeros of $25$ different Stieltjes polynomials of degree $24$
for the equation $Q(z)S^{\prime\prime}(z)+V(z)S(z)=0$ with
$Q(z)=z(z-1)(z-1+I)$. }
\label{fig2}
\end{figure}

\begin{remark} In fact, we will prove our  results in the reverse order starting with Theorems~\ref{th:charac}, ~\ref{pr:positive},   then \ref{th:higRull} and, finally,   settling Theorem~\ref{takemura}. 
This order is necessary since the convergence and unicity statements in Theorem~\ref{takemura}  require some technique and facts  from the former theorems.
\end{remark}

\medskip
\noindent 
{Explanation to Figure~3.} 
The smaller dots on each of the $25$ pictures above are the $24$ zeros of the corresponding Stieltjes polynomial $S(z)$; the $3$ average size dots are the zeros of $Q(z)$ and the single large dot is the (only) zero of the corresponding $V(z)$.

\begin{ack}
 We are obliged to Professor A.~Kuijlaars for clarification of his joint paper \cite{KvA} and to Professor A.~Mart\'inez-Finkelshtein for the interest in our work as well and many discussions of this topic. In the late stage of working on this project we discovered that he and Professor E.~Rakhmanov were studying   very similar questions, see \cite{MFR1}, \cite{MFR2} and our results have serious  intersection although were obtained completely independently and by using rather different technical tools. In particular, the above Theorem~\ref{takemura}Ê is contained in  \cite {MFR1}.  Sincere thanks go  to Thomas Holst and especially to Yuliy Baryshnikov for their help with quadratic differentials. %We want to thank Professor G.~V.~Kuzmina for the patient explanation of her related results.  
 The second author wants to acknowledge the hospitality of the department of mathematics, Stockholm University during his visit to Stockholm  in Fall 2007 when this project was initiated. He was supported by the Grant-in-Aid for Young Scientists (B) (No. 19740089) from the Japan Society for the Promotion of Science. Research of the third author  was supported by the Czech Ministry of Education, Youth and Sports within the project LC06002.
\end{ack}

\medskip 
\section{Proving Theorems~\ref{th:charac} and ~\ref{pr:positive}}\label{sec:2}

It is well-known that any closed trajectory $\gamma$ of a Strebel differential $\Psi$ is contained in
the maximal connected domain $D_{\gamma}$, completely filled with
closed trajectories, such that any closed trajectory $\gamma'$ is
contained in $D_{\gamma}$ if, and only if, it is homotopic to
$\gamma$ on the corresponding Riemann surface, see \cite{Str}. This
implies, in particular, that any non-closed trajectory of a Strebel
differential is a part of the boundary of one of these connected
components consisting of closed trajectories. Such a boundary
consists of singular points and singular trajectories since
otherwise the boundary will be a closed trajectory  itself, which
clearly contradicts  to the maximality of the corresponding domain. 
We now  prove the compactness result  mentioned in the Introduction in 
 case of rational differentials, comp.  Theorem~20.1 in \cite{Str}. 

\begin{lemma}
\label{lemma2} Let  $U_1(z)$ and $U_2(z)$  be monic polynomials with 
$\deg U_2- \deg U_1=2$. The rational differential $\Psi=-\frac{U_1(z)}{U_2(z)}dz^2$ is
Strebel if and only if  the union $K_\Psi$ of all its singular trajectories and
singular points is compact.
\end{lemma}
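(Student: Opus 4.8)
The plan is to adapt the proof of Theorem~20.1 of \cite{Str} from the compact-surface setting to $\bCP^1$, the only genuinely new ingredient being the control of trajectories near $z=\infty$. First I would record the local picture at infinity. Substituting $\tilde z=1/z$ and using $dz^2=\tilde z^{-4}\,d\tilde z^2$ together with $\deg U_2-\deg U_1=2$ and the monicity of $U_1,U_2$, one finds that in the coordinate $\tilde z$ the differential reads $\Psi=-\bigl(\tilde z^{-2}+O(\tilde z^{-1})\bigr)\,d\tilde z^2$, i.e. $\Psi$ has a double pole at $\tilde z=0$ with leading coefficient $-1<0$. By the standard normal form for a second order pole with negative leading coefficient (Ch.~3 of \cite{Str}) there is a punctured neighbourhood $N_\infty$ of $\infty$ swept out entirely by closed trajectories encircling $\infty$; in the distinguished parameter these are the level curves of $\Im\bigl(i\log\tilde z+\cdots\bigr)$, i.e. perturbed circles $|\tilde z|=\mathrm{const}$. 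This circle domain germ at $\infty$ is what replaces the compactness of the ambient surface used in \cite{Str}. Throughout, compactness of $K_\Psi$ is understood as the critical graph not accumulating at $\infty$ (equivalently, no singular trajectory being unbounded), the pole $\infty$ itself being an interior point of a circle domain.

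For the direction ``Strebel $\Rightarrow K_\Psi$ compact'' I would argue as follows. If $\Psi$ is Strebel then by Lemma~\ref{lemma1} all its poles have order at most $2$, so the finite singular points are finite in number. Let $C_\infty$ be the maximal domain of closed trajectories homotopic to a small loop around $\infty$; by the first paragraph $C_\infty\supseteq N_\infty$ contains a full punctured neighbourhood of $\infty$, and by maximality its boundary consists of singular points and singular trajectories, so $\partial C_\infty\subseteq K_\Psi$. Since any trajectory meeting the open set $C_\infty$ is closed and non-singular, every singular trajectory and every finite singular point lies in $\bCP^1\setminus C_\infty$, a bounded subset of $\bC$. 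In particular no singular trajectory approaches $\infty$, which is exactly the asserted compactness of $K_\Psi$.

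For the converse ``$K_\Psi$ compact $\Rightarrow$ Strebel'' I would invoke the global structure theorem (Ch.~3 of \cite{Str}): $\bCP^1\setminus K_\Psi$ is a finite union of canonical domains, each a circle, ring, strip, half-plane, or spiral (density) domain, the circle and ring domains being precisely those filled by closed trajectories. It then suffices to exclude the remaining three types. Strip and half-plane domains have trajectory ends running into poles of order $\ge 2$ along which the distinguished parameter tends to $\infty$; at $\infty$ this cannot happen, since the negative leading coefficient forces the circle domain $C_\infty$ rather than entering trajectories, and at a finite simple pole the distinguished parameter stays bounded. Thus, when $U_2$ has simple roots (the case needed in the applications), neither domain occurs; a root of $U_2$ of multiplicity $\ge 3$ produces half-plane domains and would be treated by separate local analysis. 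It remains to exclude spiral domains: such a domain carries recurrent trajectories, and by the structure theory critical trajectories accumulate densely on it, so the closure of $K_\Psi$ contains an open set whereas $K_\Psi$ itself cannot; hence $K_\Psi$ would fail to be closed, contradicting compactness. With all three excluded, only closed-trajectory domains survive, and since $K_\Psi$ has measure zero, $\Psi$ is Strebel.

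The hard part is this converse, and within it the exclusion of recurrent (spiral) behaviour: one must show that the bare compactness of $K_\Psi$ already kills recurrent critical trajectories, which rests on the topological fact that a recurrent trajectory is dense in a two-dimensional subdomain and therefore cannot be swallowed by a compact $1$-complex. A second, more technical, point is the uniform control near $z=\infty$ guaranteeing that no critical trajectory escapes to infinity; here one leans on the circle domain germ $N_\infty$ of the first step. These two points are precisely where the passage from the compact surfaces of \cite{Str} to rational differentials on $\bCP^1$ demands extra work, and I would carry them out by importing the relevant trajectory-structure lemmas from Ch.~3 of \cite{Str}.
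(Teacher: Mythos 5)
Your strategy coincides with the paper's: the double pole at $\infty$ with leading coefficient $-1$ produces a circle domain $C_\infty$ that confines all finite singular points and singular trajectories to the bounded set $\bCP^1\setminus C_\infty$, and the converse is referred back to the canonical-domain decomposition behind Theorem~20.1 of Strebel (which is all the paper itself does for that direction). The one genuine gap is in the forward implication: the lemma asserts that $K_\Psi$ is \emph{compact}, i.e.\ closed and bounded, and your argument only delivers boundedness. You then legislate the rest away by ``understanding'' compactness as non-accumulation at $\infty$, but closedness is not automatic and is in fact the bulk of the paper's proof. One must show that if $z_{n}$ is a sequence of points lying on (possibly different) singular trajectories, then any limit point $z^*$ (which exists by boundedness) again lies in $K_\Psi$: the paper argues that $z^*$ cannot lie on a closed trajectory, since it would then sit inside an open domain swept out by closed trajectories and hence free of singular ones; therefore $z^*$ lies on the boundary of such a domain, and that boundary consists only of singular points and singular trajectories. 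Without this step the forward direction is not established as stated.

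Two smaller remarks on your converse, which goes beyond what the paper prints. First, your exclusion of spiral (density) domains via ``$K_\Psi$ cannot contain an open set'' is delicate under this paper's definition of a singular trajectory (one whose \emph{closure} meets a singular point): a recurrent trajectory in a density domain has the boundary singularities in its closure, so $K_\Psi$ would then contain the whole open domain and could still be compact. The argument does work for Strebel's notion of critical trajectory (a ray tending to a critical point), which is the measure-zero object you need. Second, the caveat about multiple roots of $U_2$ is real but is resolved by the paper's (tacit) restriction of the converse to differentials in the class $\M$, i.e.\ finite double poles with negative leading coefficient.
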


\begin{proof} We need to show that a rational Strebel differential $\Psi$ as above always has a compact set of closed trajectories and, conversely, that any rational differential as above belonging to the class 
$\M$ having a compact set of closed trajectories is Strebel. 

To prove the first implication   
 notice that due to  the assumption on the degrees of $U_1(z)$
and $U_2(z)$, the point $z=\infty$ will be a pole of order 2.
Since $U_1(z)$ and $U_2(z)$ are monic, the leading term of
$-\frac{U_1(z)}{U_2(z)}$ at infinity is negative, and hence there
is a neighborhood $D$ of $\infty$ such that
$D\setminus\{\infty\}$ is filled with closed trajectories. Hence the
union of all singular trajectories and singular points is contained
in a compact set $K\subset \bC$. Take an infinite sequence of points $\{z_n\}$,
all lying on some (not necessarily the same) singular trajectory. By
compactness of $K$, there is a converging subsequence
$z_{n_i}\rightarrow{}z^*$. The point $z^*$ can  not lie on a closed
trajectory, since it would then lie in an open domain free from
singular trajectories. Hence it must lie on the boundary of a domain
filled with closed trajectories. Thus $z^*$ is either a singular
point or lies on a singular trajectory. 

The proof in converse direction follows closely  that of Theorem~20.1 in \cite{Str} and is omitted here. 
\end{proof}

The next proposition is  important for our main construction in Theorem~\ref{th:charac}.

\begin{lemma}
\label{lemma3} For any rational Strebel differential $\Psi=-\frac{U_1(z)}{U_2(z)}dz^2$ 
 with monic  $U_1(z)$ and $U_2(z)$ satisfying $\deg U_2 -\deg U_1 =2$  there exists a compact union
$\frak U$ of its singular trajectories and singular
points such that:\\
1) all its singular points (except $\infty$) lie in $\frak U$;\\
2) $\mathbb{C}\setminus{}\frak U$ is connected.
\end{lemma}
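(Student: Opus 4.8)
The plan is to realize $K_\Psi$ as a finite graph embedded in the Riemann sphere and to take $\frak U$ to be a spanning forest of this graph; the whole point is that a forest never separates the sphere, so its complement is a single disk. First I would pin down the behaviour at infinity. Since $U_1,U_2$ are monic with $\deg U_2-\deg U_1=2$, the differential $\Psi=-\frac{U_1(z)}{U_2(z)}dz^2$ has a double pole at $z=\infty$ with negative leading coefficient (equal to $-1$, as a short computation in the chart $\tilde z=1/z$ shows). Hence, exactly as recalled in Lemma~\ref{lemma1}, the horizontal trajectories in a punctured neighbourhood of $\infty$ are closed curves encircling it, so $\infty$ is an \emph{isolated} point of $K_\Psi$; by Lemma~\ref{lemma2} the remaining part $G:=K_\Psi\setminus\{\infty\}$ is a compact subset of $\bC$.

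Next I would check that $G$ is a finite embedded graph. The singular points of $\Psi$ are the zeros and poles of the rational function $-U_1/U_2$, hence finite in number, and since $\Psi$ is Strebel, Lemma~\ref{lemma1} guarantees that each finite pole has order $1$ or $2$. Only finitely many trajectory rays emanate from each singular point, and because $K_\Psi$ is compact every singular trajectory has finite length with both ends at singular points. Thus $G$ is a finite multigraph whose vertices are the finite singular points and whose edges are the singular trajectories (loops and multiple edges being allowed); its isolated vertices are exactly the finite double poles, each surrounded by its own annulus of closed trajectories.

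I would then let $\frak U$ be any spanning forest of $G$, obtained by keeping every vertex and discarding one singular trajectory from each independent cycle until no cycle remains. By construction $\frak U$ is a compact union of singular trajectories and singular points containing all finite singular points, so property (1) holds. For property (2), regard $\frak U\subset\bCP^1$, noting $\infty\notin\frak U$. A forest carries no cycle, so if $\frak U$ has $V$ vertices, $c$ connected components, and therefore $E=V-c$ edges, then Euler's relation $V-E+F=1+c$ for a graph embedded in the sphere forces $F=1$: the complement $\bCP^1\setminus\frak U$ is a single connected face. Since $\infty$ lies in this face and deleting a single point from a connected planar open set leaves it connected, $\bC\setminus\frak U=(\bCP^1\setminus\frak U)\setminus\{\infty\}$ is connected, which is (2).

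The only genuinely delicate steps are the structural ones of the second paragraph: rigorously arguing that there are finitely many singular trajectories and that each is a compact arc joining singular points, so that $G$ is honestly a finite graph, together with the bookkeeping of the isolated singular points (the finite double poles and $\infty$), which must be recorded as vertices, respectively as a deleted point, without spoiling connectivity. Once this structure is in place, the passage from ``$\frak U$ is a forest'' to ``$\bCP^1\setminus\frak U$ is connected'' is the standard Euler-characteristic (equivalently, Alexander-duality) fact and presents no further difficulty.
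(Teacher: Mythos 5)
Your proof is correct, and it arrives at the same combinatorial endpoint as the paper's argument --- both constructions delete exactly one singular trajectory from each independent cycle of the critical graph --- but the routes are genuinely different. The paper works directly with the open components of $\bC\setminus K_\Psi$: it repeatedly locates a bounded component adjacent to the unbounded one and deletes a singular trajectory lying on their common boundary, merging the two; the substantive work there is the local analysis at a point $z_0$ of that common boundary (ruling out poles of order one and two, and using the sector structure at a zero) to guarantee that a deletable trajectory exists at every step. You instead abstract $K_\Psi\setminus\{\infty\}$ into a finite embedded multigraph, pass to a spanning forest, and invoke Euler's formula (equivalently, Alexander duality) to conclude that the complement is a single face. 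This makes the deletion step trivial and buys a cleaner, purely topological conclusion, but it shifts the entire burden onto the structural claim of your second paragraph: that there are only finitely many singular trajectories and that each one is a compact arc with both ends at singular points. Note that this does not follow from compactness of $K_\Psi$ alone --- compactness by itself would not exclude a trajectory accumulating on a larger limit set --- and really rests on the structure theory of Strebel differentials (finitely many ring domains whose boundaries consist of critical points and critical trajectories, Ch.~3 of \cite{Str}). You flag this honestly, and the paper relies on the same facts implicitly, so the level of rigour is comparable; just be aware that the reference to Strebel is doing real work in your version. Your bookkeeping at infinity and at the finite double poles (isolated vertices), and the Euler count $V-E+F=1+c$ for a disconnected embedded graph, are both correct.
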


\begin{proof}
Let $\frak U_0=K_\Psi$ be the union of all singular trajectories and all 
singular points  of $-\frac{U_1(z)}{U_2(z)}dz^2$ except $\infty$. By lemma 2, the
complement  $D_0:=\mathbb{C}\setminus{}\frak U_0$ is an open set with the
unique unbounded component $A_\infty$ (containing $\infty$) and finitely many bounded components
$B_1,...,B_k$. We will remove some of the singular trajectories
from $\frak U_0$ to obtain the required $\frak U$. Let \textit{\textbf{B}} be
the collection $\{B_{k_i}\}$ of all bounded components such that
$\partial{}A_\infty\cap\partial{}B_{k_i}\neq\emptyset$. Let $z_0$ be a
point in $\partial{}A_\infty\cap\partial{}B_{k_i}$ for some
$B_{k_i}\in\textit{\textbf{B}}$. If $z_0$ is a regular point, then
the trajectory $\gamma_{z_0}$ is  singular  and
$\gamma_{z_0}\subset\partial{}A_\infty\cap\partial{}B_{k_i}$. Put
$\frak U_1:=\frak U_0\setminus\gamma_{z_0}$. Then the set
$D_1:=\mathbb{C}\setminus{}\frak U_1$ contains one component less than $D_0$
and its unbounded component is $A_\infty\cup\gamma_{z_0}\cup{}B_{k_i}$ for
some $B_{k_i}\in\textit{\textbf{B}}$. Assume now that $z_0$ is a
singular point. It cannot be a pole of order two, because then it
would be isolated and constitute a boundary component of a unique
component of $D_0$. It cannot be a pole of order one either, since
then there would be a unique singular trajectory entering $z_0$. It
must thus be a zero of some order. Local analysis of zeros of
quadratic differentials (see Ch. 2,  \cite{Str}) shows that there is a finite number of
singular trajectories entering $z_0$, which split a small
neighborhood $N_{z_0}$ of $z_0$ into a finite number of open
sectors. The intersection $A_\infty\cap{}N_{z_0}$ is obviously non-empty
and consists of some of the open sectors, but not all of them (since
$z_0\in\partial{}A_\infty\cap\partial{B_{k_i}}$ for some
$B_{k_i}\in\textit{\textbf{B}}$). Thus, there exists a singular
trajectory entering $z_0$ which lies on the common boundary of $A_\infty$
and $B_{k_j}$ for some $B_{k_j}\in\textit{\textbf{B}}$. We can thus
remove one of these singular trajectories, call it $\gamma$, from
$U_0$ to obtain a smaller set $\frak U_1:=\frak U_0\setminus\gamma$ such that
$D_1:=\mathbb{C}\setminus{}\frak U_1$ contains one component less than $D_0$
and its unbounded component is $A_\infty\cup\gamma\cup{}B_{k_j}$ for some
$B_{k_j}\in\textit{\textbf{B}}$. Continuing this process (which
must end after a finite number of steps) we obtain the final set $\frak U$
satisfying conditions 1) and 2). (Notice that, in general, $\frak U$ is not unique.) 
\end{proof}

%\begin{rem}
%Notice that the difference between the number of poles and the
%number of zeroes, counted with multiplicity, in every component of
%$\frak U$ must be even. This follows from the fact that poles of order 2
%in $\frak U$ are isolated and that at a singular point of order $m\geq-1$
%we have exactly $m+2$ trajectories connecting the singular point to
%other singular points. As a consequence, we can choose the single-valued 
%branch of $\sqrt{\frac{U_1(z)}{U_2(z)}}$ in
%$\mathbb{C}\setminus{}\frak U$ which behaves as $\frac{1}{z}$ at infinity,
%since when winding around a component of $U$ the function
%$w=\sqrt{\frac{U_1(z)}{U_2(z)}}$ will wind around the origin an integer
%number of times. Denote it henceforth
%by ${\sqrt{\frac{U_1(z)}{U_2(z)}}}^+$.
%\end{rem}

To move further  we  need some information about compactly supported real
measures and their Cauchy transform.

\begin{lemma} [comp. Th. 1.2, Ch. II, \cite{Ga}]
\label{lemma4} Suppose $f\in{}L_{loc}^1(\mathbb{C})$ and that
$f(z)\rightarrow0$ as $z\rightarrow\infty$ and let $\mu$ be a
compactly supported measure in $\mathbb{C}$ such that
$\frac{\partial{}f}{\partial\bar{z}}=-\pi\mu$ in the sense of
distributions. Then $f(z)=C_{\mu}(z)$ almost everywhere, where
$C_{\mu}(z)=\int_{\mathbb{C}}\frac{d\mu(\xi)}{z-\xi}$ is the Cauchy
transform of $\mu$.
\end{lemma}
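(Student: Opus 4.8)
The plan is to exhibit the Cauchy transform $C_\mu$ as a second solution of the same distributional equation satisfied by $f$, and then to compare the two by a uniqueness argument. Concretely, I would set $h=f-C_\mu$ and argue in three steps: (a) $h\in L^1_{loc}(\bC)$, (b) $\partial h/\partial\bar z=0$ in the sense of distributions, and (c) $h(z)\to 0$ as $z\to\infty$. Together these force $h$ to be an entire function vanishing at infinity, hence identically zero by Liouville, which gives $f=C_\mu$ almost everywhere.

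First I would record the two analytic properties of $C_\mu$ needed here. Since $\mu$ is a finite, compactly supported measure and $1/|z|$ is locally area-integrable on $\bC$, Tonelli's theorem shows that $C_\mu(z)=\int_{\bC}\frac{d\mu(\xi)}{z-\xi}$ is well defined for almost every $z$ and belongs to $L^1_{loc}(\bC)$; for $|z|$ large the crude estimate $|C_\mu(z)|\le \|\mu\|/\operatorname{dist}(z,\operatorname{supp}\mu)$ shows $C_\mu(z)\to 0$ as $z\to\infty$. The crucial point is then the distributional identity
\[
\frac{\partial}{\partial\bar z}\,C_\mu=-\pi\mu,
\]
which is exactly the equation imposed on $f$. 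I would obtain it from the fact that a constant multiple of $1/z$ is a fundamental solution of the Cauchy--Riemann operator $\partial/\partial\bar z$ (equivalently, from the Cauchy--Pompeiu formula applied to a test function): pairing $\partial C_\mu/\partial\bar z$ against a test function $\varphi\in C_c^\infty(\bC)$, moving the derivative onto $\varphi$, and interchanging the order of integration by Fubini reduces the computation to the action of this fundamental solution on $\varphi$, which reproduces $-\pi\int\varphi\,d\mu$.

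With these two facts in hand, $h=f-C_\mu$ lies in $L^1_{loc}(\bC)$ and satisfies $\partial h/\partial\bar z=0$ distributionally. By the hypoellipticity of the Cauchy--Riemann operator (Weyl's lemma for $\bar\partial$), $h$ coincides almost everywhere with an entire function $H$; since both $f$ and $C_\mu$ tend to $0$ at infinity, continuity of $H$ forces $H(z)\to 0$ as $z\to\infty$, so $H$ is bounded on $\bC$ and Liouville's theorem yields $H\equiv 0$. Hence $f=C_\mu$ a.e., as claimed. I expect the only genuinely delicate step to be the distributional identity for $\partial C_\mu/\partial\bar z$: one must justify the convolution computation for a general finite measure $\mu$ rather than for a smooth density, and invoke the fundamental solution of $\bar\partial$ correctly (keeping the normalization consistent with the sign convention in the hypothesis). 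The regularity input via Weyl's lemma and the concluding Liouville step are then routine.
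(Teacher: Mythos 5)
Your proposal is correct and follows essentially the same route as the paper: both reduce to showing that $h=f-C_\mu$ is a locally integrable, $\bar\partial$-closed function vanishing at infinity and hence zero, the paper carrying out the regularization explicitly via convolution with an approximate identity where you invoke Weyl's lemma for $\bar\partial$. Your explicit verification that $\partial C_\mu/\partial\bar z=-\pi\mu$ via the fundamental solution of $\bar\partial$ is a point the paper leaves implicit, but it is the same standard computation and does not change the structure of the argument.
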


\begin{proof}
It is clear that $C_{\mu}$ is locally integrable, analytic off the
closure of the  support of $\mu$ and vanishes at infinity. Considering
$h=f-C_{\mu}$ and assuming that $h$ is a locally integrable function
vanishing at infinity and satisfying
$\frac{\partial{}h}{\partial\bar{z}}=0$ in the sense of
distributions. We must show that $h=0$ almost everywhere. Let
$\phi_r\in{}C_0^{\infty}(\mathbb{C})$ be an approximate to the
identity, i.e. $\phi_r\geq0$, $\int_{\mathbb{C}}\phi_rdxdy=1$ and
$supp({\phi_r})\subset\{|z|<r\}$, and consider the convolution
\[h_r(z)=h*\phi_r=\int_{\mathbb{C}}h(z-w)\phi_r(w)dxdy, \ w=x+iy.\]
It is well known that $h_r\in{}C^{\infty}$ and that
$\lim\limits_{r\rightarrow0}h_r\rightarrow{}h$ in $L^1(K)$ for any
compact set $K$. Moreover
\[\frac{\partial{}h_r}{\partial\bar{z}}=\frac{\partial{}h}{\partial\bar{z}}*\phi_r=0.\]
This shows that $h_r$ is an entire function which vanishes at
infinity, implying that $h_r\equiv0$. Hence $h=0$ a.e.
\end{proof}
Lemmas \ref{lemma3} and \ref{lemma4} show how to construct the 
unique measure whose support lies in some subset $\frak U$ of singular points
and singular trajectories of $-\frac{U_1(z)}{U_2(z)}dz^2$  by
specifying a  branch of $\sqrt{\frac{U_1(z)}{U_2(z)}}$ in
$\mathbb{C}\setminus{}\frak U$. %With this in mind let us  prove Theorem~\ref{th:charac}. 

\begin{proof} [Proof of Theorem~\ref{th:charac}] First we prove that  given a Strebel differential 
$\Psi=-\frac{U_1(z)}{U_2(z)}dz^2$ one can construct a real measure supported on  $K_\Psi$ with the required properties. To do this choose the union $\frak U$ of singular
trajectories of $\Psi$ as in Lemma 3. It has the property that we can choose a single-valued branch of $\sqrt{\frac{U_1(z)}{U_2(z)}}$ in $\mathbb{C}\setminus{\frak U}$
which behaves as $\frac{1}{z}$ at infinity. Define
$$\mu:=\frac{\partial\sqrt{\frac{U_1(z)}{U_2(z)}}}{\partial\bar{z}}$$
 in the sense of distributions. The distribution $\mu$ is evidently compactly supported on $\frak U$.
By lemma 4 we get that $C_{\mu}(z)$ satisfies \eqref{eq:imp} a.e in $\bC$. 

It remains to show
that $\mu$ is a real measure. Take a point $z_0$ in the support of
$\mu$ which is a regular point of $-\frac{U_1(z)}{U_2(z)}dz^2$ and take a small
neighborhood $N_{z_0}$ of $z_0$ which does not contain  roots of
$U_1(z)$ and $U_2(z)$. In this neighborhood we can choose a single branch
$B(z)$ of $\sqrt{\frac{U_1(z)}{U_2(z)}}$. Notice that $N_{z_0}$ is divided
into two sets by the support of $\mu$ since it by
construction consists of singular trajectories of $\mu$. Denote  these
 sets  by $Y$ and $Y'$ resp. Choosing $Y$ and $Y'$
appropriately we can represent $C_{\mu}$ as $\chi_{Y}B-\chi_{Y'}B$
in $N_{z_0}$ up to the support of $\mu$, where $\chi_{X}$ denotes
the characteristic function of the set $X$. By a theorem 2.15 in
\cite{Her}, ch.2  we have
\[<\mu,\phi>=<\frac{\partial{}C_{\mu}}{\partial\bar{z}},\phi>=<\frac{\partial(\chi_{Y}B-\chi_{Y'}B)}{\partial\bar{z}}>=i\int_{\partial{}Y}B(z)\phi{}dz,
\]
for any test function $\phi$ with compact support in $N_{z_0}$.
Notice that the last equality holds because $\phi$ is identically
zero in a neighborhood of $\partial{}N_{z_0}$, so it is only on the
common boundary of $Y$ and $Y'$ that we get a contribution to the
integral given in the last equality. This common boundary  is the singular trajectory $\gamma_{z_0}$ intersected with the neighborhood $N_{z_0}$. The integral
\[i\int_{\partial{}Y}B(z)\phi{}dz\]
is real since the change of coordinate $w=\int_{z_0}^{z}iB(\xi)d\xi$
transforms the integral to the integral of $\phi$ over a part of the
real line. This shows that $\mu$ is locally a real measure, which 
proves one implication of the theorem.

To prove  that a compactly supported real 
measure $\mu$  whose Cauchy transform satisfies \eqref{eq:imp}Ê everywhere except for a set of measure zero produces the Strebel differential $\Psi=-\frac{U_1(z)}{U_2(z)}dz^2$ consider 
 its logarithmic potential $u_{\mu}(z)$, i.e. 
\[u_{\mu}(z):=\int_{\mathbb{C}}\log|z-\xi|d\mu(\xi).\]
The function $u_{\mu}(z)$ is harmonic outside the
support of $\mu$, and  subharmonic in the whole $\mathbb{C}$. The following important relation: 
\[\frac{\partial{u_{\mu}(z)}}{\partial{z}}=\frac{1}{2}C_{\mu}(z)\]
connects  $u_{\mu}$ and $C_{\mu}$. 
It implies that the set of level curves of $u_{\mu}(z)$ coincides 
with the set of horizontal trajectories of the quadratic differential
$-C_{\mu}(z)^2dz^2$. Indeed, the gradient of $u_{\mu}(z)$ is given
by the vector  field with coordinates $\left(\frac{\partial{u_{\mu}}}{\partial{x}},\frac{\partial{u_{\mu}}}{\partial{y}}\right)$ in $\bC$. Such a vector in $\bC$ coincides with the complex number
$2\frac{\partial{u_{\mu}}}{\partial\bar{z}}$. Hence, the gradient of
$u_{\mu}$ at $z$ equals to $\bar{C}_{\mu}(z)$ (i.e. the complex conjugate of
$C_{\mu}(z)$). But this is the same as saying that the vector
$i{}\bar{C}_{\mu}(z)$ is orthogonal to (the tangent line to) the level
curve of $u_{\mu}(z)$ at every point $z$ outside the support of
$\mu$. Finally, notice that at each point $z$ one has 
$$-C_{\mu}^2(z)(i\bar{C}^2_{\mu}(z))>0,$$
 which exactly means
 that the horizontal trajectories of $-C_{\mu}(z)^2dz^2=-\frac{U_1(z)}{U_2(z)}dz^2$ are
 the level curves of $u_{\mu}(z)$ outside the support of
$\mu$.  Notice that $u_{\mu}(z)$ behaves as $\log {|z|}$ near $\infty$ and  is continuous except for
 possible second order poles where it has logarithmic singularities with a negative leading coefficient. This guarantees that almost all its level curves are closed and smooth  implying that  $-\frac{U_1(z)}{U_2(z)}dz^2$ is Strebel. 

Let us  settle part (2) of  Theorem~\ref{th:charac}.  Notice that if a real measure whose Cauchy transform satisfies  \eqref{eq:imp} a.e. is supported on the compact set $K_\Psi$ (which consists of finitely many singular trajectories and singular points)  all one needs to determine it uniquely is just  to prescribe which of the two branches of $\sqrt{\frac{U_1(z)}{U_2(z)}}$ one should choose as the Cauchy transform of this measure  in each bounded connected component of the complement $\bC\setminus K_\Psi$. (The choice of a branch in the infinite component is already prescribed by the requirement that it should behave as $\frac{1}{z}$ near infinity.)  Any such choice of branches in open domains leads to a real measure, see proof of part (1) above. 

In order to explain the details shown on  Fig. 2 notice that singular trajectories are level sets of the logarithmic potential of a real measure under consideration and, therefore, the gradient of this potential is perpendicular to any such trajectory. (More exactly, the logarithmic potential $u(z)$ is continuous at a generic point of any singular trajectory and its gradient has at least one-sided limits when $z$ tends to such a generic point. These one-sided limits are necessarily perpendicular to the trajectory and they either coincide or are the opposite of each other.)  So the choice of a branch of $\sqrt{\frac{U_1(z)}{U_2(z)}}$  in some connected component of $\bC\setminus K_\Psi$ can be uniquely determined and restored from the choice of direction of the gradient of the logarithmic potential near some singular trajectory belonging to the boundary of this component. One can easily see that if gradients on both sides of a certain singular trajectory belonging to $K_\Psi$ have the same direction then one chooses the same branch of $\bC\setminus K_\Psi$ on both sides and its $\partial \bar z$-derivative vanishes  on this singular trajectory leaving it  outside the support of the corresponding measure. If these gradients have opposite directions then in case when they are both directed away from the trajectory the measure on this trajectory will be positive and in case when they  are both directed towards the trajectory the measure on it will be negative. These observations explain how to obtained all supports and signs of measures appearing as the result of $2^{d-1}$ different choices of branches in $d-1$ bounded connected components.  \end{proof}

Using the latter observations we now prove Theorem~\ref{pr:positive} and   formulate a criterion of existence of a positive measure in terms of certain properties of $K_\Psi$. We need some new definitions. 

\begin{df} Associate to each connected component (alias {\it domain}) of $\bC\setminus K_\Psi$  its {\it depth}  as follows. 
Set the depth of the infinite component $A_\infty$ to $0$. Each component neighboring to the infinite one gets depth $1$. Each new component neighboring to a component with depth $i$ gets the depth $i+1$ etc. 
\end{df} 

\begin{df}  We call a singular trajectory in $K_\Psi$ {\em dividing} if it belongs to the closure of two different domains  and {\em non-dividing} otherwise. 
\end{df}

Notice that since any  domain is homeomorphic to an open annulus then its boundary consists of two connected components which are either homotopy equivalent to $S^1$ or to a point. We will call  the one which separates the considered domain from the domain of the smaller depth  the {\em outer} boundary and the other one the {\em inner} boundary.  

\begin{df} A non-dividingÊ singular trajectory is called {\em preventing} (resp. {\em non-preventing}) if it is a part of the outer (resp. inner) boundary of its open component of $\bC\setminus K_\Psi$. 
\end{df}

\begin{center}
\begin{picture}(440,170)(0,0)
\put(130,140){\line(1,0){60}}
\put(130,140){\circle*{3}}
\put(190,140){\circle*{3}}

\qbezier(160,165)(210,195)(190,140)
\qbezier(160,165)(110,195)(130,140)
\qbezier(160,105)(210,110)(190,140)
\qbezier(160,105)(110,110)(130,140)
\put(160,165){\circle*{3}}
\put(160,165){\line(0,-1){12}}
\put(160,153){\circle*{3}}

\put(180,150){\circle*{3}}
\put(140,150){\circle*{3}}
\qbezier(140,150)(160,142)(180,150)

\put(140,125){\circle*{3}}
\put(180,125){\circle*{3}}
\qbezier(140,125)(160,115)(180,125)

\put (210, 152){\vector(-3,-1){30}}
\put(220,152){Dividing trajectory separating}
\put(220,142) {domains of the same depth = 1}

\put (98, 160){\vector(1,0){60}}
\put(48,160) {Preventing}
\put(48,150)  {trajectory}

\put(50,60){Figure 4. Some types of singular trajectories}
\end{picture}
\end{center}

We can finally formulate a criterion of the existence of a positive measure.

\begin{prop}\label{pr:crit} A Strebel differential $\Psi=-\frac{U_1(z)}{U_2(z)}dz^2$ admits a positive measure satisfying \eqref{eq:imp} if and only if
\begin{itemize}
\item no dividing singular trajectory separates two open domains of the same depth;

\item each non-dividing singular trajectory is non-preventing.

\end{itemize}Ê \end{prop}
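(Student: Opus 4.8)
The plan is to translate the analytic condition ``$\Psi$ admits a positive measure satisfying \eqref{eq:imp}'' into a purely combinatorial/topological condition on the sign pattern forced on $K_\Psi$. By Theorem~\ref{pr:positive} any positive measure must be supported in $K_\Psi$, so by part (2) of Theorem~\ref{th:charac} the candidate measures are precisely the $2^{d-1}$ measures obtained by choosing, in each of the $d-1$ bounded domains of $\bC\setminus K_\Psi$, a branch of $\sqrt{U_1/U_2}$ (the infinite domain being fixed by the $1/z$ normalization at $\infty$). Thus the whole problem reduces to: for which configurations $K_\Psi$ does there exist a global choice of branches so that the induced density is nonnegative on every singular trajectory in the support? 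First I would recall from the proof of Theorem~\ref{th:charac} the local sign rule: on a singular trajectory $\gamma$ bounding domains $D,D'$ the sign of the measure is determined by the directions of the two one-sided gradients of $u_\mu$; the trajectory carries positive mass exactly when both gradients point \emph{away} from $\gamma$, negative mass when both point \emph{toward} it, and $\gamma$ drops out of the support when the two chosen branches agree.

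The key technical device I would set up is a consistent way to read off the gradient direction across each trajectory from the depth function and the chosen branch. Fixing the branch near $\infty$ so that $u_\mu\sim\log|z|$ grows outward, the gradient in $A_\infty$ points outward. Moving inward across successive domains, each choice of branch in a bounded domain either preserves or reverses the ``outward'' sense of the gradient relative to the depth gradient. I would then analyze each type of trajectory separately. For a \emph{dividing} trajectory separating domains of depths $i$ and $i+1$, one can orient the gradients so that both point away from $\gamma$ (giving positivity) precisely because the two domains sit at different ``levels,'' and this is compatible with a consistent branch assignment; but a dividing trajectory separating two domains of the \emph{same} depth forces, after going around, two gradients that cannot both point outward, so any branch choice yields a negative (or sign-changing) contribution there. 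This is the source of the first bullet. For a \emph{non-dividing} trajectory $\gamma$ (both sides in the same domain $D$), the sign is governed by whether $\gamma$ lies on the outer or inner boundary of $D$: on the inner (non-preventing) boundary the gradient naturally points away on both sides and positivity is attainable, whereas on the outer (preventing) boundary the geometry forces the gradient toward $\gamma$, yielding negative mass. This gives the second bullet.

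With these local analyses in hand, the forward direction follows: if either forbidden feature is present, every one of the $2^{d-1}$ branch assignments is forced to produce negative mass on the offending trajectory, so no positive measure exists. For the converse I would argue constructively: assuming no dividing trajectory separates equal-depth domains and every non-dividing trajectory is non-preventing, I would define the branch in each bounded domain inductively by increasing depth, at each step choosing the branch that makes the gradient point from lower depth toward higher depth (``uniformly outward''), and then verify using the two bullet hypotheses that this choice makes \emph{every} singular trajectory in the resulting support carry nonnegative mass. The depth function being well-defined and the equal-depth exclusion guarantee that this inductive assignment is globally consistent (no contradictory demands on a single domain), and the non-preventing hypothesis guarantees the leftover non-dividing trajectories contribute positively rather than negatively.

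The step I expect to be the main obstacle is establishing that the inductive depth-by-depth branch assignment is globally \emph{consistent}, i.e.\ that the local positivity demands coming from the various trajectories bounding a single domain never conflict. Because a bounded domain is an annulus, its boundary has an outer and an inner part, and a single branch choice in that domain simultaneously fixes the gradient direction on both; one must check that the hypotheses (no equal-depth dividing trajectory, all non-dividing trajectories non-preventing) are exactly what rule out the situation where the outer boundary wants one branch and the inner boundary wants the opposite. Making this annular monodromy argument precise — tracking how the sign of the chosen square-root branch transforms as one crosses each trajectory and returns around the annulus — is the delicate part, and it is precisely where the two stated conditions enter as the necessary and sufficient obstruction.
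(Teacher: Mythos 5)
Your framework is the same as the paper's: by Theorem~\ref{pr:positive} and part (2) of Theorem~\ref{th:charac} the candidates are exactly the $2^{d-1}$ measures obtained by choosing a branch of $\sqrt{U_1/U_2}$ in each bounded component of $\bC\setminus K_\Psi$, and the sign on each singular trajectory is read off from the two one-sided gradients of the logarithmic potential. But your necessity direction has a genuine gap. You assert that if a forbidden trajectory is present then \emph{every} branch assignment produces negative mass \emph{on that trajectory}. This is false: by flipping the branch in one of the adjacent components you can make the two one-sided gradients agree, so that the offending trajectory drops out of the support altogether (the paper's own Fig.~2 shows this: the central segment is a dividing trajectory between two equal-depth domains, yet it is absent from the support of the ``2nd measure''). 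What is actually needed --- and what the paper supplies --- is a finite descent on depth showing that positivity \emph{forces} the gradient of $u_\mu$ to point out of every domain across its outer boundary: if it pointed inward on the outer boundary of some domain, then, the domain being an annulus foliated by closed level curves of $u_\mu$, it points inward on the inner boundary as well, which either creates negative density there or transfers the same dilemma to a domain of strictly larger depth; since there are finitely many domains, negativity appears somewhere. Only after this forced normalization do the two bullets become the local statements you describe (an equal-depth dividing trajectory, resp.\ a preventing trajectory, then has both gradients pointing toward it, hence negative density). Your proposal uses the conclusion of this descent without proving it.

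Two further corrections. First, your local analysis of a dividing trajectory between depths $i$ and $i+1$ is wrong in detail: under the correct outward normalization the two one-sided gradients coincide as vectors, so such a trajectory carries \emph{zero} mass and leaves the support, rather than carrying positive mass as you claim; relatedly, your prescription ``gradient pointing from lower depth toward higher depth'' is the inward choice, the opposite of what positivity requires. Second, the step you single out as the main obstacle --- global consistency of the branch assignment around each annulus --- is not an obstacle at all: part (2) of Theorem~\ref{th:charac} already guarantees that the branches in the $d-1$ bounded components may be chosen independently and that any choice yields a real measure satisfying \eqref{eq:imp}. The sufficiency direction is therefore just the verification that the outward choice leaves in the support only non-dividing, non-preventing trajectories, each with positive density; the real difficulty of the proposition lies in the necessity direction, which is precisely the part your argument skips.
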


\begin{proof} To get a positive measure $\mu$ one needs that the gradient of its logarithmic potential 
$u_\mu(z)$ is directed outwards on the outer boundary of each domain.  Indeed, if it is directed inward on the outer boundary then it is also directed inward on the inner boundary which leads to negativity of the density (probably on the boundary of some domain with a higher  depth). That  is  if this  inner boundary is not the outer boundary of a higher level then you get a negative density. In case it is the outer boundary of a domain of the next level you can  escape negativity on this inner boundary by directing the gradient inside on the next level. But that means that the situation persists on the next level. 
SInce the number of domains is finite one gets a negative density anyway.  Let us show that the above two conditions are necessary. Indeed, if you have a dividing trajectory separating two domains of the same depth you get the gradient directed towards it from both sides which gives  a negative density. Analogously, if you have a preventing trajectory then it gets negative density for the same reason. 

Assume now that none of this happens. Then our choice of branches of $\C_\mu$ (outward in each domain) determines the support of $\mu$ uniquely. Namely, all dividing singular trajectories  disappear since they connect different levels. What remains  are some non-dividing trajectories connected to the inner part on the deepest level but they all have positive density.
\end{proof}

\begin{proof}[Proof of Theorem~\ref{pr:positive}] It follows immediately from the proof of Proposition~\ref{pr:crit}. The only detail which needs to be clarified is why the support of a positive measure whose Cauchy transform satisfies 
\eqref{eq:imp} a.e. must necessarily belong to $K_\Psi$. Indeed, by general results of \cite {BR}Ê the support is locally made of horizontal trajectories of $\Psi$. Moreover, it must be constructed of the whole trajectories since its vertex is necessarily a singular point of $\Psi$. Assume that this support contains some  closed trajectory which is not in $K_\Psi$. Among such closed trajectories there should be one which is the inner boundary of the infinite domain. Then inside it we should choose the branch of $\sqrt{R(z)}$ different from that in the infinite domain. Thus the gradient of the logarithmic potential inside this trajectory should be directed inward.  But then the same argument  as  in the proof of Proposition~\ref{pr:crit}Ê shows that there is a part of the support of this measure lying inside the considered trajectory where the measure is forced to be negative.
\end{proof}

           \section {Proving Theorem~\ref{th:higRull}}
	 \label{sec:proofs}   

Our scheme follows roughly the
scheme suggested in \cite {BR}.  We  need to prove that under the assumptions of Theorem~\ref{th:higRull} the sequence $\{\nu_{n,i_n}\}$ of root-counting measures of the sequence of Stieltjes polynomials $\{S_{n,i_n}(z)\}$ converges weakly to a probability 
	   measure $\nu_{\tilde V}$ whose Cauchy transform $\C_{\widetilde V}(z)$ satisfies almost everywhere in $\bC$ the equation 
	   \begin{equation}\label{eq:again}
	   \C_{\widetilde V}^2(z)=\frac {\widetilde V(z)}{Q(z)}.
	   \end{equation}
	Since such a measure is positive it is unique by Theorem~\ref{th:charac}  which implies  Corollary~\ref{cor:main}.  
	   
	   To simplify the notation  we denote by  $\{\bar S_n(z)\}$  the chosen sequence  $\{S_{n,i_n}(z)\}$ of  Stieltjes polynomials whose sequence of normalized Van Vleck polynomials $\{\tilde V_{n,i_n}(z)\}$ converge to $\widetilde V(z)$ and let $\{\bar \nu_n\}$ denote the sequence of its root-counting measures. Also let $\bar \nu_n^{(i)}$ be the root measure of the $i$th derivative $\bar S^{(i)}_n(z)$. Assume now that $NN$ is a subsequence of natural numbers such that 
	   $$\bar \nu^{(j)}=\lim_{n\to \infty,n\in NN}\bar \nu_n^{(j)}$$
	   exists for $j=0,1,2$. 
	   %There are plenty of such sequences since the roots of all $S_{n,i}$ are contained in the same compact in $\bC$, see Proposition~\ref{pr:local}. 
	   The next lemma shows that   the Cauchy transform of $\bar \nu=\bar \nu^{(0)}$ satisfies the required algebraic equation.
	   
	   \begin{lemma}\label{lm:CauEq} The measures $\bar \nu^0, \bar \nu^1, \bar \nu^2$ are all equal and the Cauchy transform $\C_{\tilde V}(z)$ of their common limit satisfies the equation \eqref{eq:again}  for almost every $z$. 
	   	   \end{lemma}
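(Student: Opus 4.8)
The plan is to work with the Cauchy transforms of the measures $\bar\nu_n^{(j)}$ and to pass to the limit in the Heun equation after a suitable renormalization. Write $S_n:=\bar S_n$ for the chosen monic Stieltjes polynomials, $\deg S_n=n$, and set
\[
C_n^{(j)}(z):=\frac{1}{n-j}\,\frac{S_n^{(j+1)}(z)}{S_n^{(j)}(z)},\qquad j=0,1,2,
\]
which is exactly the Cauchy transform $\C_{\bar\nu_n^{(j)}}(z)$ of the root-counting measure of $S_n^{(j)}$ (degree $n-j$). By the Gauss--Lucas theorem the zeros of every $S_n^{(j)}$ lie in the convex hull of the zeros of $S_n$, and by part (2) of Theorem~\ref{th:my} the latter lie in a fixed compact neighborhood $K$ of $\Delta_Q$. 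Hence all measures $\bar\nu_n^{(j)}$, and their weak limits $\bar\nu^{(j)}$ along $NN$, are supported in $K$, and each $C_n^{(j)}$ is analytic and uniformly bounded on compact subsets of $\bC\setminus K$; moreover the zeros of $C_n^{(j)}$, being critical points of $S_n^{(j)}$, also lie in $K$.

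First I would show that $\bar\nu^{(0)}=\bar\nu^{(1)}=\bar\nu^{(2)}$, the classical fact that passing to a derivative does not alter the asymptotic root distribution. On any compact subset of $\bC\setminus K$ the quotient $S_n^{(j)}/S_n$ is a product of $j$ factors $S_n^{(i+1)}/S_n^{(i)}=(n-i)C_n^{(i)}$, each comparable to $n$, so $\frac1n\log|S_n^{(j)}/S_n|\to0$ there. Consequently the normalized logarithmic potentials $\frac1n\log|S_n^{(j)}|$ all share one common limit $u$ off $K$; since each is the potential of a probability measure supported in $K$, they coincide everywhere, and taking $\frac1{2\pi}$ of the distributional Laplacian gives $\bar\nu^{(0)}=\bar\nu^{(1)}=\bar\nu^{(2)}=:\bar\nu$ with common Cauchy transform $\C_{\tilde V}:=\C_{\bar\nu}$.

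Next comes the algebraic relation. Dividing \eqref{eq:Heun} (with $V=V_n$, $S=S_n$) by $n^2S_n$ and rewriting through the $C_n^{(j)}$ yields
\[
\frac{n-1}{n}\,Q(z)\,C_n^{(1)}(z)\,C_n^{(0)}(z)\;+\;\frac1n\,P(z)\,C_n^{(0)}(z)\;+\;\frac{V_n(z)}{n^2}\;=\;0 .
\]
Matching the coefficients of $z^{n+1}$ in \eqref{eq:Heun} shows that the leading coefficient of the Van Vleck polynomial $V_n$ equals $-n(n-1)-p_2 n$, where $p_2$ is the leading coefficient of $P$; since the normalized $\widetilde V_n$ converge to $\widetilde V$, this gives $V_n/n^2\to-\widetilde V$ locally uniformly. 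The middle term tends to $0$ because $C_n^{(0)}$ stays bounded, and along $NN$ the Cauchy transforms converge (locally uniformly off the support) to $\C_{\tilde V}$. Passing to the limit therefore gives $Q\,\C_{\tilde V}^2=\widetilde V$ on the complement of $\mathrm{supp}\,\bar\nu$; as this support is a Lebesgue-null set, the identity \eqref{eq:again} holds almost everywhere.

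The main obstacle is the passage to the limit in the nonlinear term $Q\,C_n^{(1)}C_n^{(0)}$: weak convergence of the $\bar\nu_n^{(j)}$ (equivalently $L^1_{\mathrm{loc}}$-convergence of the Cauchy transforms) does not by itself permit multiplying two limits. The point to secure is that, since each $C_n^{(j)}$ is analytic and nonvanishing off the support of $\bar\nu_n^{(j)}$ and uniformly bounded on compacts avoiding $K$, a normal-families (Vitali) argument upgrades the convergence to \emph{locally uniform} convergence on $\bC\setminus\mathrm{supp}\,\bar\nu$, where products do pass to the limit. Controlling the convergence up to---but off---the support, and thereby justifying that the limiting identity holds on a set of full measure, is the delicate step; the equality of the three derivative measures and the identification of the renormalization $V_n/n^2\to-\widetilde V$ are then comparatively routine.
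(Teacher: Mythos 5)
There is a genuine gap in your first step, the equality $\bar \nu^{(0)}=\bar \nu^{(1)}=\bar \nu^{(2)}$. You argue that the normalized potentials $\frac1n\log|S_n^{(j)}|$ have a common limit on compact sets \emph{off} $K$ and then conclude that the limit measures ``coincide everywhere'' because they are probability measures supported in $K$. That implication is false: two distinct probability measures supported in a compact set can have identical potentials outside it (balayage). The standard counterexample is exactly of the form you are handling: for $p_m(z)=z^m-1$ the root measures of $p_m$ and $p_m'$ converge to the uniform measure on the unit circle and to $\delta_0$ respectively; their potentials $\max(\log|z|,0)$ and $\log|z|$ agree off the closed unit disk, yet the measures differ. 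So agreement of the potentials off $K$, even combined with the monotonicity $u^{(0)}\ge u^{(1)}\ge u^{(2)}$ of Lemma~\ref{lm:logpot}, does not give equality of the measures. The paper closes this by using the differential equation itself: from \eqref{eq:Heun} one gets $\bar S_n''/(n(n-1)\bar S_n)\to \widetilde V/Q$ pointwise a.e.\ \emph{in all of} $\bC$, hence $u^{(2)}-u^{(0)}=\lim \frac1n\log|\bar S_n''/(n(n-1)\bar S_n)|=0$ everywhere, and only then does the sandwich $u^{(0)}\ge u^{(1)}\ge u^{(2)}$ force all three potentials, and hence all three measures, to coincide. Your argument never establishes anything on $K$ itself, which is where the measures live.

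The second gap is in the passage to the limit of the nonlinear term. You obtain the identity only on $\bC\setminus\mathrm{supp}\,\bar\nu$ and then assert that this support is Lebesgue-null; that is not known at this stage (it is a consequence of the structure theory, not an input), so your ``almost everywhere'' is unjustified and borders on circular. The paper's route avoids both the normal-families machinery and the null-support assumption: since $\bar S_n^{(j+1)}/((n-j)\bar S_n^{(j)})\to \C_{\bar\nu^{(j)}}$ in $L^1_{loc}(\bC)$, one passes to a further subsequence converging pointwise a.e.\ in $\bC$; the product of two pointwise a.e.\ convergent sequences converges a.e.\ to the product of the limits, and the telescoping $\frac{\bar S_n'}{n\bar S_n}\cdot\frac{\bar S_n''}{(n-1)\bar S_n'}=\frac{\bar S_n''}{n(n-1)\bar S_n}$ identifies that product with $\widetilde V/Q$ a.e.\ via the ODE. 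Your leading-coefficient computation for $V_n$ and the vanishing of the $P$-term are correct and match the paper, but the two steps above need to be repaired along these lines for the proof to stand.
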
 
	   
	   \begin{proof}ÊWe have 
	   $$\frac {\bar S_n^{(j+1)}(z)}{(n-j)\bar S_n^{(j)}(z)}\to \int\frac {d\bar\nu^{(j)}(\zeta)}{z-\zeta}$$ 
	   with convergence in $L^{1}_{loc}$, and by passing to a subsequence again we can assume that we have pointwise convergence almost everywhere. From the relation 
	   $$Q(z) \bar S_n^{\prime\prime}(z)+P(z)\bar S^\prime_n(z)+V_n(z)\bar S_n(z)=0$$
	    it follows that 
	   $$\frac{Q(z)\bar S^{\prime\prime}_n(z)}{n(n-1)\bar S_n(z)}+\frac{V_n(z)}{n(n-1)}= -
	   \frac{P(z)\bar S_n^\prime(z)}{n(n-1)\bar S_n(z)}.$$
	      
	      One can immediately check that $-\frac{V_n(z)}{n(n-1)}\to \widetilde V(z)$, while the expression in the right-hand side converges pointwise to $0$ almost everywhere in $\bC$ due to presence of the factors $n(n-1)$ in the denominator. Thus,  for almost all $z\in\bC$ one has 
	      $$ \frac{\bar S^{\prime\prime}_n(z)}{n(n-1)\bar S_n(z)}\to \frac {\widetilde V(z)}{Q(z)}$$ 
	      when $n\to\infty $ and $n\in NN$. If $u^{(j)}$ denotes the logarithmic potential of $\bar \nu^{(j)}$, then 
	      one has 
	      $$u^{(2)}-u^{(0)}=\lim_{n\to\infty} \frac 1 n\log \left\vert \frac {\bar S^{\prime\prime}_n(z)}{n(n-1)\bar S_n(z)}\right\vert=\lim_{n\to\infty}\frac 1 n 
	      \left(\log\vert \widetilde V(z)\vert -\log \vert Q(z)\vert\right)=0.$$
	      On the other hand we have that $u^{(0)}\ge u^{(1)}\ge u^{(2)}$, see Lemma~\ref{lm:logpot} below. Hence all the potentials $u^{(j)}$ are equal, and all $\nu_j=\Delta u^{(j)}/2\pi$ are equal as well. Finally we get 
	      $$C_{\tilde V}^{2}(z)=\lim_{n\to\infty}\frac{\bar S_n^{\prime}(z)}{n\bar S_n(z)}\cdot \frac{\bar S_n^{\prime\prime}(z)}{(n-1)\bar S_n^\prime(z)}=\lim_{n\to\infty}\frac {\bar S_n^{\prime\prime}(z)}{n(n-1)\bar S_n(z)}=\frac {\widetilde V(z)}{Q(z)}$$ 
	      for almost all $z$. 
	      	      
\begin{lemma} [\rm{see Lemma 8 of \cite{BR}}] \label{lm:logpot} Let $\{p_m\}$ be a sequence of polynomials, such that $n_m:=\deg p_m\to \infty$ and there exists a compact set $K$ containing the zeros of all $p_m$. Finally, let $\mu_m$ and $\mu'_m$ be the root-counting measures of $p_m$ and $p_m'$ resp. If $\mu_m\to\mu$ and $\mu_m'\to\mu'$ with compact support and $u$ and $u'$ are the logarithmic potentials of $\mu$ and $\mu'$, then $u'\le u$ in the whole $\bC$. Moreover,  $u=u'$ in the unbounded component $A_\infty$ of $\bC\setminus \text{supp } \mu$. 
\end{lemma} 

\begin{proof} Assume wlog that $p_m$ are monic. Let $K$ be a compact  set containing the zeros of all $p_m$.  We have 
$$u(z)=\lim_{m\to\infty}\frac{1}{n_m}\log \vert p_m(z)\vert$$ 
and 
$$u'(z)=\lim_{m\to\infty}\frac{1}{n_m-1}\log\left\vert\frac{p'_m(z)}{n_m}\right\vert=\lim_{m\to\infty}\frac{1}{n_m}\log\left\vert\frac{p'_m(z)}{n_m}\right\vert$$
with convergence in $L_{loc}^1$.  Hence 
$$u'(z)-u(z)=\lim_{m\to\infty}\frac{1}{n_m}\log\left\vert\frac{p'_m(z)}{n_m p_m(z)}\right\vert=\lim_{m\to\infty}\frac{1}{n_m}\log\left\vert \int \frac{d\mu_m(\zeta)}{z-\zeta}\right\vert.$$Ê
Now, if $\phi$ is a positive test function it follows that
$$\int\phi(z)(u'(z)-u(z))d\la(z)=\lim_{m\to\infty}\int\phi(z)\log\left\vert\int \frac{d\mu_m(\zeta)}{z-\zeta}\right\vert d\la(z)\le $$
$$\le \lim_{m\to\infty}\int\phi(z)\int \frac{d\mu_m(\zeta)}{\vert z-\zeta \vert} d\la(z)\le 
\lim_{m\to\infty}\iint\frac{\phi(z)d\la(z)}{\vert z-\zeta\vert}  d\mu_m(\zeta)$$
where $\la$ denotes Lebesgue measure in the complex plane. Since $\frac{1}{\vert z\vert}$ is locally integrable, the function $\int \phi(z)\vert z-\zeta\vert^{-1}d\la(z)$ is continuous, and hence bounded by a constant $M$ for all $z$ in $K$. Since $\text{supp }\mu_m\in K$, the last expression in the above inequality is bounded by $M/n_m$, hence the limit when $m\to\infty$ equals to $0$. This proves $u'\le u$. 

In the complement to $\text{supp }\mu$, $u$ is harmonic and $u'$ is subharmonic, hence $u'-u$ is a negative subharmonic function. Moreover, in the complement of $K$, $p_n'/(n_mp_m)$ converges uniformly on compact sets to the Cauchy transform $\C_\mu(z)$ of $\mu$. Since $\C_\mu(z)$ is a non-constant  holomorphic function in the unbounded component $A_\infty$ of $\bC\setminus K$, then by the above $u'-u=0$ there. By the maximum principle for subharmonic functions it follows that $u'-u=0$ holds in the unbounded component of $\bC\setminus \text{supp }\mu$ as well. 
\end{proof}Ê

	      To accomplish the proof of Theorem~\ref{th:higRull} we need to show that we have the convergence for the whole sequence and not just for some subsequence.  Assume now that the sequence $\bar \nu_n$ does not converge to $\bar \nu$. Then we can find a subsequence  
	       $NN'$ such that $\bar \nu_n$ stay away from some fixed neighborhood of $\bar \nu$ in the weak topology, for all $n\in NN'$. Again by compactness, we can find a subsequence $NN^{*}$ of $NN'$ such that all the limits for root measures for derivatives 
	       exist for $j=0,...,k$. But then $\bar \nu^{(0)}$ must coincide with $\bar \nu$ by the uniqueness and the latter lemma. We get a contradiction to the assumption that $\nu_n$ stay away from $\bar \nu$ for all $n\in NN'$ and hence all $n\in NN$.  	   
	   \end{proof}

\section{On Strebel differentials of the form $\Psi=\frac{(b-z)dz^2}{(z-a_1)(z-a_2)(z-a_3)}$}

The main result of this section is as follows. 

\begin{theorem}\label{th:struct} For a given Strebel differential $\Psi$ as in the title  the union of its singular trajectories starting at $a_1,a_2,a_3$  is contained in  the convex hull $\Delta_Q$ of these roots 
if and only if $b\in \Ga_Q$ where  $Q(z)= (z-a_1)(z-a_2)(z-a_3)$. 

\end{theorem}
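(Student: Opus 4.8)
The plan is to read the combinatorics of the critical graph $K_\Psi$ off the local structure of $\Psi$ at its singular points, and then to control the position of the pole-trajectories relative to the triangle $\Delta_Q$ by means of the distinguished parameter $w=\int\sqrt{\Psi}$. First I would record the local pictures: at each simple pole $a_m$ exactly one horizontal trajectory issues (in the distinguished parameter the pole is a branch point and only the single ray $\Im w=0$ reaches it); at the simple zero $b$ three trajectories leave at mutual angles $2\pi/3$; and at $\infty$, where $\Psi\sim-z^{-2}dz^2$, there is a center, so by Lemma~\ref{lemma2} the set $K_\Psi$ is compact and no singular trajectory escapes to $\infty$. Counting trajectory-ends ($3$ from the poles, $3$ from $b$) then shows that $K_\Psi$ consists of exactly three singular arcs.

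Next I would set up the dictionary between the defining integrals of the $\gamma_i$ and the connectivity of $K_\Psi$. For a fixed single-valued branch of $\sqrt{\Psi}$, the points $a_j,a_k$ lie at the same height $\Im w$ --- equivalently, they are joinable by a horizontal singular trajectory --- precisely when $\int_{a_j}^{a_k}\sqrt{\Psi}\in\bR$, i.e. when $b\in\gamma_i$ (the segment-versus-trajectory ambiguity is immaterial, since it only changes the integral by periods, which are real along the arcs in question). Combined with the arc count, this forces the only two Strebel configurations: the \emph{star} at $b=b_0=\gamma_1\cap\gamma_2\cap\gamma_3$, where each $a_m$ is joined to $b_0$; and, for $b\in\gamma_k\setminus\{b_0\}$, the configuration with one arc joining $a_i,a_j$, one arc joining $a_k$ to $b$, and a loop based at $b$ --- exactly the alternatives of Proposition~\ref{pr:struc}. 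Since the union of the pole-trajectories is the arc $a_ia_j$ together with the arc $a_kb$ (or the three arcs to $b_0$), the theorem reduces to locating these arcs with respect to $\Delta_Q$.

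For the implication $b\in\Gamma_Q\Rightarrow$ containment, I would argue by deformation along $\Gamma_k$ anchored at the degenerate endpoint $b=a_k$. There zero and pole cancel, $\Psi=-dz^2/\big((z-a_i)(z-a_j)\big)$, whose unique critical trajectory is the straight edge $[a_i,a_j]\subset\partial\Delta_Q$; as $b$ moves inward along $\Gamma_k$ the arc $a_ia_j$ detaches from the edge and the arc $a_kb$ opens up with $b\in\Delta_Q$. To see that nothing leaves the triangle I would use that singular trajectories are level curves of the logarithmic potential $u_\mu$ of the associated real measure (from the proof of Theorem~\ref{th:charac}), invoking a maximum-principle argument: harmonicity of $u_\mu$ off its support, together with the fact that $u_\mu$ has no suitable critical level in the exterior, should prevent a level curve joining two vertices from crossing an edge into the outside of $\Delta_Q$. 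For $b=b_0$ the same reasoning applies to each of the three arcs $a_mb_0$, and $b_0\in\Delta_Q$ by Lemma~\ref{lm:inter}.

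Finally, for the converse I would argue contrapositively: a Strebel $\Psi$ with $b\notin\Gamma_Q$ must, by the configuration analysis, have $b\in\gamma_k\setminus\Gamma_k$ for some $k$, so $b$ lies on the branch of $\gamma_k$ beyond $b_0$ or outside $\Delta_Q$. Tracking the arc $a_ia_j$ through $b=b_0$ I expect it to switch from bulging toward $a_k$ (inside) to bulging away (outside), or else $b$ itself leaves $\Delta_Q$ so that the arc $a_kb$ exits; either way containment fails. The main obstacle is precisely this quantitative step --- proving that the arcs lie \emph{strictly inside} exactly for $b\in\Gamma_k$ --- which turns the abstract maximum-principle sketch above into a genuine statement. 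I would attack it by analyzing the height function $\Im w$ along the three edges of $\partial\Delta_Q$: a pole-trajectory can meet $\partial\Delta_Q$ only where an edge is tangent to the horizontal direction, and showing that such tangencies are excluded on $\Gamma_k$ but forced on $\gamma_k\setminus\Gamma_k$, together with the fact (tied to Lemma~\ref{lm:inter}) that the relevant branch of $\gamma_k$ crosses $\partial\Delta_Q$ exactly at $b_0$, is the crux of the whole argument.
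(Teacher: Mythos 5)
Your combinatorial analysis of $K_\Psi$ (one horizontal trajectory per simple pole, three at the simple zero $b$, closed trajectories near $\infty$, hence either the star at $b_0$ or the configuration arc--arc--loop) agrees with the paper and is sound. But the two steps that carry the real weight of the theorem are not established. First, your ``dictionary'' --- $a_j$ and $a_k$ are joined by a horizontal trajectory if and only if $\int_{a_j}^{a_k}\sqrt{\Psi}\in\bR$ along the straight segment --- is justified by the remark that replacing the segment by the trajectory ``only changes the integral by periods, which are real along the arcs in question.'' That is false in general: the periods of $\sqrt{(b-t)/Q(t)}\,dt$ are periods of an elliptic curve and are generically non-real (Lemma~\ref{lm:trans} exploits precisely the fact that the ratio of two such periods is never real). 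The discrepancy between the two integrals is a genuine period exactly when the segment and the trajectory bound a region containing $b$ or $a_i$, which is the situation one must rule out. The paper closes this gap by a continuity argument anchored at the degenerate point $b=a_i$ (where the trajectory is the straight edge $\overline{a_ja_k}$) together with a contradiction: if the trajectory from $a_k$ misses $a_j$, it winds around $a_j$ and meets the vertical trajectory emanating from $a_j$ at a point $d$; writing $f_{jk}(b)$ as $\int_{a_j}^{d}+\int_{d}^{a_k}$ along vertical and horizontal trajectories and using strict monotonicity of $\Re w$ and $\Im w$ along them forces $\Im f_{jk}(b)\neq 0$, contradicting $b\in\gamma_i$.

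Second, the containment in $\Delta_Q$ --- the actual assertion of the theorem, not merely the connectivity of $K_\Psi$ --- is left by you as an admitted ``main obstacle,'' and the maximum-principle sketch does not supply it: harmonicity of $u_\mu$ off its support constrains the level curves \emph{outside} the support, but says nothing about where the support itself sits, which is what is to be proved. The paper handles this with concrete estimates: the tangent-direction computation showing that the unique trajectory leaving each pole points into $\Delta_Q$; the monotonicity Lemma~\ref{lem:lb} (the imaginary part of $f_{jk}$ is strictly monotone along any line parallel to $\overline{a_ja_k}$ and has a definite sign outside $\Delta_Q$), which pins down where $\gamma_i$ can meet the triangle; and, for $b\in D_j$, a comparison of the trajectory emanating from $a_k$ for $\Psi$ with the corresponding trajectory for $\widetilde\Psi=\frac{\tilde b-z}{Q(z)}\,dz^2$, $\tilde b\in\Gamma_i$, which forces the former to cross the side $\overline{a_ja_k}$ and exit $\Delta_Q$. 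These are the steps your proposal would need to carry out; as written it correctly identifies the possible configurations but proves neither direction of the equivalence.
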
Ê

\begin{rem}
For the definition of $\Ga_Q$ see Introduction. The proof below was suggested by the second author. A completely different proof was later found by Y.~Baryshnikov based on his interpretation of interval exchange transformations in our situation.  
\end{rem}

Assume that the points $a_1, a_2, a_3$ are not collinear in the complex plane. 
Let $i,j,k$ be a permutation of $1,2,3$.
%Let $i,j,k$ be the integers such that $(i,j,k)=(1,2,3)$, $(2,3,1)$ or $(3,1,2)$.
Recall that we  defined the curve $\ga_i,\; i=1,2,3$ by  the condition: 
\begin{align}
\Im f_{j,k}(b)=0 \text{ \quad  where  \quad} &f_{jk}(b)= \int _{a_j}^{a_k}  \sqrt{\frac{b-t}{(t-a_1)(t-a_2)(t-a_3)}} dt.
\label{eq:ba123ReaN}
\end{align}

\begin{lemma}\label{lm:trans} Each $\ga_i$ is smooth and they can only intersect transversally.
\end{lemma}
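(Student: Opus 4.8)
The plan is to treat each curve $\ga_i$ as a real level set of a harmonic function and apply the implicit function theorem, using the structure of the abelian integral in \eqref{eq:ba123ReaN}. First I would fix $i$ (with $j,k$ the remaining indices) and regard the function $f_{jk}(b)$ as an analytic function of the complex parameter $b$ on a suitable domain. The curve $\ga_i$ is defined by $\Im f_{jk}(b)=0$, so smoothness at a point $b_*\in\ga_i$ follows from the implicit function theorem provided $\nabla(\Im f_{jk})(b_*)\neq 0$, equivalently provided $f_{jk}'(b_*)\neq 0$ (since for an analytic $f$ the gradient of $\Im f$ vanishes exactly where $f'=0$). Thus the crux of the smoothness claim is to show that $f_{jk}'(b)$ does not vanish on $\ga_i$.

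To compute $f_{jk}'(b)$ I would differentiate under the integral sign. Since only the numerator $\sqrt{b-t}$ depends on $b$, one gets
\[
f_{jk}'(b)=\frac{1}{2}\int_{a_j}^{a_k}\frac{dt}{\sqrt{(b-t)(t-a_1)(t-a_2)(t-a_3)}}.
\]
This is (up to normalization) a period of a holomorphic differential on the elliptic curve $y^2=(b-t)Q(t)$, and such a period is nonzero for generic $b$; indeed a holomorphic differential on a genus-one curve has no zeros, so its integral over the cycle represented by the segment $[a_j,a_k]$ can only vanish if that cycle is homologically trivial or the endpoints force a degeneration. The endpoints $a_j,a_k$ are branch points of the differential, so the relevant integral is a genuine (half-)period and is nonzero away from the degenerate configurations where $b$ collides with one of the $a_\ell$. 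Hence $f_{jk}'(b)\neq 0$ along $\ga_i$, giving smoothness. The main obstacle I anticipate is making this non-vanishing of the period rigorous and global rather than merely generic; this likely requires a careful branch-of-square-root bookkeeping and the non-collinearity hypothesis on $a_1,a_2,a_3$ to rule out coincidental vanishing.

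For the transversality statement, suppose $b_*$ lies on two of the curves, say $\ga_i$ and $\ga_{i'}$ with $i\neq i'$. Then $b_*$ satisfies $\Im f_{jk}(b_*)=0$ and $\Im f_{j'k'}(b_*)=0$ for the two corresponding pairs of indices. Transversality of $\ga_i$ and $\ga_{i'}$ at $b_*$ is the statement that the real gradients $\nabla(\Im f_{jk})(b_*)$ and $\nabla(\Im f_{j'k'})(b_*)$ are linearly independent over $\bR$. Because each gradient is (the Hodge-rotation of) the complex number $f_{jk}'(b_*)$ respectively $f_{j'k'}'(b_*)$, linear independence of the gradients is equivalent to the two complex derivatives $f_{jk}'(b_*)$ and $f_{j'k'}'(b_*)$ being $\bR$-linearly independent as vectors in $\bC$, i.e. to $f_{jk}'(b_*)/f_{j'k'}'(b_*)\notin\bR$. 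I would establish this by observing that the two derivatives are periods of the same holomorphic differential $\frac{dt}{\sqrt{(b_*-t)Q(t)}}$ over two distinct segments $[a_j,a_k]$ and $[a_{j'},a_{k'}]$, which represent independent cycles (a basis of the first homology) on the elliptic curve. The ratio of two periods over independent cycles is the modulus $\tau$ of the elliptic curve, which lies in the upper half-plane and therefore has nonzero imaginary part; consequently the ratio is non-real and the gradients are independent. The expected difficulty here is identifying the two segments with a genuine homology basis and correctly matching the branch choices of the square root used in the two defining integrals so that the period ratio really is the (non-real) period ratio $\tau$ of the curve.
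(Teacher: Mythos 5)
Your proposal is correct and follows essentially the same route as the paper: the paper likewise differentiates under the integral sign, identifies $f_{jk}'(b)$ as a period of the elliptic curve $y^2=(b-t)(t-a_1)(t-a_2)(t-a_3)$ to get non-vanishing and hence smoothness, and rules out tangency by noting that $f_{jk}'(b^*)/f_{ik}'(b^*)$ would have to be real while a ratio of (independent) elliptic periods never is. Your extra care about the two segments representing independent homology classes is a point the paper leaves implicit, but it does not change the argument.
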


\begin{proof}
Indeed, we have 
\begin{align}
& f_{jk}^\prime(b) = \int _{a_j}^{a_k}  \sqrt{\frac{-1}{(t-a_1)(t-a_2)(t-a_3)(t-b)}} dt.
\label{eq:a123bRea}
\end{align}
Since the right-hand side in (\ref{eq:a123bRea}) is a complete elliptic integral, it represents a period of an elliptic curve which implies that the right-hand side is nonvanishing which in its turn implies  the smoothness of $\ga_i$. To show that $\ga_i$ and $\ga_j$ for $i\neq j$ can only intersect transversally notice the following. If they are tangent at some point  $b^*$ then $f^\prime_{jk} (b^*)/f^\prime_{ik} (b^*) \in \Rea$ but this can  never happen since the ratio of periods of an elliptic curve cannot be  real.
\end{proof}

\begin{lemma} The following 3 relations hold: 
\begin{align}
& \int _{a_j}^{a_k}  \sqrt{\frac{a_i-t}{(t-a_1)(t-a_2)(t-a_3)}} dt = \int _{a_j}^{a_k}  \sqrt{\frac{1}{(a_j-t)(t-a_k)}} dt = \pi ,\\
& \int _{a_j}^{a_k}  \sqrt{\frac{b-t}{(t-a_1)(t-a_2)(t-a_3)}} dt + \int _{b}^{a_i} \sqrt{\frac{b-t}{(t-a_1)(t-a_2)(t-a_3)}} dt = \pi ,
\end{align}

\begin{align}
& \left( \int _{a_j}^{a_k}  +\int _{a_k}^{a_i} +\int _{a_i}^{a_j}  \right) \sqrt{\frac{b-t}{(t-a_1)(t-a_2)(t-a_3)}} dt = 2\pi . \label{eq:threeint}
\end{align}

\end{lemma}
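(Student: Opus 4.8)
The plan is to treat the three identities in increasing order of difficulty, the common engine being a single residue computation at infinity for the multivalued integrand $\omega := \sqrt{\frac{b-t}{Q(t)}}\,dt$, where $Q(t)=(t-a_1)(t-a_2)(t-a_3)$.

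First I would dispose of the first identity by direct simplification. Setting $b=a_i$ cancels the factor $(t-a_i)$ and gives $\sqrt{\frac{a_i-t}{Q(t)}}=\sqrt{\frac{1}{(a_j-t)(t-a_k)}}$, since $-(t-a_j)(t-a_k)=(a_j-t)(t-a_k)$; this is the first equality. The remaining integral $\int_{a_j}^{a_k}\frac{dt}{\sqrt{(a_j-t)(t-a_k)}}$ is standard: the affine substitution $t=\frac{a_j+a_k}{2}+\frac{a_k-a_j}{2}\,s$ turns the radicand into $\frac{(a_k-a_j)^2}{4}(1-s^2)$ and reduces the integral to $\int_{-1}^{1}\frac{ds}{\sqrt{1-s^2}}=\pi$, the branch being the one fixed by the convention in \eqref{eq:ba123ReaN}. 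This settles the base case and simultaneously pins down the sign convention used throughout.

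Next I would establish the master fact that $\oint_{C_R}\omega=2\pi$, where $C_R$ is a large positively oriented circle. Since $\frac{b-t}{Q(t)}\sim -t^{-2}$ as $t\to\infty$, one has $\sqrt{\frac{b-t}{Q(t)}}=\pm i\,t^{-1}+O(t^{-2})$, so the integral over $C_R$ equals $2\pi i\cdot(\pm i)=\mp2\pi$; fixing the branch compatibly with the first identity makes the value $+2\pi$. A short monodromy count shows $\omega$ is single-valued along $C_R$ (one loop around all four branch points $a_1,a_2,a_3,b$ multiplies the square root by $e^{-2\pi i}=1$), so the integral is well defined. For the second identity I would introduce the two branch cuts $[a_j,a_k]$ and $[a_i,b]$; the same count shows that a loop encircling the two endpoints of a single cut has trivial monodromy, so $\omega$ is single-valued off these cuts and its values on the two banks of each cut are negatives of one another. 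Hence the loop $\alpha$ hugging $[a_j,a_k]$ satisfies $\oint_\alpha\omega=2\int_{a_j}^{a_k}\omega$, and the loop $\beta$ hugging $[a_i,b]$ satisfies $\oint_\beta\omega=2\int_{b}^{a_i}\omega$. Since $C_R$ is homologous to $\alpha+\beta$ in the complement of the cuts and $\omega$ is closed there, adding gives $2\int_{a_j}^{a_k}\omega+2\int_{b}^{a_i}\omega=\oint_{C_R}\omega=2\pi$, which is exactly the second identity.

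Finally, for \eqref{eq:threeint} I would deform $C_R$ inward to the boundary of the triangle $\De$ with vertices $a_1,a_2,a_3$. When $b$ lies inside $\De$ (the case relevant to $\Ga_Q\subset\De_Q$), the closed annular region between $\partial\De$ and $C_R$ contains none of the four branch points, and the monodromy count again yields single-valuedness there; since $\omega$ is holomorphic and integrable up to the vertices (it grows only like $(t-a_m)^{-1/2}$, so rounded-corner contributions vanish in the limit), we obtain $\int_{a_j}^{a_k}\omega+\int_{a_k}^{a_i}\omega+\int_{a_i}^{a_j}\omega=\oint_{\partial\De}\omega=\oint_{C_R}\omega=2\pi$. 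The remaining positions of $b$ follow by analytic continuation of both sides along the appropriate sheet. I expect the genuine difficulty to be not any single estimate but the bookkeeping of the square-root branch and of the orientations: one must verify that the branch selected in the annulus, and on the two banks of each cut, restricts on every segment to precisely the branch normalized in the first identity, so that all signs come out $+\pi$ and $+2\pi$ rather than some mixture. Assuming $a_1,a_2,a_3$ positively oriented and using the first identity as the normalization should make this tracking routine.
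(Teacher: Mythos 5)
Your proof is correct and follows essentially the same route as the paper's: the first identity by reducing to the arcsine integral $\int_{-1}^{1}ds/\sqrt{1-s^2}=\pi$, and the second and third by evaluating $\oint_{C_R}\sqrt{(b-t)/Q(t)}\,dt=2\pi$ via the behavior at infinity and deforming the contour onto the cuts (respectively onto $\partial\Delta_Q$). The only difference is one of detail: you make explicit the monodromy counts and the homology $C_R\sim\alpha+\beta$ that justify the paper's terse claim that the left-hand side of the second relation equals $\tfrac12\oint_C$, which is a welcome elaboration rather than a new argument.
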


\begin{proof}
To prove the first relation  make an affine change of variable  $\tilde{z}= c z +d$ where $c\neq 0$.
Set $\tilde{a} _i=ca_i +d$ and $\tilde{b}=cb +d $.
If $a_1, a_2, a_3, b $ correspond  to $\tilde{a}_1, \tilde{a}_2, \tilde{a}_3, \tilde{b} $ resp. then we have
\begin{align}
& \int _{z'}^z  \sqrt{\frac{b-t}{(t-a_1)(t-a_2)(t-a_3)}} dt = \int _{\tilde{z}'} ^{\tilde{z}}  \sqrt{\frac{\tilde{b}-\tilde{t}}{(\tilde{t}-\tilde{a}_1)(\tilde{t}-\tilde{a}_2)(\tilde{t}-\tilde{a}_3)}} d\tilde{t}, \label{eq:invaffinetr}
\end{align}
where $\tilde{t}= ct +d$ and $\tilde{z}'= cz' +d$.
Hence we can always place two points $a_j$, $a_k$ on the real axis and the third point $a_i$ in the upper half plane.
The second relation follows from the fact that  the l.h.s. of the second relation equals to $\frac {1}{2}\int _C \sqrt{\frac{b-t}{(t-a_1)(t-a_2)(t-a_3)}} dt$ where $C$ is any circle bounding a disk containing  the triangle $\Delta_Q$ inside. To show that this integral equals  $\pi $ consider the limit when the radius of $C$ tends to infinity.
Similar considerations settle the third relation.
\end{proof}

\begin{lemma} \label{lem:lb}
Let $b$ be a point in the triangle $\Delta_Q$ and let $l_b$ be the straight line  passing  through $b$ and  parallel to the side $\overline{a_ja_k}$.
Then there exists a unique point $b'$ on $l_b$ such that 
\begin{align}
\int _{a_j}^{a_k}  \sqrt{\frac{b'-t}{(t-a_1)(t-a_2)(t-a_3)}} dt \in \Rea , \label{eq:intajakRea}
\end{align}
Moreover,  $b' \in \Delta_Q \cap l_b$.
\end{lemma}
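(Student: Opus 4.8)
The statement concerns a fixed point $b$ in the triangle $\Delta_Q$, the line $l_b$ through $b$ parallel to the side $\overline{a_ja_k}$, and the assertion that there is a \emph{unique} point $b'$ on $l_b$ satisfying the reality condition \eqref{eq:intajakRea}, with $b'$ moreover lying in $\Delta_Q\cap l_b$. My plan is to study the single-variable function
\[
g(b') := \int_{a_j}^{a_k}\sqrt{\frac{b'-t}{(t-a_1)(t-a_2)(t-a_3)}}\,dt
\]
as $b'$ ranges over the line $l_b$, and to show that its imaginary part is a strictly monotone (hence injective, hence at most once vanishing) function of the real parameter along $l_b$, while an intermediate-value / boundary argument forces a zero to actually occur inside the segment $\Delta_Q\cap l_b$. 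The reality condition \eqref{eq:intajakRea} defines exactly $\ga_i\cap l_b$, so uniqueness of $b'$ is the statement that $\ga_i$ meets the line $l_b$ in a single point.

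\textbf{Key steps.} First I would parametrize $l_b$ by a real parameter $s$ via $b'(s)=b_{\mathrm{mid}}+s\,v$, where $v$ is a unit vector \emph{parallel} to $\overline{a_ja_k}$; the point is that moving $b'$ along $l_b$ changes $b'$ in a direction parallel to the integration path from $a_j$ to $a_k$. Differentiating $g$ in $s$ and using that the elliptic integral $g'$ of \eqref{eq:a123bRea} is a nonvanishing period (exactly as established in Lemma~\ref{lm:trans}), I would compute $\tfrac{d}{ds}\Im\, g(b'(s)) = \Im\!\big(v\cdot g'(b'(s))\big)$ and argue that this quantity has constant sign along $l_b$. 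The natural mechanism is that the direction $v$ parallel to $\overline{a_ja_k}$ is precisely the direction in which the period $g'$ has a controlled (nonzero, sign-definite) imaginary part, so $\Im\,g(b'(s))$ is strictly monotone in $s$; this gives at most one zero on all of $l_b$, yielding uniqueness even before restricting to the triangle. Second, to locate the zero inside $\Delta_Q\cap l_b$, I would evaluate $\Im\,g$ at the two endpoints where $l_b$ crosses the sides $\overline{a_ia_j}$ and $\overline{a_ia_k}$ of the triangle and show the imaginary part takes opposite signs there (or vanishes at a vertex in the degenerate case $b'=a_i$, where by the first relation of the previous lemma the integral equals $\pi\in\Rea$); the intermediate value theorem then produces a root $s_0$ with $b'(s_0)\in\Delta_Q\cap l_b$.

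\textbf{The main obstacle.} The genuinely delicate point is the sign-definiteness of $\Im\,g'$ along the parallel direction $v$, i.e. proving the strict monotonicity of $\Im\,g(b'(s))$. Lemma~\ref{lm:trans} only gives $g'\neq0$ and that ratios of such periods are non-real; upgrading this to $\Im(v\cdot g')$ having one sign throughout the segment requires controlling the period as a function of the moving branch point $b'$, and in particular ruling out that the directional derivative $\Im(v\cdot g')$ changes sign as $b'$ sweeps across $l_b$. I would handle this by an affine normalization (invoking \eqref{eq:invaffinetr} and the first relation of the preceding lemma to place $a_j,a_k$ on the real axis and $a_i$ in the upper half-plane, so that $v$ becomes horizontal), reducing to a Legendre-type complete elliptic integral whose imaginary part along the real-translation direction is classically sign-definite; the non-reality of period ratios from Lemma~\ref{lm:trans} then guarantees the derivative cannot degenerate. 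The endpoint sign computation in the second step is comparatively routine once the normalization is fixed, since at $b'=a_i$ the integral is real and equal to $\pi$ by the lemma above, pinning down the behaviour of $\Im\,g$ near the boundary of the segment.
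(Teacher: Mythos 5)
Your overall strategy coincides with the paper's: normalize by an affine map so that $a_j,a_k$ are real with $a_i$ in the upper half-plane, show that $\Im g(b')$ is strictly monotone as $b'$ moves along $l_b$ in the direction of $\overline{a_ja_k}$, observe that the sign of $\Im g$ is forced outside $\Delta_Q$, and conclude by the intermediate value theorem. However, at the step you yourself flag as ``the main obstacle'' there is a genuine gap: you reduce everything to the claim that $\Im\bigl(v\cdot g'(b')\bigr)$ is sign-definite along $l_b$ and then justify it by appeal to a ``classically sign-definite'' property of Legendre-type complete elliptic integrals together with Lemma~\ref{lm:trans}. Neither suffices. Lemma~\ref{lm:trans} only gives that the period $f_{jk}'$ is nonvanishing and that ratios of periods are non-real; this says nothing about the sign of $\Im\bigl(v\cdot f_{jk}'(b')\bigr)$, which could a priori change as the branch point $b'$ sweeps across $l_b$ (non-vanishing of a complex number does not pin down the sign of its imaginary part after multiplication by $v$). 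And there is no standard ``classical'' sign-definiteness statement for the imaginary part of such a period as a function of a moving branch point translated parallel to the real axis; this is precisely the content that has to be proved.

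The paper closes exactly this gap by an elementary pointwise computation on the \emph{integrand} rather than on the integrated period: for $a_j<t<a_k$ real and $a_i=d^r+\sqrt{-1}\,d^i$ with $d^i>0$, one writes out
$\frac{c-t}{a_i-t}=\frac{(c^r-t)(d^r-t)+c^id^i+\sqrt{-1}\bigl(c^i(d^r-t)-(c^r-t)d^i\bigr)}{(d^r-t)^2+(d^i)^2}$
and sees that its imaginary part is strictly decreasing in $c^r$ (the relevant term is $-(c^r-t)d^i$ with $d^i>0$); multiplying by the positive factor $1/\bigl((a_k-t)(t-a_j)\bigr)$ and passing through the square root and the integral yields strict monotonicity of $\Im g$ along $l_b$ for every $t$ simultaneously, with no reference to properties of the period as a whole. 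The same computation shows that the imaginary part of the integrand is everywhere positive (resp.\ negative) when $c\in l_b$ lies to the left (resp.\ right) of $\Delta_Q$, which is how the paper rules out $b'\notin\Delta_Q$ --- slightly cleaner than your proposed evaluation at the two side-crossings, though that variant would also work. To repair your argument, replace the appeal to classical sign-definiteness by this integrand-level monotonicity (equivalently, differentiate under the integral sign and check the sign of the imaginary part of the resulting integrand pointwise in $t$); as written, the central monotonicity claim is asserted rather than proved.
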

\begin{proof}
By (\ref{eq:invaffinetr})  we can wlog assume that  the points $a_j$, $a_k$ lie on the real axis, $a_j <a_k$ and the point $a_i$ lies in the upper half plane. Let us show that the imaginary part of $f_{j,k}(b)$ is a monotone decreasing function when $b$ runs from left to right along   the line $l_b$. 
Take $c_1, c_2 \in l_b$ such that $\Re c_1 <\Re c_2 $, let $t$ be a real number such that $a_j <t<a_k$. Decomposing them into the real and imaginary parts  $c_1=c_1^r +\sqrt{-1} c_1^i$, $c_2=c_2^r +\sqrt{-1} c_2^i$, $a_i=d^r +\sqrt{-1} d^i$  we get  $c_1^r -t < c_2^r -t$, $d ^i >0$  and $c_1^i=c_2^i$.
Since $1/((a_k-t)(t-a_j))>0$ and
\begin{align}
\frac{c_l-t}{a_i-t}
= \frac{(c_l^r -t) (d^r -t) +c^i_ld^i+\sqrt{-1}(c^i_l (d^r-t) -(c^r_l -t)  d^i)}{(d^r-t)^2+(d^i)^2} \quad (l=1,2),
\end{align}
we get 
$$\Im \left[\frac{c_1-t }{ (a_i-t)(a_k-t)(t-a_j)}\right ] > \Im \left[\frac{c_2-t }{ (a_i-t)(a_k-t)(t-a_j)}\right ].$$
Thus 
$$\Im \sqrt{\frac{c_1-t )}{ (a_i-t)(a_k-t)(t-a_j)}} > \Im \sqrt{\frac{c_2-t }{ (a_i-t)(a_k-t)(t-a_j)}}$$ 
and
\begin{align}
\Im \int _{a_j}^{a_k}  \sqrt{\frac{c_1-t}{(t-a_1)(t-a_2)(t-a_3)}} dt > \Im \int _{a_j}^{a_k}  \sqrt{\frac{c_2-t}{(t-a_1)(t-a_2)(t-a_3)}} dt ,
\end{align}
proving the required monotonicity. 
Notice that  for any  $a_j <t<a_k$ the imaginary part of $(c-t)/(a_i-t)$ is always  positive if $c \in l_b$ is to the left of  $\Delta_Q$, and negative  if $c \in l_b$ is to the right of  $\Delta_Q$.
Hence condition (\ref{eq:intajakRea}) can not hold if $b' \not \in \Delta_Q$.
%The sign of the imaginary part of $f_{j,k}(b)$ depends  on the choice of one of two half lines outside $\Delta_Q$.
The results follows by the mean value theorem.
\end{proof}

\begin{rem}
Thus the three curves $\ga_1,\ga_2,\ga_3$ determined by (\ref{eq:ba123Rea}) have to intersect  the triangle $\Delta_Q$. 
Relation (\ref{eq:threeint}) implies that if two of these curves meet at a certain point then the third curve also 
passes through the same   point.
By Lemma \ref{lem:lb} any two of these curves meet at (at least) one point.
\end{rem}

\begin{lemma}\label{lm:inter}
The curves $\ga_1,\ga_2,\ga_3$ determined by (\ref{eq:ba123Rea}) meet   at exactly  one point which lies inside $\Delta_Q$.
\end{lemma}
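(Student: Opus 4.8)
The plan is to reduce the statement to a single transversality-type computation. Writing $I_i(b):=\Im f_{jk}(b)$ for the defining function of $\ga_i$ (with $\{i,j,k\}=\{1,2,3\}$), relation \eqref{eq:threeint} gives $f_{12}(b)+f_{23}(b)+f_{31}(b)\equiv 2\pi$, so that $I_1+I_2+I_3\equiv 0$ on the whole region where the $f_{jk}$ are defined. Consequently any point at which two of the curves meet automatically lies on the third, exactly as noted in the Remark after Lemma~\ref{lem:lb}. Together with the existence of at least one common point of $\ga_1$ and $\ga_2$ inside $\Delta_Q$ (furnished by Lemma~\ref{lem:lb}), the whole assertion reduces to showing that $\ga_1$ and $\ga_2$ cannot meet in more than one point of $\Delta_Q$.

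First I would record that, for $b$ in the open triangle $\Delta_Q^\circ$, each $f_{jk}(b)$ is a holomorphic function of $b$ (the segment of integration avoids the point $t=b$, and a single-valued branch of the integrand can be chosen), so each $I_i$ is harmonic there. The key step is to compute the Jacobian of the map $b\mapsto(I_1(b),I_2(b))=(\Im f_{23}(b),\Im f_{31}(b))$. A direct computation with the Cauchy--Riemann equations gives
\[
\det\frac{\partial(I_1,I_2)}{\partial(x,y)}=\Im\!\big(f_{23}'(b)\,\overline{f_{31}'(b)}\big)=|f_{31}'(b)|^2\,\Im\!\Big(\frac{f_{23}'(b)}{f_{31}'(b)}\Big).
\]
By \eqref{eq:a123bRea} the derivatives $f_{23}'(b)$ and $f_{31}'(b)$ are two periods of the elliptic curve $y^2=(t-a_1)(t-a_2)(t-a_3)(t-b)$, whose ratio is never real --- precisely the fact already used in the proof of Lemma~\ref{lm:trans}. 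Hence this Jacobian never vanishes on $\Delta_Q^\circ$, and by connectedness it keeps a constant sign. In particular $\nabla I_1$ and $\nabla I_2$ are linearly independent at every point of $\Delta_Q^\circ$, so each is nowhere zero and each $\ga_i$ is a smooth embedded arc there (consistent with Lemma~\ref{lem:lb}, which presents $\ga_i$ as a graph over the lines parallel to $\overline{a_ja_k}$).

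The uniqueness then follows by Rolle's theorem. Suppose $p\neq q$ both lie in $\ga_1\cap\ga_2\cap\Delta_Q^\circ$, and parametrize by arclength the sub-arc of $\ga_1$ joining them; since $\ga_1$ is a graph (Lemma~\ref{lem:lb}) this sub-arc stays in $\Delta_Q^\circ$. The function $I_2$ restricted to this arc vanishes at both endpoints, so its tangential derivative vanishes at some interior point $r$; there $\nabla I_2$ is orthogonal to $\ga_1$. But $\nabla I_1$ is also orthogonal to $\ga_1$ at $r$ (as $r\in\ga_1=\{I_1=0\}$ and $\nabla I_1\neq 0$), forcing $\nabla I_1\parallel\nabla I_2$ and hence the Jacobian above to vanish at $r$ --- a contradiction. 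Thus $\ga_1$ and $\ga_2$ meet in at most one point, and with existence in exactly one, which lies in $\Delta_Q^\circ$; by the first paragraph $\ga_3$ passes through it as well.

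I expect the main obstacle to be the global non-vanishing of the Jacobian rather than the final Rolle argument: one must be sure that the period ratio $f_{23}'/f_{31}'$ stays off the real axis at \emph{every} $b\in\Delta_Q^\circ$ (not merely at isolated intersection points, where Lemma~\ref{lm:trans} already provides it) and that the branches entering $f_{23}'$ and $f_{31}'$ are chosen compatibly so that they are genuinely two periods of the same elliptic curve. A secondary point needing care is verifying that the common point produced by Lemma~\ref{lem:lb} is interior to $\Delta_Q$ (so that the contradiction point $r$ lands where the Jacobian is controlled), which follows from the monotonicity established there.
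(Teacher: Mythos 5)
Your argument is correct, but it is not the route the paper takes. The paper proves uniqueness by deformation: for the equilateral triangle each $\ga_i$ is the straight line through $a_i$ perpendicular to the opposite side, so the three curves meet once; deforming an arbitrary $\Delta_Q$ into the equilateral one, a change in the number of intersection points of $\ga_i$ and $\ga_j$ would force a tangency at some intermediate triangle, contradicting Lemma~\ref{lm:trans}. You instead work entirely at the given triangle: the identity $\det\partial(I_1,I_2)/\partial(x,y)=\Im\bigl(f_{23}'(b)\,\overline{f_{31}'(b)}\bigr)$ (which is correct, by Cauchy--Riemann) converts the non-reality of the period ratio --- the very fact underlying Lemma~\ref{lm:trans} --- into global non-vanishing of the Jacobian on the interior of $\Delta_Q$, and Rolle along the graph arc of $\ga_1$ supplied by Lemma~\ref{lem:lb} excludes a second intersection. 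Your route buys something genuine: the paper's deformation step tacitly assumes that intersection points can only appear or disappear through tangencies, which really needs a properness/compactness argument to rule out intersections escaping through the boundary; your Jacobian argument needs no such control. The price is exactly what you flag: the ratio $f_{23}'/f_{31}'$ must be non-real at \emph{every} interior $b$, not merely at intersection points --- but the proof of Lemma~\ref{lm:trans} asserts precisely this for arbitrary $b\notin\{a_1,a_2,a_3\}$, the two integrals being periods over homologically independent cycles of $y^2=(t-a_1)(t-a_2)(t-a_3)(t-b)$ --- and the Rolle point must land where the Jacobian is controlled, which the graph property and strict monotonicity in Lemma~\ref{lem:lb} guarantee. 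Existence of a common point and the concurrence of the third curve come from Lemma~\ref{lem:lb} and \eqref{eq:threeint} in both versions, so no gap there.
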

\begin{proof}
If $\Delta_Q$ is an equilateral triangle, then $\ga_i$ is the straight line which passes through $a_i$ and is perpendicular to the side $\overline{a_ja_k}$.
Assume that for some $\Delta_Q$,  two curves $\ga_i$ and $\ga_j$ meet at more than one point.
Deform this $\Delta_Q$ into the equilateral triangle. During this deformation  these two curves experience a deformation during which they should touch each other tangentially. 
But this contradicts to Lemma~\ref{lm:trans}. 
\end{proof}
\begin{nota}
Let $b_0$ denote the point where $\ga_i,\ga_j,\ga_k$ meet.
Recall that we denote the segment of $\ga_i$ connecting  $a_i$ and $b_0$ by $\Ga_i$. 
Let $D _i$ be the domain bounded by $\Ga_j$, $\Ga _k$ and by the side $\overline{a_ja_k}$, see Fig.5. 
\end{nota}
\begin{center}
\begin{picture}(400,120)(0,0)
\put(100,20){\line(1,0){180}}
\put(100,20){\line(3,2){135}}
\qbezier(235,110)(257,65)(280,20)
\put(210,55){\circle*{3}}
\put(206,44){$b_0$}
\qbezier(100,20)(167,43)(210,55)
\qbezier(280,20)(257,40)(210,55)
\qbezier(235,110)(220,85)(210,55)
\put(90,15){$a_1$}
\put(282,15){$a_2$}
\put(238,110){$a_3$}
\put(207,80){$\Ga _3$}
\put(237,35){$\Ga _2$}
\put(157,33){$\Ga _1$}
\put(230,65){$D_1$}
\put(178,60){$D_2$}
\put(185,28){$D_3$}
\put(140,0){Figure. 5. Our notation}
\end{picture}\label{fg:T1}

\end{center}
\bigskip
Consider  the quadratic differential $\Psi=R(z)dz^2,$ where  
$R(z)=\frac{b-z}{(z-a_1)(z-a_2)(z-a_3)}$.  Take its (horizontal)  trajectory, i.e. a level curve: 
\begin{align}
\Im  \int _{z_0}^z  \sqrt{R(t)} dt =const,
\end{align}
 where $z_0$ is some fixed point. Assume  that $b$ is located inside $\Delta_Q$ where as above  $Q(z)=(z-a_1)(z-a_2)(z-a_3)$.   If $R(z)$ is non-vanishing  and regular at some $z=z^*$, then the curve $\frak H: \Im  \int _{z_0}^z  \sqrt{R(t)} dt =const$ passing through $z^*$ is analytic in a neighbourhood of $z^*$, and the tangential direction to $\frak H$ at $z^*$ is given by $(\Re \sqrt{R(z^*)}, -\Im \sqrt{R(z^*)})$. To see this note that   locally near $z^*$ one has
\begin{equation}
 \int _{z^*}^z \sqrt{R(t)}dt \sim \sqrt{R(z^*)} (z-z^*) +O((z-z^*)^2). 
\end{equation}
%& \Im \left\{\left( \Re \sqrt{R(z^*)} + \sqrt{-1} \Im \sqrt{R(z^*)} \right)  ((x-x^*)+\sqrt{-1} (y-y^*)) \right\} \nonumber \\
%& = - (x-x^*) \Im \sqrt{R(z^*)}- (y-y^*) \Re \sqrt{R(z^*)}, \nonumber
%\end{align}
%where $z=x+\sqrt{-1}y$ and $z^*=x^*+\sqrt{-1}y^*$. 
Analogously,  the vertical trajectory $\frak V$ of $\Psi$ (which is given by  $\Re  \int _{z_0}^z  \sqrt{R(t)} dt =const$) passing through $z^*$ is also analytic in a neighbourhood of $z^*$, and its tangential direction at $z^*$ is given by $(\Im \sqrt{R(z^*)}, \Re \sqrt{R(z^*)}$.
Note that the orientation of $\frak H$ and $\frak V$ depends on the choice of a branch of $\sqrt{R(z)}$.

\begin{nota}
For a fixed $z^* \in \Cplx \setminus \{ a_1,a_2,a_3,b\}$, denote by  $\theta _1$, $\theta _2$, $\theta _3$, and  $\phi $   the arguments of the complex numbers $a_1-z^*$, $a_2-z^*$, $a_3-z^*$, and $b-z^*$ resp. 
Let $\theta _{j' j}$, $\phi _{j}$ be the arguments of $a_{j'}-a_{j} $, $b- a_{j}$. Finally, let  $\tilde{\phi }_{i}$ be the argument of $a_{i}-b$.
\end{nota}

The above formula for the tangent direction implies the following statement. 

\begin{lemma}ÊIn the above notation 
\begin{enumerate}
\item
one singular horizontal trajectory emanates  from each simple pole $z=a_j $ of $R(z)$; its tangent direction is given by $\theta _{ij} +\theta _{kj} -\phi _j$ and points inside $\Delta_Q$;

\item one singular vertical trajectory emanates from each simple pole $z=a_i$ of $R(z)$; its tangent direction is given by $\theta _{ij} +\theta _{kj} -\phi _j+\pi $ and  points outside $\Delta_Q$;

\item three singular horizontal trajectories  emanate from a simple zero $z=b$  and their tangent directions are  given by $(\tilde{\phi }_1 +\tilde{\phi }_2 +\tilde{\phi }_3 +\pi (1+2m))/3$ $(m=0,1,2)$. 
\end{enumerate} 
\end{lemma}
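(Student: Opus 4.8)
The plan is to read off each tangent direction directly from the local expansion
$\int_{z^*}^z\sqrt{R(t)}\,dt\sim\sqrt{R(z^*)}\,(z-z^*)$ already recorded above, treating the simple poles $z=a_j$ and the simple zero $z=b$ as limiting cases of the regular-point formula. The key observation is that near a simple pole or zero of $R(z)$ the argument of $\sqrt{R(z)}$ varies, so instead of a single trajectory we must track how $\arg\sqrt{R}$ rotates as $z$ circles the singular point, and then impose the horizontal (resp.\ vertical) condition $\Im\int\sqrt{R}\,dt=\mathrm{const}$ (resp.\ $\Re\int\sqrt{R}\,dt=\mathrm{const}$).

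First I would compute $\arg\sqrt{R(z)}$ for $z$ near the relevant singular point. Writing $R(z)=\dfrac{b-z}{(z-a_1)(z-a_2)(z-a_3)}$, one has
\[
\arg R(z)=\arg(b-z)-\sum_{l=1}^{3}\arg(z-a_l).
\]
For $z$ near a pole $z=a_j$, the three arguments $\arg(a_i-a_j)=\theta_{ij}+\pi$, $\arg(a_k-a_j)=\theta_{kj}+\pi$, $\arg(b-a_j)=\phi_j$, and $\arg(z-a_j)$ enter; the factor $(z-a_j)^{-1}$ in $R$ produces the dominant $-\arg(z-a_j)$ term, so that $\sqrt{R}$ behaves like $(z-a_j)^{-1/2}$ up to a fixed phase. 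Then I would impose $\Im\big(\sqrt{R(z^*)}(z-z^*)\big)=0$ for the horizontal trajectory emanating toward $a_j$, solve for $\arg(z-a_j)$, and simplify the constant phase. The bookkeeping should collapse to the claimed value $\theta_{ij}+\theta_{kj}-\phi_j$ for the horizontal direction, and the vertical direction differs by $\pi/2$ in the distinguished parameter, which in the $z$-plane near a simple pole translates into the extra $+\pi$ reported in part (2). Parts (1) and (2) should then follow from one another by this orthogonality, together with the sign analysis already available showing the horizontal trajectory points \emph{inside} $\Delta_Q$ while the vertical one points \emph{outside}.

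For part (3) I would treat the simple zero $z=b$, where $R(z)$ vanishes to first order so that $\sqrt{R(z)}\sim c\,(z-b)^{1/2}$ and $\int_b^z\sqrt{R}\,dt\sim \tfrac{2}{3}c\,(z-b)^{3/2}$. The horizontal condition $\Im\big((z-b)^{3/2}\big)=0$ forces $\tfrac32\arg(z-b)$ to equal $-\arg c$ modulo $\pi$, giving the three solutions spaced by $2\pi/3$; the constant $\arg c$ is recovered from $\arg(a_i-b)=\tilde\phi_i$ for the three factors, producing exactly $(\tilde\phi_1+\tilde\phi_2+\tilde\phi_3+\pi(1+2m))/3$ for $m=0,1,2$. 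The main obstacle is the consistent tracking of branch choices and the $\pi$-shifts: because $\sqrt{R}$ is multivalued around each singular point, I must fix the branch of the square root coherently and be careful that the arguments $\theta_{ij},\phi_j,\tilde\phi_i$ are taken with the orientation conventions of the preceding notation. Once the phase of $\sqrt{R}$ at the singular point is pinned down, each item reduces to solving a single real equation of the form $\Im(e^{i\psi}(z-z^*)^{\kappa})=0$, and the stated directions drop out mechanically.
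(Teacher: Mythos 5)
Your mechanism — reading the directions off the distinguished parameter $w=\int\sqrt{R}$ near each singular point — is exactly the right one, and it is precisely the part the paper treats as standard (it is the normal-form analysis of a quadratic differential at a simple pole and a simple zero, cf.\ Ch.~2--3 of \cite{Str}). Carried out, it works: near $a_j$ one has $\sqrt{R(z)}\sim c_j(z-a_j)^{-1/2}$ with $c_j^2=(b-a_j)/\bigl((a_j-a_i)(a_j-a_k)\bigr)$, so $\int\sqrt{R}\sim 2c_j(z-a_j)^{1/2}$, and the horizontality condition $\arg c_j+\tfrac12\arg(z-a_j)\equiv 0\ (\mathrm{mod}\ \pi)$ has exactly one solution mod $2\pi$, namely $\arg(z-a_j)=\theta_{ij}+\theta_{kj}-\phi_j$; the vertical condition shifts this by $\pi$ because the half-angle map doubles angles. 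The genuine gap is that you never prove the geometric halves of (1) and (2), namely that the horizontal direction points \emph{inside} $\Delta_Q$ and the vertical one \emph{outside}. You defer this to ``the sign analysis already available,'' but no such analysis precedes the lemma; in fact this is the \emph{only} thing the paper's own proof bothers to check. The missing step is elementary but must be stated: since $b\in\Delta_Q$, the direction $\phi_j$ of $b-a_j$ satisfies $\min(\theta_{ij},\theta_{kj})\le\phi_j\le\max(\theta_{ij},\theta_{kj})$, and $\theta_{ij}+\theta_{kj}-\phi_j$ is the reflection of $\phi_j$ in the bisector of the angle of $\Delta_Q$ at $a_j$, hence lies in the same angular sector and points into the triangle; the vertical direction is the antipodal ray and points out. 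Without this, parts (1) and (2) are only half proved.

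One caution on (3): if you actually do the phase bookkeeping you describe, you get $\arg(c^2)\equiv-(\tilde\phi_1+\tilde\phi_2+\tilde\phi_3)\ (\mathrm{mod}\ 2\pi)$ for $\sqrt{R(z)}\sim c(z-b)^{1/2}$, hence horizontal directions $(\tilde\phi_1+\tilde\phi_2+\tilde\phi_3+2\pi m)/3$, $m=0,1,2$, which differs from the printed formula by $\pi/3$. The symmetric test case ($a_1,a_2,a_3$ the cube roots of unity, $b=0$, where $R(z)=-z/(z^3-1)\sim z$ near $0$) gives directions $0$, $2\pi/3$, $4\pi/3$, pointing at the vertices as Fig.~5 and Proposition~\ref{pr:struc} require; so do not force your computation to reproduce the stated additive constant. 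The facts actually used later — three trajectories at $b$ spaced by $2\pi/3$, and one horizontal plus one vertical trajectory at each pole with the stated inside/outside behavior — are unaffected.
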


\begin{proof} The only thing we need to check is that 
if $b $ is in $\Delta_Q$, then we have $\min (\theta _{ij}, \theta _{kj}) \leq \phi _j \leq \max (\theta _{ij}, \theta _{kj}) $ and $\min (\theta _{ij}, \theta _{kj}) \leq \theta _{ij} + \theta _{kj} -\phi _j \leq \max (\theta _{ij}, \theta _{kj})$
Hence the tangent direction of the horizontal  trajectory emanating from  any pole points inside $\Delta_Q$.
\end{proof}

\begin{prop} $ $ 
\\
(i) If $b\notin \Delta_Q$ and a quadratic differential   $\Psi=\frac{(b-z)dz^2}{(z-a_1)(z-a_2)(z-a_3)}$ is Strebel then at least one of its singular horizontal trajectories   emanating from its poles $a_1,a_2,a_3$ leaves $\Delta_Q$. 
\\
(ii) If $b \in \Gamma _i \setminus \{  b_0 \} $ $(i=1,2,3)$, then $a_j$ and $a_k$ are connected by a singular horizontal trajectory $\gamma$ and it does not contain the  point $z=b$.
\\
(iii) If $b \in D_j$, then the singular horizontal trajectory $\gamma '$ which starts at $a_k$ (resp. $a_i$),  goes inside the triangle $\Delta_Q$, crosses the side $\overline{a_ka_i}$ (resp. $\overline{a_ka_j}$), and leaves  the triangle $\Delta_Q$.
\end{prop}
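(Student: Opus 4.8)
The plan is to combine three ingredients that are already available: the initial tangent directions of the singular trajectories computed in the lemma immediately preceding the proposition, the compactness of $K_\Psi$ from Lemma~\ref{lemma2} (so that every singular trajectory must terminate at a singular point of $\Psi$, namely at one of the simple poles $a_1,a_2,a_3$ or the simple zero $b$), and the monotonicity of $\Im f_{jk}$ established in Lemma~\ref{lem:lb}. The governing principle is that a horizontal trajectory is a level curve of $\Im\int\sqrt{R}\,dt$, so the trajectory issuing from a pole $a_m$ can reach another critical point $p$ only if $\int_{a_m}^{p}\sqrt{R}\,dt$ is real along the traversed arc; the reality conditions defining the curves $\gamma_i$ are precisely such coincidences of levels.

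Part (i) I would deduce directly from the tangent-direction formula. The direction $\theta_{ij}+\theta_{kj}-\phi_j$ of the trajectory leaving $a_j$ is the reflection of the ray $a_j\to b$ (of argument $\phi_j$) in the bisector of the cone spanned at $a_j$ by the rays to $a_i$ and $a_k$; hence it points into $\Delta_Q$ exactly when $\phi_j$ lies between $\theta_{ij}$ and $\theta_{kj}$, i.e. exactly when $b$ lies in that cone. Since $\Delta_Q$ is the intersection of the three vertex cones, $b\notin\Delta_Q$ forces $b$ out of at least one cone, and the trajectory issuing from the corresponding pole points out of $\Delta_Q$ and so leaves it immediately.

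For parts (ii) and (iii) I would track the trajectory issuing from a chosen pole, normalizing the canonical coordinate $w(z)=\int_{a_j}^{z}\sqrt{R}\,dt$ so that this trajectory lives on the level $\Im w=0$. When $b\in\gamma_i\setminus\{b_0\}$ the condition $\int_{a_j}^{a_k}\sqrt{R}\,dt\in\Rea$ places $a_k$ on the same level; moreover $\int_{a_j}^{a_i}\sqrt{R}\,dt\notin\Rea$, since otherwise $b\in\gamma_k$, contradicting Lemma~\ref{lm:inter} (the three curves meet only at $b_0$), and the relation $\int_{a_j}^{a_k}\sqrt{R}\,dt+\int_{b}^{a_i}\sqrt{R}\,dt=\pi$ then shows $\int_{b}^{a_i}\sqrt{R}\,dt\in\Rea$, whence $\Im\int_{a_j}^{b}\sqrt{R}\,dt=\Im\int_{a_j}^{a_i}\sqrt{R}\,dt\neq0$. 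Thus neither $a_i$ nor $b$ lies on the level $\Im w=0$, so by compactness the trajectory from $a_j$ must terminate at $a_k$ and avoids $b$, giving (ii). For (iii), with $b\in D_j$ none of the three reality conditions holds, so no critical point other than $a_k$ sits at the level of the trajectory from $a_k$ within $\Delta_Q$; since by Lemma~\ref{lemma2} this trajectory cannot remain bounded without ending at a critical point, it is forced to exit $\Delta_Q$, and the sign of $\Im f$ prescribed by the monotonicity of Lemma~\ref{lem:lb} pins down that it crosses the side $\overline{a_k a_i}$ rather than the opposite one. As an alternative anchoring of (ii), one may start from the equilateral configuration, where the perpendicular-bisector symmetry makes the connection $a_j$–$a_k$ manifest, and deform to the general triangle; by Lemma~\ref{lm:trans} the curves $\gamma_i$ never become tangent, so no bifurcation of the trajectory structure occurs and the connection persists.

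The main obstacle is the \emph{global} control of the trajectory rather than its germ at the pole. The coordinate $w$ is defined only modulo the periods of $\sqrt{R}\,dz$ on the elliptic double cover branched at $a_1,a_2,a_3,b$, and since $b$ is an interior puncture of $\Delta_Q$ the trajectory from a pole may a priori wind around $b$, so the statement ``being on the level $\Im w=0$'' must be read in a fixed homotopy class. The delicate point is therefore to show that winding around $b$ cannot manufacture a spurious return to a critical point inside $\Delta_Q$; this is exactly where the monotonicity of Lemma~\ref{lem:lb} (controlling on which side of a vertex the level curve emerges) together with the compactness and finiteness of the critical graph are used to exclude such returns and to force the clean exit through $\overline{a_k a_i}$ asserted in (iii).
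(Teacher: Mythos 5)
Your argument for part (i) is correct and is in fact a genuinely different route from the paper's: the paper deduces (i) from compactness of $K_\Psi$ (some singular trajectory must reach the zero $b$, which lies outside $\Delta_Q$), whereas you read it off from the initial tangent direction $\theta_{ij}+\theta_{kj}-\phi_j$, observing that it is the reflection of the ray towards $b$ in the bisector of the vertex cone and hence points out of $\Delta_Q$ as soon as $b$ leaves one of the three vertex cones. Your version does not use the Strebel hypothesis at all, which is precisely the strengthening the paper mentions in the remark after the proposition but does not prove.

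For parts (ii) and (iii), however, there are genuine gaps. The level-set argument for (ii) --- ``$a_k$ is on the level $\Im w=0$ while $a_i$ and $b$ are not, hence the trajectory from $a_j$ terminates at $a_k$'' --- founders on exactly the multivaluedness you flag at the end: the reality conditions defining the $\gamma_i$ are integrals over straight segments, while the trajectory computes $\int\sqrt{R}\,dt$ along itself, and the two differ by periods of the elliptic curve that need not be real. You name this as ``the delicate point'' but do not close it; asserting that Lemma~\ref{lem:lb} ``together with compactness'' excludes spurious returns is not an argument. The paper closes it by a different mechanism: it starts from the degenerate position $b=a_i$ (where the trajectory from $a_j$ is the segment $\overline{a_ja_k}$), shows for $b\in\Gamma_i$ near $a_i$ that if the trajectory from $a_k$ missed $a_j$ it would wind around $a_j$ and meet the vertical trajectory from $a_j$ at a point $d$, then splits $\int_{a_j}^{a_k}$ into a purely imaginary piece (to $d$ along the vertical trajectory) plus a real piece (along the horizontal one), contradicting $\Im f_{jk}(b)=0$; it then continues $b$ along $\Gamma_i$, the connection persisting until the trajectory would hit $b$, which forces $b=b_0$. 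Your alternative anchoring via the equilateral triangle does not repair this: Lemma~\ref{lm:trans} controls tangencies of the curves $\gamma_i$ in the $b$-plane, not bifurcations of the trajectory structure of $\Psi$. Finally, the premise of your argument for (iii) is false: for $b\in D_j$ it is not true that none of the three reality conditions holds, because the curve $\gamma_j$ continues past $b_0$ into $D_j$ (by Lemma~\ref{lem:lb} it meets each line parallel to $\overline{a_ia_k}$ inside $\Delta_Q$ in exactly one point and ends on that side), so there are points of $D_j$ with $\int_{a_i}^{a_k}\sqrt{R}\,dt\in\Rea$; at such $b$ your argument cannot exclude a connection from $a_i$ to $a_k$ and hence cannot force the exit. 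The paper instead proves (iii) by a comparison argument: for $b\in D_j$ on the line through a point $\tilde b\in\Gamma_i$ parallel to $\overline{a_ja_k}$, the trajectory from $a_k$ for $b$ lies below the corresponding trajectory for $\tilde b$, which ends at $a_j$, so the former must cross the side and leave $\Delta_Q$.
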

\begin{center}
\begin{picture}(440,130)(0,0)
\put(00,40){\line(1,0){180}}
\put(00,40){\line(3,2){135}}
\qbezier(135,130)(157,85)(180,40)
\put(125,95){\circle*{3}}
\qbezier(125,95)(132,115)(135,130)
\put(121,84){$b$}
\qbezier(00,40)(120,80)(180,40)
\put(25,0){Figure 6. Case $b \in \Gamma _3$}
\put(205,40){\line(1,0){180}}
\put(205,40){\line(3,2){135}}
\qbezier(340,130)(362,85)(385,40)
\put(343,90){\circle*{3}}
\put(339,79){$b$}
\qbezier(340,130)(335,90)(235,80)
\qbezier(385,40)(295,60)(215,25)
\put(250,5){Case $b \in D_1$}
\end{picture}
\end{center}
\begin{proof}
Part  (i) is completely obvious. Indeed, since $K_\Psi$ is compact there should be a singular trajectory connecting  one of the poles to the zero $b$. Since $b$ is located outside $\Delta_Q$ the result follows. 

To  prove (ii) note that if $b$ coincides with $a_i$, then the  horizontal trajectory emanating from $a_j$ is the straight segment $\overline{a_ja_k}$. Now take $b \in \Ga _i$  sufficiently close to $a_i$.
Then the horizontal trajectory emanating from $a_k$ passes  close to $a_j$, because the direction of the horizontal trajectory changes continuously with $b$ unless the horizontal trajectory hits a singular point.
Assume that the horizontal trajectory emanating from $a_k$ does not pass through the point $a_j$.
Let $\theta _{j' j}$, $\phi _{j}$, $\theta $ be the arguments of $a_{j'}-a_{j} $, $b- a_{j}$, $z-a_j$ resp.
If $z$ is sufficiently close to $a_j$, then the direction of the horizontal trajectory is approximately given by $(\theta +\theta _{ij} +\theta _{kj} -\phi _j)/2$, and it follows from elementary affine geometry that the horizontal trajectories are approximately parabolas whose focus is $a_j$ and the angle of the axis of the symmetry is $\theta _{ij} +\theta _{kj} -\phi _j$, see Fig. 7. 
\begin{center}
\begin{picture}(240,140)(0,0)
\multiput(10,50)(-15,-10){3}{\line(-3,-2){10}}
\put (10,20){\vector(-4,1){30}}
\put (20,17) {vertical trajectory from $a_j$}
\put(10,50){\line(3,2){45}}
\put(10,50){\line(1,0){210}}
\put(10,50){\line(1,2){45}}
\qbezier(25,50)(25,55)(23,58)
\put(29,53){$\theta _{ij} \!+ \!\theta _{kj} \! -\! \phi _j$}
\put (80,85){\vector(-3,-1){30}}
\put (75,87) {horizontal trajectory from $a_j$}
\qbezier(140,70)(160,80)(180,76)
\qbezier(180,76)(210,70)(220,50)
\qbezier(140,70)(20,00)(-10,40)
\qbezier(-10,90)(-20,60)(-10,40)
\put(-19,35){$d$}
\put(1,56){$a_j$}
\put(215,40){$a_k$}
\put (165,65){\vector(-1,0){30}}
\put (170,62) {horizontal trajectory from $a_k$}
\put(00,00){Figure 7. Behavior of horizontal and vertical trajectories near $a_j$.}
\end{picture}
\end{center}
Hence the horizontal trajectory emanating from $a_k$ goes around the point $a_j$, and intersects  the vertical trajectory  emanating from $a_j$, see Fig. 7.  Denote their intersection point by $d$ and 
consider the integral
\begin{align}
& \left( \int _{a_j}^{d}  +\int _{d}^{a_k}  \right) \sqrt{\frac{b-t}{(t-a_1)(t-a_2)(t-a_3)}} dt  ,  \label{eq:intStokesantiStokes}
\end{align}
where the path from $a_j$ to $d$ is taken along the vertical trajectory from $a_j$ and the path from $d$ to $a_k$ is taken along the horizontal trajectory from $a_k$. 
By definition of the horizontal and  vertical  trajectories, the value of the integral from $d$ to $a_k$ is real and the one from $a_j$ to $d$ is pure imaginary.
Since the integration path does not hit a singular point, the imaginary part (resp. the real part) varies monotonely as the integration variable passes  the vertical  (resp. horizontal) trajectories.
Hence the imaginary part of the value of (\ref{eq:intStokesantiStokes}) is not zero, but this contradicts to the definition of $\Gamma _i$.

Therefore we obtain that if $b \in \Gamma _i$ and $b$ is sufficiently close to $a_i$, then $a_j$ and $a_k$ are connected by a smooth horizontal trajectory $\gamma$ which does not hit  the singular  point $z=b$. Let us move the point  $b$  along  $\Gamma _i$ away from $a_i$. 
Then $a_j$ and $a_k$ are still connected by a smooth horizontal trajectory as long  as the horizontal trajectory connecting them  does not hit  the singular point $z=b$. On the other hand, 
 if the horizontal trajectory  passes through $z=b$, then  $a_j, a_k$ and $b$ will be connected by a horizontal trajectory  and the integrals: 
 \begin{align}
& \int _{a_j}^{b} \sqrt{\frac{b-t}{(t-a_1)(t-a_2)(t-a_3)}} dt  , \quad \int _{a_i}^{a_k} \sqrt{\frac{b-t}{(t-a_1)(t-a_2)(t-a_3)}} dt\;  \label{eq:intStokesTP}
\end{align}
attain real values. Hence the point $b$ is also contained in $\Gamma _j$ and we conclude that  $b=b_0$.
Therefore, we have shown that  $a_j$ and $a_k$ are connected by a smooth horizontal trajectory  in case $b \in \Gamma _i$.

To prove  (iii) choose  $\tilde{b} \in \Gamma _i \setminus \{b_0 \}$, and let $l_{\tilde{b}}$ be the straight line which passes through $\tilde{b}$ and is parallel to the side $\overline{a_ja_k}$.
 Eq.(\ref{eq:invaffinetr}) implies that we can wlog assume that the points $a_j$, $a_k$ lie on the real axis, $a_j <a_k$ and the point $a_i$ lies in the upper half plane. 
Let $b \in l_{\tilde{b}}\cap \Delta_Q$ such that $\Re \tilde{b} < \Re b $.
Then   $b \in D_j$ by Lemma \ref{lem:lb}.
\begin{center}
\begin{picture}(400,110)(0,0)
\put(100,30){\line(1,0){180}}
\put(100,30){\line(3,2){135}}
\qbezier(235,120)(257,75)(280,30)
\qbezier(100,30)(167,53)(210,65)
\qbezier(280,30)(257,50)(210,65)
\qbezier(235,120)(220,95)(210,65)
\put(90,25){$a_j$}
\put(282,25){$a_k$}
\put(241,121){$a_i$}
\put(200,75){$\Gamma _i$}
\put(222,95){\circle*{3}}
\put(216,98){$\tilde{b}$}
\put(150,95){\line(1,0){135}}
\put(284,98){$l_{\tilde{b}}$}
\put(234,95){\circle*{3}}
\put(232,98){$b$}
\qbezier(100,30)(220,70)(280,30)
\put(230,70){$D_j$}
\put(178,75){$D_k$}
\thicklines
\qbezier(140,10)(220,60)(280,30)
\put(20,0){Figure 8. Why singular trajectories emanating from poles leave $\De_Q$}

\end{picture}
\end{center}
Let $z$ be a point such that $\Im z <\Im b $. By  comparing the angles of $b-z$ and $\tilde{b}-z$ one can easily conclude that  the horizontal trajectory  emanating from $a_k$ for the  quadratic differential $\Psi=\frac{b-z}{Q(z)}dz^2$  is located under the similar horizontal trajectory for   $\widetilde\Psi=\frac{\tilde b-z}{Q(z)}dz^2$, see Fig. 8. 
Since the horizontal trajectory   emanating from $a_k$ for $\widetilde \Psi$  hits the point $a_j$, one has that the horizontal trajectory emanating from $a_k$   for $\Psi$ must intersect the side $\overline{a_ja_k}$ and leave the triangle $\Delta_Q$. The intersection point of this trajectory with the side $\overline{a_ja_k}$ can not coincide with $a_j$ since in that case the integral (\ref{eq:ba123Rea}) will be  real which  contradicts  Lemma \ref{lem:lb}.
Let us vary   $b$ in $D_j$.
The horizontal trajectory  intersects the side $\overline{a_ja_k}$ as long as the intersection point does not coincide with $a_j$ or $a_k$, or the horizontal trajectory  hits the singular point $b$.
Hence  (iii) is settled  for any  $b \in D_i$.
\end{proof}

\begin{rem} One can show that part (i) of the latter Proposition holds independently of whether $\Psi$ is Strebel or not but we do not need this fact. 
\end{rem} 

\begin{corollary} 
If $b \in \Gamma _i \setminus \{  b_0 \} $ $(i=1,2,3)$, then $a_i$ and $b$ are connected by a smooth horizontal trajectory. Another horizontal trajectory  emanating from $b$ surrounds the trajectory  connecting $a_j$ and $a_k$ and returns to $b$, see Fig. 9.
\end{corollary}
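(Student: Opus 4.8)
The plan is to read off the global trajectory structure from part (ii) of the preceding Proposition together with the compactness of $K_\Psi$, and then to settle the ``surrounding'' claim by a topological argument based on the behaviour of $\Psi$ at $\infty$.

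First I would collect the data already available. By part (ii) the two poles $a_j,a_k$ are joined by a singular horizontal trajectory $\gamma$ which avoids $b$, while the local count established above gives exactly one prong at each simple pole $a_1,a_2,a_3$ and exactly three prongs at the simple zero $b$. Since $\Psi$ is Strebel, Lemma~\ref{lemma2} shows that $K_\Psi$ is compact, so every singular trajectory is bounded and each of its ends tends to a singular point of $\Psi$; the singular points are precisely $a_1,a_2,a_3,b$, together with the double pole $\infty$, which no singular trajectory can reach. The single prong at $a_j$ and the single prong at $a_k$ are already consumed by $\gamma$.

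Next I would pin down the remaining prongs by pure counting. The unique prong issuing from $a_i$ must terminate at a singular point; it cannot return to $a_i$ (only one prong is available there) and it cannot reach $a_j$ or $a_k$ (whose prongs already form $\gamma$), so it must end at $b$. This produces the smooth trajectory $\gamma_1$ joining $a_i$ and $b$ and proves the first assertion. Of the three prongs at $b$ one is now $\gamma_1$; the other two must also end at singular points, and since $a_i,a_j,a_k$ have no free prongs left they are forced back to $b$. As exactly two prongs remain at $b$, they are the two ends of a single trajectory $\gamma_2$ leaving $b$ and returning to $b$, i.e. a loop based at $b$ --- this is the ``other horizontal trajectory'' of the statement.

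The main obstacle is the last, genuinely topological claim that $\gamma_2$ encloses $\gamma$. I would argue as follows. The loop $\gamma_2$ is a Jordan curve on $\bCP^1$ and splits it into two open disks, with $\infty$ lying in one of them. Because $\deg(z-a_1)(z-a_2)(z-a_3)-\deg(b-z)=2$, the point $\infty$ is a double pole of $\Psi$ with negative leading coefficient, hence a centre; consequently the component of $\bCP^1\setminus K_\Psi$ containing $\infty$ is a circle domain, that is, a topological disk. Now that component is the disk bounded by $\gamma_2$ with the slit $\gamma_1$ removed (one endpoint of $\gamma_1$ sitting on $\gamma_2$), and with the slit $\gamma$ removed as well if $\gamma$ happened to lie on the $\infty$-side. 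Deleting the half-open arc $\gamma_1$ keeps the compactly supported Euler characteristic equal to $1$, whereas additionally deleting the closed arc $\gamma$ (both endpoints interior) would lower it to $0$ and turn the component into an annulus. Since that component is a disk, $\gamma$ cannot lie on the $\infty$-side; hence $\gamma$ lies in the other disk and $\gamma_2$ surrounds $\gamma$ together with $a_j,a_k$. The complementary component is then the bounded ring domain, whose two boundary circles are the outer curve $\gamma_2$ and the inner doubled slit $\gamma$, so that its closed trajectories encircle $\gamma$ exactly as drawn on Fig.~9. I expect this Euler-characteristic bookkeeping --- or, alternatively, the degeneration $b\to a_i$, in which $\Psi$ limits to $-\,((z-a_j)(z-a_k))^{-1}dz^2$ with the segment $\overline{a_ja_k}$ surrounded by closed ovals, one of which becomes $\gamma_2$ by continuity --- to be the only delicate point, the first two steps being mere prong-counting.
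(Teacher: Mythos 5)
Your argument is correct and is essentially the justification the paper leaves implicit: the Corollary is stated there without proof, as a direct consequence of part (ii) of the preceding Proposition together with the standard structure theory of Strebel differentials (compactness of $K_\Psi$, one prong at each simple pole, three at the simple zero $b$, and the circle domain around the double pole at $\infty$). Your prong-counting and the Euler-characteristic argument ruling out the arc $\overline{a_ja_k}$ lying on the $\infty$-side of the loop supply exactly the omitted details, and they do so correctly.
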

\begin{center}
\begin{picture}(260,130)(0,0)
\put(30,40){\line(1,0){180}}
\put(30,40){\line(3,2){135}}
\qbezier(165,130)(187,85)(210,40)
\put(155,95){\circle*{3}}
\put(151,84){$b$}
\qbezier(30,40)(150,80)(210,40)
\qbezier(165,130)(157,115)(155,95)
\qbezier(155,95)(80,80)(60,70)
\qbezier(60,70)(10,55)(15,30)
\qbezier(15,30)(20,15)(150,15)
\qbezier(150,15)(220,15)(230,40)
\qbezier(155,95)(230,65)(230,40)
\put(30,30){$a_j$}
\put(200,30){$a_k$}
\put(170,128){$a_i$}
\put(33,0){Figure 9. All singular trajectories for $b \in \Gamma _3$}
\end{picture}
\end{center}

\section{Proving Theorem~\ref{takemura}}\label{s:takemura}

For this proof we need Theorem~\ref{ShTa} below whose formulation uses the definition of  the following measures. Take as  above $Q(z)=(z-a_1)(z-a_2)(z-a_3)$. Choose one of three roots $a_i$ and shift the variable $z=z-a_i$. (Abusing our notation we use the same letter for the shifted variable.) Then in this new coordinate one has $Q(z)=z^3+v_iz^2+w_i,\; i=1,2,3$. Define the functions $\xi_i(\tau)=-v_i(1-\tau)^2, \; \tau\in [0,1]$ and $\psi_i(\tau)=-w_i(1-(1-\tau)^2)(1-\tau)^2, \; \tau\in [0,1]$. Let $\omega_i(\tau),\; i=1,2,3.$ be the arcsine measure supported on the interval $\left[\xi_i(\tau)-2\sqrt{\psi_i(\tau)}, \xi_i(\tau)+2\sqrt{\psi_i(\tau)}\right]$ in the complex plane. Finally define the measure $M_i,\;i=1,2,3$ by averaging 
$$M_i=\int_0^1\omega_i(\tau)d\tau.$$ 
Results of \cite {ShT}Ê claim that each $M_i$ is supported on an ellipse uniquely determined by the triple of roots  $a_1,a_2,a_3$ with the root $a_i$ playing a special role. Moreover all these three measures have the property that their Cauchy transforms satisfy outside their respective supports one and the same linear inhomogeneous differential equation:
\begin{equation}\label{eq:CT}
Q(z)\C^{\prime\prime}(z)+Q'(z)\C'(z)+\frac{Q^{\prime\prime}(z)}{8}\C(z)+\frac{Q^{\prime\prime\prime}(z)}{24}=0.
\end{equation}

Recall that $\mu_n$ denotes the root-couting measure of the spectral polynomial $Sp_n(\lambda)$, see Introduction. The weak limit of the sequence $\{\mu_n\}$  (if it exists) is denoted by $\mu$. In these terms the main result of \cite {ShT} Êis as follows. 

\begin{theorem}\label{ShTa} If in the above notation the measure $\mu$ exists then each of the measures 
$M_i,\; i\in \{1, 2, 3\}$ have $\mu$ as its inverse balayage, i.e. $\mu$ and $M_i$ have the same logarithmic potential near infinity and the support of $\mu$ is contained inside the support of $M_i$. 
\end{theorem}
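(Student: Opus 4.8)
The plan is to compare the Cauchy transform $\C_\mu$ with the common Cauchy transform of the measures $M_i$ by means of the differential equation \eqref{eq:CT}. To set things up, fix $i$, shift the variable so that $a_i=0$, and expand a degree-$n$ Stieltjes polynomial as $S(z)=\sum_{k=0}^n c_k z^k$. Substituting into \eqref{eq:Heun} and collecting powers of $z$ turns the requirement $\deg S=n$ into a three-term recurrence for the coefficients $(c_k)$; consequently the admissible (shifted) constant terms of $V$ are the eigenvalues of an $(n+1)\times(n+1)$ tridiagonal matrix $T_n^{(i)}$, and, up to a normalization fixed by $n$, the Van Vleck parameters $t_{n,j}$ are the eigenvalues of $\tfrac{1}{n^2}T_n^{(i)}$. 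Thus $Sp_n(\lambda)$ is proportional to $\det\big(\lambda I-\tfrac{1}{n^2}T_n^{(i)}\big)$ and $\mu_n$ is its eigenvalue-counting measure.

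Writing $\tau=k/n$ and letting $n\to\infty$, the diagonal entries of $\tfrac{1}{n^2}T_n^{(i)}$ tend to $\xi_i(\tau)$ while the product of the two off-diagonal entries tends to $\psi_i(\tau)$, for exactly the functions $\xi_i,\psi_i$ defined before Theorem~\ref{ShTa}. This is the computation underlying the generalization in \cite{ShT} of the Kuijlaars--Van Assche technique of \cite{KvA}, and it is what produces the arcsine fibre measures $\omega_i(\tau)$ and their average $M_i$. I take as given from \cite{ShT} that each $M_i$ lives on an ellipse and that $\C_{M_i}$ solves \eqref{eq:CT} off its support; the crucial structural feature is that \eqref{eq:CT} does \emph{not} depend on $i$, which is precisely why all three $M_i$ are forced to share one and the same inverse balayage.

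The heart of the proof is to show that, whenever the limit $\mu$ exists, $\C_\mu$ satisfies the \emph{same} equation \eqref{eq:CT} outside $\operatorname{supp}\mu$. I would establish this by passing to the limit in the resolvent trace $\tfrac{1}{n+1}Sp_n'(\lambda)/Sp_n(\lambda)=\tfrac{1}{n+1}\operatorname{tr}\big(\lambda I-\tfrac{1}{n^2}T_n^{(i)}\big)^{-1}$, which tends to $\C_\mu(\lambda)$ in $L^1_{loc}$ over the unbounded component; the three-term recurrence for the characteristic polynomials of the principal minors then degenerates, in the limit, into the second-order relation \eqref{eq:CT} for $\C_\mu$. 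Justifying this passage to the limit is the step I expect to be the main obstacle: the matrices $T_n^{(i)}$ are genuinely non-Hermitian, since the off-diagonal products $\psi_i$ are complex, so $\mu$ is \emph{not} the symbol average $M_i$ and the determinant must be controlled by an adiabatic (rather than a spectral) analysis of the slowly varying transfer matrices.

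Granting \eqref{eq:CT} for $\C_\mu$, the rest is short. Near $\infty$ the homogeneous part of \eqref{eq:CT} reduces to the Euler equation $z^2\C''+3z\C'+\tfrac34\C=0$, whose indicial exponents are $-\tfrac12$ and $-\tfrac32$; both homogeneous solutions are therefore multivalued with monodromy $-1$ about $\infty$, whereas the inhomogeneous equation admits a unique single-valued formal solution $\C=z^{-1}+O(z^{-2})$. Since $\C_\mu$ and $\C_{M_i}$ are both single-valued and decay like $1/z$, each must equal this particular solution, so $\C_\mu\equiv\C_{M_i}$ throughout the unbounded component of $\bC\setminus\operatorname{supp}M_i$ and hence $u_\mu=u_{M_i}$ there. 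Finally, Theorem~\ref{th:my}(2) confines $\operatorname{supp}\mu$ to an arbitrarily small neighbourhood of $\Delta_Q$, which is enclosed by the ellipse $\operatorname{supp}M_i$; thus $\operatorname{supp}\mu$ lies in the region bounded by $\operatorname{supp}M_i$. Equality of the potentials outside $\operatorname{supp}M_i$ together with this containment is exactly the statement that $\mu$ is the inverse balayage of each $M_i$, which is Theorem~\ref{ShTa}.
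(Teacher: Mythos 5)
First, a point of reference: the paper does not prove this statement at all --- it is quoted verbatim as ``the main result of \cite{ShT}'', and the only thing the present paper adds is the remark that the proof in \cite{ShT} (which generalizes Theorem 1.4 of \cite{KvA}) applies word-for-word to convergent subsequences of $\{\mu_n\}$. So your proposal is being measured against the argument of the prequel paper rather than anything printed here. Your setup is correct and matches that source: after shifting $a_i$ to the origin the coefficient recursion is three-term, the free constant of $V$ is an eigenvalue of a tridiagonal matrix, and the normalized diagonal and off-diagonal-product symbols do converge to $\xi_i(\tau)$ and $\psi_i(\tau)$ (your parametrization differs from the paper's by $\tau\mapsto 1-\tau$, which is harmless). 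The concluding steps --- the indicial exponents $-\tfrac12,-\tfrac32$ at infinity, uniqueness of the single-valued solution of \eqref{eq:CT} with $1/z$ asymptotics, and the containment of $\Delta_Q$ in the elliptic region $\operatorname{supp}M_i$ --- are also fine, granted the inputs.

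The genuine gap is exactly the step you flag as ``the main obstacle'': that $\C_\mu$ satisfies \eqref{eq:CT} outside $\operatorname{supp}\mu$. In the logical architecture of \cite{ShT} and of the present paper this fact is a \emph{consequence} of Theorem~\ref{ShTa} (one first proves $u_\mu=u_{M_i}$ near infinity by the Kuijlaars--Van Assche transfer-matrix/WKB analysis of the characteristic-polynomial recursion $D_k=(\lambda-a_k)D_{k-1}-b_kc_kD_{k-2}$, which yields $\C_\mu(\lambda)=\int_0^1\bigl((\lambda-\xi_i(\tau))^2-4\psi_i(\tau)\bigr)^{-1/2}d\tau=\C_{M_i}(\lambda)$ for large $|\lambda|$ directly; the ODE \eqref{eq:CT} is then verified for this explicit integral by separate manipulation). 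Your plan inverts this order, and the mechanism you offer --- that the three-term recurrence in the index $k$ ``degenerates into'' a second-order ODE in the spectral variable $\lambda$ --- is not a proof and is not how either equation arises: the recurrence and the ODE live in different variables, and no passage from one to the other is indicated. Since every subsequent step of your argument is conditional on $\C_\mu$ solving \eqref{eq:CT}, the proposal as written does not establish the theorem; to repair it you should either carry out the KvA-style potential-theoretic estimate of $\tfrac1n\log|\det(\lambda I-\tfrac{1}{n^2}T_n^{(i)})|$ for large $|\lambda|$ (which proves the potential identity without ever mentioning \eqref{eq:CT}), or give an independent derivation of \eqref{eq:CT} for $\C_\mu$, neither of which is sketched in enough detail to assess.
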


In fact the proof of Theorem~\ref{ShTa} in \cite {ShT} (as well as of the original Theorem 1.4 of \cite {KvA}) Ê extends without changing a single word in it to converging {\bf subsequences} of the original sequence  $\{\mu_n\}$. 

Thus any two converging subsequences of measures from $\{\mu_n\}$ have the same limiting logarithmic  potential near infinity. But notice additionally that the support of these measure must necessarily belong to $\Gamma_Q$ which is the main result of \S~4 above. Thus  the limiting measures have the same logarithmic potential in the complement $\bC\setminus \Ga_Q$. But then they should coincide since both of them are $\bar z$-derivative of the same function.  

Let us now prove  Theorem~\ref{takemura}. We show first that the whole sequence $\{\mu_n\}$  of root-counting measures for the whole sequence  of $\{Sp_n(\lambda)\}$ converges. This argument resembles that of at the very end of \S~3. Indeed, by part (2) of Theorem~\ref{th:my} for any $\epsilon>0$ $\exists N_\epsilon$ such that  for all $n\ge N_\epsilon$ all roots of all $Sp_n(\lambda)$ lie in the $\epsilon$-neighborhood of $\Gamma_Q$. Therefore,  by compactness the sequence $\{\mu_n\}$  contains a lot of (weakly) converging subsequences. Theorem~\ref{ShTa} and the argument following it  show  that any two of such converging subsequences have the same (weak) limiting measure which we denote by $\mu$. Let us show that then the whole sequence $\{\mu_n\}$ is converging to the same $\mu$. Indeed, assume 
that  $\{\mu_n\}$ is not converging to  $\mu$. Then we can find a subsequence $N'$ of the natural numbers such that $\mu_n$ stays away from a fixed neighbourhood of $\mu$ in the weak topology for all $n\in N'$. Again by compactness we can fins a subsequence $N$ of $N'$ such that the limit of $\{\mu_n\}$ over $N$ exists and is equal to $\mu$ by the above argument. But this contradicts to the assumptions that $\mu_n$ stays away from $\mu$ for all $n\in N'$. 

To show that the whole $\Gamma_Q$ must be the support of the limiting measure $\mu$ notice that the Cauchy transform of $\mu$ satisfies \eqref{eq:CT} whose only singularities are $a_1,a_2,a_3$ and $\infty$. One can check that the unique  solution of \eqref{eq:CT} with the asymptotics $\frac{1}{z}$ near infinity has a nontrivial monodromy at each singularity 
 $a_1,a_2,a_3$. Notice that the Cauchy transform of $\mu$ coincides with this solution extended from infinity to the whole 
 $\bC\setminus \Ga_Q$. But then the density of $\mu$ which is the $\bar z$-derivative of this solution  restricted to  $\bC\setminus \Ga_Q$ can not vanish at any generic point of $\Ga_Q$, i.e. outside of $b_0$. \qed

\section {Final remarks}

\noindent 
{\bf 1.} A generalized Lam\'e equation has the  form:
$$Q(z)S^{\prime\prime}(z)+P(z)S'(z)+V(z)S(z)=0,$$
where $\deg Q(z)=l\ge 2$, $\deg P(z)\le l-1$, and $\deg V(z)\le l-2$. Fixing $Q(z)$ and $P(z)$ one looks for $V(z)$ of degree at most $l-2$ such that the latter equation has a polynomial solution $S(z)$ of a given degree $n$, see many details in e.g. \cite {Sh}. Typically for a given Lam\'e equation and a given positive integer $n$ there exist $\binom {n+l-2}{n}$ such Van Vleck polynomials of degree $l-2$. Moreover,  they are exactly $\binom {n+l-2}{n}$ many for any given Lam\'e equation if they are counted with appropriate multiplicities and $n$ is sufficiently large. Interesting computer experiments can be found in e.g. \cite{ABM} Êand were also perform by the present authors. These experiments  lead us to the following conclusion.   Let $\mathcal V_n$ be the set (more exactly, a divisor) of all normalized Van Vleck polynomials (i.e monic polynomials proportional to Van Vleck polynomials) such that each of them has a Stieltjes polynomial $S(z)$ of degree exactly $n$ counted with their multiplicities. In fact, $\mathcal V_n$ can be interpreted as a finite probability measure in the space $Pol_{l-2}$ of all monic polynomials of degree $l-2$ if we assign to each polynomial in $\mathcal V_n$  a positive Dirac measure equal  to its multiplicity divided by $\binom {n+l-2}{n}$ .  The following is a (weaker) version of conjecture of the second author settled above. 

\begin{conj} The sequence $\{\mathcal V_n\}$ of finite measures converges to a probability measure 
$\mathcal V_Q$ in $Pol_{l-2}$ which depends  only on the leading coefficient $Q(z)$.
\end{conj}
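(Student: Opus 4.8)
The plan is to run the same three-step scheme that proves Theorem~\ref{takemura}(i), which is precisely the case $l=3$: there $\deg V\le 1$, so $Pol_{l-2}\cong\bC$ via the unique root of $V$, the measure $\mathcal V_n$ is the root-counting measure $\mu_n$ of the spectral polynomial, and the assertion is exactly $\mu_n\to\mu$. The three ingredients to lift to the higher-dimensional space $Pol_{l-2}\cong\bC^{l-2}$ are compactness, a uniqueness mechanism for subsequential limits, and the subsequence bootstrap already used at the ends of \S\ref{sec:proofs} and \S\ref{s:takemura}. First I would establish tightness: the analogue of Theorem~\ref{th:my} for arbitrary $l$, available in \cite{Sh}, gives for large $n$ exactly $\binom{n+l-2}{n}$ Van Vleck polynomials counted with multiplicity and confines every root of every such $V$ to the $\epsilon$-neighbourhood of $\Delta_Q$. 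Bounding the $l-2$ roots of each normalized $V$ bounds its coefficients, so all $\mathcal V_n$ are supported in one compact $\mathcal K\subset Pol_{l-2}$; by Prokhorov the sequence has weakly convergent subsequences, and by the bootstrap it suffices to show that any two subsequential limits coincide.

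Second, I would replace the scalar potential-theoretic uniqueness argument of the cubic case by the Strebel/quadratic-differential picture. In the case $l=3$, Theorem~\ref{ShTa} fixed the logarithmic potential of any subsequential limit near infinity (inverse balayage against the explicit $M_i$), \S 4 confined the support to $\Gamma_Q$, and a measure on $\bC$ is determined by its potential off the support. No single scalar transform plays this role on $\bC^{l-2}$. Instead, the general-$l$ analogue of Corollary~\ref{cor:main} (a special case of the results of \cite{HSh}) should say that every subsequential-limit Van Vleck polynomial $\widetilde V$ makes $\Phi=-\frac{\widetilde V(z)}{Q(z)}dz^2$ Strebel; hence $\mathrm{supp}\,\mathcal V_Q$ lies in the Strebel locus of $Pol_{l-2}$, stratified by the combinatorial type of the critical graph $K_\Phi$. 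On each stratum the admissible $\widetilde V$ are cut out by real conditions on the periods $\int\sqrt{\widetilde V/Q}\,dz$ over the relevant cycles, the higher-$l$ form of \eqref{eq:ba123Rea}, which furnish real period coordinates.

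Third, I would make the limit explicit and prove convergence by equidistribution. The identity $\binom{n+l-2}{n}=\binom{n+l-2}{l-2}$ counts the lattice points $(k_1,\dots,k_{l-1})$ with $k_i\ge 0$ and $\sum k_i=n$, which is the Bohr--Sommerfeld signature: each Van Vleck polynomial of degree $n$ should be the Strebel datum whose cycle periods are quantized in units $\sim 1/n$, i.e. whose normalized ``action'' variables are $k_i/n$. The limiting measure $\mathcal V_Q$ would then be the push-forward, under the period (action-to-$\widetilde V$) map, of the weak limit of the normalized counting measure of these lattice points on the scaled simplex, namely of Lebesgue measure on the standard simplex. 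Convergence of $\mathcal V_n$ to this push-forward is exactly equidistribution of the quantized points along the Strebel locus together with continuity of the period map. That $\mathcal V_Q$ depends on $Q$ alone I would read off from the $1/n$ scaling already used in Lemma~\ref{lm:CauEq}: the lower-order term $P S_n'/\big(n(n-1)S_n\big)$ vanishes in the limit, so the leading asymptotics fixing the periods see only $Q$.

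The hard part is unquestionably the equidistribution/uniqueness step for $l>3$. When $l=3$ the Strebel locus is the single real curve $\Gamma_Q$, the combinatorial type is rigid, and the period map reduces to one monotone real parameter (arclength), which is why the scalar argument of \S\ref{s:takemura} closes. For general $l$ one must simultaneously control a positive-dimensional family of Stokes-graph topologies (the strata) and prove that the quantized periods fill out each stratum uniformly as $n\to\infty$; neither the inverse-balayage technique of \cite{ShT,KvA} nor the quadratic-differential lemmas of \S\ref{sec:2} deliver this, and genuinely new input, presumably of the variational/critical-measure type of Mart\'inez-Finkelshtein and Rakhmanov \cite{MFR1,MFR2}, would be required. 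This is precisely why the statement is recorded as a conjecture.
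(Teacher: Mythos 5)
This statement is left as an open conjecture in the paper --- there is no proof of it anywhere in the text --- so the only question is whether your proposal closes it. It does not, and to your credit you say so yourself in the final paragraph. The genuine gap is the uniqueness-of-subsequential-limits step, and it is worth being precise about why none of the paper's machinery supplies it. For $l=3$ the paper does \emph{not} identify the limit by quantized periods or equidistribution at all: it pins down every subsequential limit by the inverse-balayage mechanism of Theorem~\ref{ShTa}, i.e.\ by the three explicitly constructed measures $M_i$ of \cite{ShT} (built from the tridiagonal/recurrence structure as in \cite{KvA}), whose common logarithmic potential near infinity forces any two subsequential limits to agree off $\Gamma_Q$ and hence, as $\bar z$-derivatives of the same function, to coincide. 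That argument is irreducibly scalar: it lives on $\bC$, where $Pol_{l-2}$ is one-dimensional and $\mathcal V_n$ \emph{is} a root-counting measure with a Cauchy transform satisfying the explicit ODE \eqref{eq:CT}. For $l>3$ you correctly observe that no such transform exists on $\bC^{l-2}$, but what you put in its place --- a Bohr--Sommerfeld quantization of periods indexed by lattice points on the simplex, plus equidistribution along the Strebel locus --- is nowhere established, not even in the $l=3$ case, where the count $n+1$ is obtained from \cite{Sh} by algebraic degree arguments rather than from any quantization condition. Asserting that the $\binom{n+l-2}{n}$ Van Vleck polynomials are parametrized by quantized periods with error $o(1/n)$, and that the period map is a measure-theoretic isomorphism onto the simplex stratum by stratum, is essentially a restatement of the conjecture, not a reduction of it.

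Your first and third structural ingredients are sound: tightness via the higher-$l$ analogue of Theorem~\ref{th:my} from \cite{Sh} (roots confined near $\Delta_Q$, hence coefficients of normalized $V$ bounded) and the standard subsequence bootstrap used at the ends of \S\ref{sec:proofs} and \S\ref{s:takemura} both transfer verbatim. The Strebel constraint on subsequential limits (the higher-$l$ version of Corollary~\ref{cor:main}, via \cite{HSh}) is also a plausible and probably provable necessary condition, giving $\mathrm{supp}\,\mathcal V_Q$ inside the Strebel locus --- the analogue of the support statement in Theorem~\ref{takemura}(ii). But a necessary condition on the support does not determine the measure, and the paper's own route from ``support in $\Gamma_Q$'' to ``unique limit'' goes entirely through Theorem~\ref{ShTa}, for which you offer no higher-dimensional substitute. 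So the proposal is a reasonable research program consistent with the paper's framework, but it contains no proof of the conjectured statement, and the missing step is exactly the one the authors flag by calling it a conjecture and by pointing to the variational methods of \cite{MFR1}, \cite{MFR2} as the likely source of new input.
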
Ê
 
\noindent
{\bf 2.} Similar set-up was developed in \cite {Sh}Ê for linear differential operators of order exceeding $2$ of the form $\frak q=\sum_{i=1}^kQ_i(z)\frac{d^i}{dz^i}$ where $\deg Q_i(z)\le i$ and $\deg Q_k(z)=k$. This topic was continued in \cite {HSh} where it is shown that a very natural analog of the main object of the present paper, i.e. the quadratic differentials $\Psi=-\frac{\tilde V(z)}{Q(z)}dz^2$ appears for operators of higher order as well. It also has almost all closed trajectories which are continuous but only piecewise smooth, in general. One needs to develop a notion of a Strebel differential for order $ >2$ which (to the best of our knowledge) is a completely open problem at the moment. So we pose our question in minimal possible generality. 

\begin{pr} How to define a notion of a  rational cubic Strebel differential 
$$\Psi=\frac{U_1(z)}{U_2(z)}dz^3,$$
where $\deg U_1(z)=1$ and $\deg U_2(z)=4$.
\end{pr}Ê

\end{document}